\newtheorem{theorem}{Theorem}
\newtheorem{definition}{Definition}
\newtheorem{lemma}{Lemma}
\newtheorem{proposition}{Proposition}
\newtheorem{corollary}{Corollary}
\newcommand{\thm}[1]{\hyperref[thm:#1]{Theorem~\ref*{thm:#1}}}
\newcommand{\cor}[1]{\hyperref[cor:#1]{Corollary~\ref*{cor:#1}}}
\newcommand{\defn}[1]{\hyperref[defn:#1]{Definition~\ref*{defn:#1}}}
\newcommand{\lem}[1]{\hyperref[lem:#1]{Lemma~\ref*{lem:#1}}}
\newcommand{\prop}[1]{\hyperref[prop:#1]{Proposition~\ref*{prop:#1}}}
\newcommand{\assum}[1]{\hyperref[assum:#1]{Assumption~\ref*{assum:#1}}}
\newcommand{\fig}[1]{\hyperref[fig:#1]{Figure~\ref*{fig:#1}}}
\newcommand{\tab}[1]{\hyperref[tab:#1]{Table~\ref*{tab:#1}}}
\newcommand{\algo}[1]{\hyperref[algo:#1]{Algorithm~\ref*{algo:#1}}}
\renewcommand{\sec}[1]{\hyperref[sec:#1]{Section~\ref*{sec:#1}}}
\newcommand{\append}[1]{\hyperref[append:#1]{Appendix~\ref*{append:#1}}}
\newcommand{\fac}[1]{\hyperref[fac:#1]{Fact~\ref*{fac:#1}}}
\newcommand{\lin}[1]{\hyperref[lin:#1]{Line~\ref*{lin:#1}}}
\def\>{\rangle}
\def\<{\langle}
\newcommand{\R}{\mathbb{R}}
\newcommand{\E}{\mathbb{E}}
\DeclareMathOperator{\poly}{poly}
\def\Tr{\operatorname{Tr}}\def\:{\hbox{\bf:}}
\newcommand{\tO}{\tilde{O}}
\newcommand{\cS}{\mathcal{S}}
\newcommand{\cV}{\mathcal{V}}
\newcommand{\cC}{\mathcal{C}}
\newcommand{\oH}{\overline{H}}
\newcommand{\oh}{\overline{h}}
\newcommand{\oW}{\overline{W}}
\newcommand{\blambda}{{\bm{\Lambda}}}
\newcommand{\bi}{{\bm{i}}}
\newcommand{\bj}{{\bm{j}}}
\newcommand{\plg}{\Pi_{>\Lambda}}
\newcommand{\pll}{\Pi_{\leq\Lambda}}
\newcommand{\plpg}{\Pi_{>\Lambda'}}
\newcommand{\plpl}{\Pi_{\leq\Lambda'}}
\newcommand{\pdl}{\Pi_{\leq\Delta}}
\newcommand{\pdpl}{\Pi_{\leq\Delta'}}
\newcommand{\pl}[1]{\Pi_{\leq\Lambda_{#1}}}
\let\oldnl\nl
\newcommand{\nonl}{\renewcommand{\nl}{\let\nl\oldnl}}
\renewcommand{\thefootnote}{\fnsymbol{footnote}}
\begin{document}

\title{Complexity of Digital Quantum Simulation in the Low-Energy Subspace: Applications and a Lower Bound}

\author{Weiyuan Gong$^{*}$}
\affiliation{John A. Paulson School of Engineering and Applied Sciences, Harvard University}
\affiliation{Institute for Interdisciplinary
Information Sciences, Tsinghua University}

\author{Shuo Zhou$^{*}$}
\affiliation{Center on Frontiers of Computing Studies, School of Computer Science, Peking University}
\affiliation{Yuanpei College, Peking University}

\author{Tongyang Li$^{\dagger}$}
\affiliation{Center on Frontiers of Computing Studies, School of Computer Science, Peking University}

\date{}
\maketitle

\begin{abstract}
Digital quantum simulation has broad applications in approximating unitary evolution of Hamiltonians. In practice, many simulation tasks for quantum systems focus on quantum states in the low-energy subspace instead of the entire Hilbert space. In this paper, we systematically investigate the complexity of digital quantum simulation based on product formulas in the low-energy subspace. We show that the simulation error depends on the effective low-energy norm of the Hamiltonian for a variety of digital quantum simulation algorithms and quantum systems, allowing improvements over the previous complexities for full unitary simulations even for imperfect state preparations {due to thermalization}. In particular, for simulating spin models in the low-energy subspace, we prove that randomized product formulas such as qDRIFT and random permutation require smaller Trotter numbers. Such improvement also persists in symmetry-protected digital quantum simulations. We prove a similar improvement in simulating the dynamics of power-law quantum interactions. We also provide a query lower bound for general digital quantum simulations in the low-energy subspace.\addtocounter{footnote}{1}\footnotetext[\value{footnote}]{Equal contribution.}\addtocounter{footnote}{1}\footnotetext[\value{footnote}]{Corresponding author. Email: \href{mailto:tongyangli@pku.edu.cn}{tongyangli@pku.edu.cn}}
\end{abstract}


\section{Introduction}\label{sec:Intro}

Quantum computers possess the potential to simulate the dynamics of given quantum systems more efficiently compared to their classical counterparts~\cite{georgescu2014quantum,feynman1982simulating,kim2023evidence}. A main approach for simulating Hamiltonian evolution is the digital quantum simulation, which maps the quantum systems into qubits and approximates the evolution using digitalized quantum gates. The most standard digitalization approaches are product formulas, including the Suzuki-Trotter formulas~\cite{suzuki1985decomposition,suzuki1991general,suzuki1990fractal,huyghebaert1990product,childs2019nearly,childs2004,berry2007efficient}, which decompose the target Hamiltonian into a sum of non-commuting terms and perform a product of exponentials of each individual term. Alternative principles have also led to advanced digital quantum simulation approaches such as linear combinations of unitaries~\cite{childs2012hamiltonian,zhuk2023trotter}, qubitization~\cite{low2019hamiltonian}, truncated Taylor series~\cite{berry2015simulating}, quantum signal processing~\cite{low2017optimal,martyn2021grand}, etc. Nevertheless, product formulas are still the most popular and promising approaches due to their relatively simple implementations, especially on near-term devices~\cite{preskill2018quantum,childs2018toward}.

Recent works provided refined analysis for standard product formulas~\cite{childs2021theory} and proposed improved digital quantum simulation algorithms including randomized compiling~\cite{campbell2019random,chen2021concentration,chen2024average}, random permutation~\cite{childs2019faster}, symmetry transformation~\cite{tran2021faster}, and multiproduct formula~\cite{low2019well,zhuk2023trotter,gong2023improved} based on product formulas. While most of these simulation algorithms are important and efficient for approximating Hamiltonian evolution, the resource requirement depends heavily on the norm of (the terms of) the Hamiltonian. However, a significant class of simulation tasks in quantum physics do not explore high-energy states, suggesting a potential to improve the resource demand under physically relevant assumptions on energy scales. As pointed out by Ref.~\cite{csahinouglu2021hamiltonian}, standard product formulas require smaller complexity to simulate the dynamics of $k$-local spin models when the states are restricted within a low-energy subspace. Hitherto, a general and complete theoretical framework for digital quantum simulations in the low-energy subspace still requires further exploration. {In addition, Ref.~\cite{PhysRevA.107.L040201} investigated eigenvector-dependent Trotter error which is closely related to simulating evolutions with quantum states chosen from a restricted subspace of the full Hilbert space, which is further explored by the concurrent work~\cite{burgarth2023strong}. Ref.~\cite{Yi2022Spectral} studied quantum simulation with low-dimensional initial state following spectral analysis.}

To this end, we systematically investigate the Hamiltonian simulation problem in the low-energy subspace. We analyze the Trotter number and gate complexity of randomized and symmetry-protected digital quantum simulation and prove the significant improvements compared to the best-known results for simulating states chosen from the full Hilbert space. We show that similar improvement can be obtained for quantum models such as the power-law model. In general, we provide a lower bound for Hamiltonian simulation in the low-energy subspace with logarithmic dependence on $\Delta$ and linear dependence on $t$. We further prove that the improvement above persists even when the quantum states are imperfectly prepared {due to thermalization}. Our results pave the way to more practical and efficient digital quantum simulations, which is an essential step toward the implementation of quantum simulation. {After we posted the original version of our work, Ref.~\cite{hejazi2024better} proposed improved bounds of low-energy simulation.}

\subsection{Our results}\label{sec:Results}
We begin with some basic concepts for digital quantum simulations. Given a target Hamiltonian $H=\sum_{l=1}^L H_l$ consisting of $L$ terms, we consider simulating an $n$-qubit quantum state $\rho$ undergoing a unitary evolution $V(t)=e^{-iHt}$ for some time $t$ using a unitary $U$. To achieve this goal, we split $t$ into $r$ (referred to as the Trotter number or the step complexity) identical Trotter steps such that the step size satisfies $\delta=t/r\ll 1$, and perform a product formula $W({\delta})=e^{-i\delta_q H_{l_q}}\cdots e^{-i\delta_1 H_{l_1}}$ in each time step, where $l_1,\ldots,l_q$ are chosen from $1,\ldots,L$. A standard instance of product formula is the Lie-Trotter formula
\begin{align*}
U(t)=S_1(\delta)^r=\left(\prod_{l=1}^L\exp(-iH_l\delta)\right)^r.
\end{align*}

Suzuki systematically extended the Lie-Trotter formula to a family of $p$-th order Suzuki-Trotter formulas defined recursively by
\begin{align*}
S_2(\delta)&=\prod_{l=1}^L\exp\Bigl(-iH_l \frac{\delta}{2}\Bigr)\prod_{l=L}^1\exp\Bigl(-iH_l\frac{\delta}{2}\Bigr),\\
S_p(\delta)&=S_{p-2}(u_p\delta)^2S_{p-2}((1-4u_p)\delta)S_{p-2}(u_p\delta)^2,
\end{align*}
where $u_p=(4-4^{1/(p-1)})^{-1}$~\cite{suzuki1985decomposition,suzuki1990fractal,suzuki1991general}. For large enough $r$, we have $V(t)\approx U(t)=S_{p}(\delta)^r$.

The main challenge for implementing product formulas is to decide a proper $r$ to guarantee that the simulation error is bounded below the error threshold $\epsilon$. In the general case, the simulation error for a unitary simulation $U$ is given by {$U\rho U^\dagger-V\rho V^\dagger$}. For technical simplicity, we consider the spectral norm $\norm{\cdot}$ as the error metric in this paper. As $\norm{{U\rho U^\dagger-V\rho V^\dagger}}\leq{2}\norm{U-V}$, we consider the spectral distance between the unitaries $U$ and $V$ in the following analysis. {We remark that the trace norm is also usually employed to measure the distance in density matrices. To upper bound the distance between the output states of two unitary evolutions $U\rho U^\dagger$ and $V\rho V^\dagger$, the diamond norm
\begin{align*}
\norm{\mathcal{U}-\mathcal{V}}_\diamond\coloneqq\max_{\ket{\psi}}\norm{U\ket{\psi}\bra{\psi}U^\dagger-V\ket{\psi}\bra{\psi}V^\dagger}_1,  
\end{align*}
which is defined as the maximal trace distance between the output states of two unitary evolutions $\mathcal{U}(\cdot)=U\cdot U^\dagger$ and $\mathcal{V}(\cdot)=V\cdot V^\dagger$ given any identical pure state as input. It is further proved that the diamond norm between two unitary channels can be bounded by (see e.g. Lemma 3.4,~\cite{chen2021concentration})
\begin{align*}
\norm{\mathcal{U}-\mathcal{V}}_\diamond\leq 2\norm{U-V}.
\end{align*}
Therefore, it is enough to consider the spectral norm $\norm{U-V}$ in the following.
}

Given a Trotter number $r$, the Lie-Trotter formula provides a
second-order simulation approximation error $O(\lambda_H^2 t^2/r)$, where $\lambda_H=\sum_{l=1}^L\norm{H_l}\leq L\lambda_h$ and $\lambda_h=\max_{l}\norm{H_l}$~\cite{childs2021theory}. Therefore, choosing the Trotter number $r_1=O(t^2\lambda_H^2/\epsilon)$ is sufficient to ensure the error below $\epsilon$. Similarly, choosing the Trotter number $r_p=O((\lambda_H t)^{1+1/p}/\epsilon^{1/p})$ is sufficient to ensure the error below $\epsilon$ for $p$-th order Lie-Suzuki-Trotter formulas.
\renewcommand{\thefootnote}{\arabic{footnote}}
\setcounter{footnote}{0}

It is sometimes possible to improve the complexity of product formulas with further information or assumptions in the form of Hamiltonians and states. In this paper, we consider digital quantum simulation in the low-energy subspace. Without loss of generality, we assume the Hamiltonian $H$ is composed of $L$ positive semi-definite terms $H_l\geq 0$.\footnote{Otherwise, we can consider simple shifts $H_l\to H_l+a_lI$ for all $H_l$'s for some positive $a_l$ such that $H_l+a_lI\geq 0$, which only adds an additional global phase to the evolution $V$.} For the simulation task, we only care about the low-energy quantum states that are below spectrum $\Delta\ll \norm{H}$. We denote the projector to this subspace as $\pdl$. The formal definition for the low-energy simulation is given as follows:
\begin{definition}\label{def:Prob}
Given a Hamiltonian $H$ with the above assumptions and a quantum state $\rho$ with energy lower than $\Delta\leq\norm{H}$, our goal is to find a simulation channel $\mathcal{U}(\cdot)$ such that the simulation error is below some threshold $\epsilon$, i.e., $\norm{\mathcal{U}(\rho)-{V\rho V^\dagger}}\leq\epsilon$, where $V=e^{-iHt}$ is the target evolution of $H$ with simulation time $t$.
\end{definition}
In the special case when $\mathcal{U}(\cdot)$ in \Cref{def:Prob} is a unitary channel for some unitary $U$, we only need to consider the norm $\norm{(U-V)\pdl}$ when estimating the upper bound for the simulation error between the simulation unitary $U$ and the target evolution $V$. A specific well-studied model is the spin Hamiltonian $H$ which has $k$-local interactions and strengths for each individual interaction term bounded below by $J$. Assuming each $H_l$ contains at most $M$ interaction terms and the number of interaction terms acting on any spin bounded by $d>0$, Ref.~\cite{csahinouglu2021hamiltonian} showed that choosing a Trotter number $r'=O((t\Delta)^{1+1/p}/\epsilon^{1/p}+(t\sqrt{n})^{1+1/(2p+1)}/\epsilon^{1/(2p+1)})$ for a $p$-th order Suzuki-Trotter formula suffices to achieve $\epsilon$ accuracy, where $n$ is the number of qubits. Compared to the simulation complexity for Suzuki-Trotter formulas for the general case simulation, we can obtain an improvement on $n$ dependence for $p=1$ as $\lambda_H=O(n)$ for $k$-local Hamiltonians. Ref.~\cite{csahinouglu2021hamiltonian} also showed that other improvements are possible for $p\geq 2$ when choosing the parameters $d$, $M$, and $J$ of the $H$ within some regimes. This result indicates that standard product formulas applied to low-energy states have smaller errors than the worst-case error for $k$-local spin Hamiltonians. 

\begin{table}[ht]
\centering
\resizebox{1.0\columnwidth}{!}{
\begin{tabular}{llll}
\hline
Model & Approach & Unitary simulation & Low-energy simulation \\ 
\hline
\hline
$k$-local & PF & $O\left(\frac{t^{1+1/p}}{\epsilon^{1/p}}n^{1/p}\right)$~\cite{childs2021theory} & \pbox{20cm}{$\tO\left(\frac{(t\Delta)^{1+1/p}}{\epsilon^{1/p}}\right)+$ \\ $O\left(\frac{t^{1+1/(2p+1)}}{\epsilon^{1/(2p+1)}}n^{\frac12+\frac{1}{4p+2}}\right)$~\cite{csahinouglu2021hamiltonian}}\\
\hline
$k$-local & qDRIFT & $O\left(\frac{t^2}{\epsilon}n^2\right)$~\cite{campbell2019random} & \bm{$\tO\left(\frac{t^2}{\epsilon}\Delta^2\right)+O\left(\frac{t^{4/3}}{\epsilon^{1/3}}n^{2/3}\right)$}\\
\hline
$k$-local & rand.~perm. & $O\left(\frac{L(tMJ)^{1+1/p}}{\epsilon^{1/p}}\right)$~\cite{childs2019faster}\tablefootnote{This paper considered bounds in the diamond norm instead of the spectral norm. However, we can derive the complexity for the spectral norm case directly through their results.} & \pbox{20cm}{\bm{$\tO\left(\frac{L(t\Delta)^{1+1/p}}{\epsilon^{1/p}}\right)+$}\\\bm{$O\left(\frac{t^{1+1/(2p+1)}}{\epsilon^{1/(2p+1)}}(LnJ)^{\frac12+\frac{1}{4p+2}}\right)$}}\\
\hline
$k$-local & sym.~prot. & $O\left(\left(\frac{nt^2}{\epsilon}\right)^{\frac{1}{1+\theta}}+\frac{nt^{3/2}}{\epsilon^{1/2}}\right)$~\cite{tran2021faster} & \pbox{20cm}{\bm{$\tO\left(\left(\frac{\Delta t^2}{\epsilon}\right)^{\frac{1}{1+\theta}}+\frac{\Delta t^{3/2}}{\epsilon^{1/2}}\right)$}+\\\bm{$O\left(\left(\frac{nt^3}{\epsilon}\right)^{\frac{1}{2+\theta}}+\frac{t^{5/4}n^{1/2}}{\epsilon^{1/4}}\right)$}}\\
\hline
power-law & PF & $O\left(\frac{(gt)^{1+1/p}}{\epsilon^{1/p}}n^{1/p}\right)$~\cite{childs2021theory} &  \pbox{20cm}{\bm{$\tO\left(\frac{(gt)^{1+1/p}}{\epsilon^{1/p}}\Delta^{1+1/p}\right)+$}\\\bm{$O\left(\frac{t^{1+1/(2p+1)}}{\epsilon^{1/(2p+1)}}(gn)^{\frac12+\frac{1}{4p+2}}\right)$}}\\
\hline
\end{tabular}
}
\caption{A summary of low-energy simulation complexity (Trotter number) for product formulas (PF), qDRIFT, random permutation (rand.~perm.), and symmetry protection (sym.~prot.) approaches, with comparisons to the state-of-the-art results. The models include $k$-local Hamiltonians and power-law models. 
The results in this work are marked in the \textbf{bold} font. Here, $\tO(\cdot)$ omits the polylogarithmic dependence on the parameters. For $k$-local Hamiltonians, we assume that $d$, $L$, and $J$ to be $O(1)$ except for the random permutation approach. For the symmetry protection approach, $0\leq\theta\leq1$ is a constant depending on the structure of the Hamiltonian, {see \eqref{eq:transformation}}. In the last line, $\alpha$ is the decay factor for the power-law interactions. Here, $g$ is the interaction strength and $g=O(1)$, $O(\log n)$, and $O(n^{1-\alpha/D})$ for $\alpha>D$, $\alpha=D$, and $\alpha<D$, where $D$ is the dimension of the system. For random approaches, we only list the results considering the simulation errors in expectation. We derive the Trotter number required for ensuring the random fluctuation in the following sections. In addition, we provide corresponding gate complexities for the results in the above table. {We can observe that  for all the approaches, low-energy simulation improves power dependence on $n,t,\epsilon$ in a trade-off or decoupling way and $n $ or spectral norm scaling is partially substituted by $\Delta$.}} 
\label{tab:main}
\end{table}

In this work, we conduct a systematic study of quantum simulation under the low-energy subspace assumption by proving the robustness of low-energy simulation under imperfect state preparation {due to thermalization}, proving the lower bound for low-energy simulations, analyzing the complexity required for implementing various digital quantum simulation approaches~\cite{campbell2019random,childs2019faster,tran2021faster}, and simulating different models such as $k$-local Hamiltonians and power-law models. We summarize our results in the following \Cref{tab:main} with comparisons to the previous results. We also carry out extensive numerical experiments to benchmark the low-energy simulation using various algorithms in physically relevant models.

\subsubsection{Robustness against imperfect state preparations {due to thermalization}}
Concerning experimental implementations of quantum simulations, it is hard to guarantee that the support for the input states is restricted exactly in the subspace below the threshold $\Delta$. In particular, the state preparation procedures on these near-term devices are imperfect, which produces a Gaussian tail on the higher energy states in the energy spectrum. Quantitatively, the probability distribution for eigenstates of energy $E$ higher than $\Delta$ satisfies a Gaussian distribution $\propto e^{-(E-\Delta)^2/\sigma^2}$, where $\sigma\ll\Delta$ is the variance{~\cite{banuls2020entanglement,ge2019faster,lu2021algorithms}}. In this work, we prove that such imperfect state preparation causes an $O((\sigma/\Delta)^2)$ multiplicative extra error in \Cref{sec:ImperfectPrep}, 
which is ignorable compared to the original low-energy simulation error. This result indicates that the low-energy improvement persists even under imperfect state preparations.

\subsubsection{Applications}
\paragraph{Randomized quantum compiling (qDRIFT).}
The first approach we analyze is the qDRIFT algorithm proposed by Ref.~\cite{campbell2019random}. We recall that the first-order Lie-Trotter algorithm deterministically cycles through every term $H_l$ in the Hamiltonian $H$ in each step. The qDRIFT algorithm~\cite{campbell2019random}, however, approximates the target evolution using a channel constructed by averaging products $U=U_rU_{r-1}\cdots U_1$. Here, each unitary $U_i$ is a short-time evolution on a single term $H_l$ sampled according to the probability distribution $\{p_l=\norm{H_l}/\lambda_H\}_l$. It is proved that only $O(\lambda_H^2t^2/\epsilon)$ steps suffice to ensure the error within $\epsilon$ in expectation, which is explicitly independent of $L$. 

In this work, we consider the performance of qDRIFT in the low-energy subspace. We denote $\oH=\pdpl H\pdpl$ for some carefully chosen $\Delta'\geq\Delta$ to be fixed later. We then introduce the parameter $\lambda_{\oH}=\sum_{l=1}^L\norm{\oH_l}$ where $\oH_l=\pdpl H_l\pdpl$. We now consider the low-energy version of qDRIFT where we evolve the system under $H_l\lambda_{\oH}/\norm{\oH_l}$ in each time step with probability $p_l=\norm{\oH_l}/\lambda_{\oH}$. Assuming the Hamiltonian $H$ is not ill-divided in the sense that $\lambda_{\oH}/\norm{\oH_l}=O(L)$ for all $l=1,\ldots,L$, we obtain the following complexity upper bound.
\begin{theorem}\label{thm:qDRIFTLow}
Let $H=\sum_{l=1}^LH_l$ be an $n$-qubit $k$-local Hamiltonian with parameters $M$, $J$, and $d$ defined as above ($LMJ=O(n)$). By choosing the Trotter number
\begin{align}\label{eq:qDRIFTLowExp}
r_{{\mathrm{exp}}}=\tO\left(\frac{L^2(\Delta+dkJ)^2t^2}{\epsilon}\right)+O\left(\frac{(LJ)^{4/3}M^{2/3}t^{4/3}}{\epsilon^{1/3}}\right),
\end{align}
we can ensure that the expected simulation error in the low-energy subspace below $\Delta$ for the qDRIFT algorithm is bounded by $\epsilon$, i.e., $\norm{(V-\E[U])\pdl}\leq\epsilon$. Moreover, if we choose the Trotter number
\begin{align}\label{eq:qDRIFTLowProb}
r_{{\mathrm{prob}}}=\tO\left(\frac{L^2(\Delta+dkJ)^2 t^2}{\epsilon^2}\left(n+\log(\frac{1}{\chi})\right)\right)+O\left(\frac{(LJ)^{4/3}M^{2/3}t^{4/3}}{\epsilon^{2/3}}\left(n+\log(\frac{1}{\chi})\right)^{1/3}\right),
\end{align}
then with probability at least $1-\chi$, we can guarantee that the simulation error for a single product $U$ is bounded by $\epsilon$.
\end{theorem}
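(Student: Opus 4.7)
I plan to adapt Campbell's standard qDRIFT channel-error analysis~\cite{campbell2019random} to the low-energy setting by replacing the ambient one-norm $\lambda_H = \sum_l \|H_l\|$ with an \emph{effective low-energy one-norm} of order $L(\Delta + dkJ)$, and by separately controlling the probability that a random qDRIFT trajectory leaks beyond a slightly enlarged low-energy shell. The leading $\tO(L^2(\Delta + dkJ)^2 t^2/\epsilon)$ term in~\eq{qDRIFTLowExp} should then come from the standard per-step second-moment bound specialized to this effective norm, while the $O((LM)^{2/3}J^{4/3}t^{4/3}/\epsilon^{1/3})$ term is the cost of tolerating a non-zero leakage into the complementary high-energy subspace.

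The first technical ingredient I would set up is the commutator inequality $\|[H,H_l]\|\leq 2dkJ\|H_l\|$, which follows from $k$-locality together with the fact that each of the $k$ sites touched by $H_l$ sits in at most $d$ other terms of $H$. Combined with $\|H\pdl\|\leq\Delta$, this lets me build a nested family of shells $\pdl\subset\Pi_{\leq\Delta_1}\subset\Pi_{\leq\Delta_2}\subset\ldots$ with $\Delta_m=\Delta+m\cdot O(dkJ)$, following the Sahino\u{g}lu--Somma projector telescoping~\cite{csahinouglu2021hamiltonian}, on which any product $H_{l_1}\cdots H_{l_s}\pdl$ can be telescoped into a product of shell norms of magnitude $O(\Delta+dkJ)$ each. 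Plugging these shell norms into Campbell's Taylor expansion of the single-step mixed channel $\sum_l p_l e^{-i\tau H_l/p_l}(\cdot)e^{i\tau H_l/p_l}$ against $e^{-i\tau H}(\cdot)e^{i\tau H}$ with $\tau=t/r$, the sandwiched second-order error operators shrink from $\lambda_H^2$ to $L^2(\Delta+dkJ)^2$, and multiplying the per-step error by $r$ produces the leading term of~\eq{qDRIFTLowExp}.

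The leakage correction comes from inserting $I=\Pi_{\leq\Delta_m}+\Pi_{>\Delta_m}$ between consecutive random unitaries and bounding $\|\Pi_{>\Delta_{m+1}}e^{-i\tau H_l/p_l}\Pi_{\leq\Delta_m}\|=O(\tau\|H_l\|)$ by a first-order Taylor estimate combined with the commutator bound above; optimizing the shell depth $m$ against the allowed error budget produces the subleading $t^{4/3}/\epsilon^{1/3}$ contribution to~\eq{qDRIFTLowExp}. For the high-probability statement~\eq{qDRIFTLowProb}, I would additionally invoke a matrix Bernstein/Azuma concentration inequality for the product of $r$ independent qDRIFT unitaries along the lines of Chen and Brand\~{a}o~\cite{chen2021concentration,chen2021concentrationarxiv}. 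This incurs the standard $\sqrt{n+\log(1/\chi)}$ factor that, once squared into the per-step budget, turns $\epsilon$ into $\epsilon^2$ in the leading term and $\epsilon^{1/3}$ into $\epsilon^{2/3}$ in the subleading term while attaching the dimension/confidence prefactor, reproducing~\eq{qDRIFTLowProb}.

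The main obstacle I anticipate is the leakage analysis under random sampling: a naive union bound over the $L^r$ sample trajectories would lose an extra factor of $r$ in the expected high-energy leakage and spoil the $t^{4/3}$ rather than $t^2$ scaling. I expect this to require exploiting the martingale identity $\E[H_l/p_l]=H$ together with a randomized strengthening of the Sahino\u{g}lu--Somma telescoping, so that the expected leakage is controlled directly rather than through a worst-case trajectory bound.
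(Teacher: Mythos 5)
Your high-level architecture matches the paper's: both arguments decompose the error into (i) a deterministic bias term for a qDRIFT channel with an effective one-norm $\lambda_{\oH}=O(L\Delta')$, (ii) a leakage/projection term, and (iii) a martingale concentration term for the high-probability version, and both obtain the leading term of \eq{qDRIFTLowExp} by running the standard second-moment qDRIFT bound on the projected Hamiltonian $\oH=\pdpl H\pdpl$. The concentration step you sketch (Freedman/Azuma for the product of $r$ independent steps, contributing the $n+\log(1/\chi)$ factor and the $\epsilon\to\epsilon^2$, $\epsilon^{1/3}\to\epsilon^{2/3}$ degradation) is exactly what the paper does via \lem{MartConc}.

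The genuine gap is in your leakage analysis. You propose to bound $\norm{\Pi_{>\Delta_{m+1}}e^{-i\tau H_l/p_l}\Pi_{\leq\Delta_m}}=O(\tau\norm{H_l})$ by a first-order Taylor estimate. Note that $\norm{H_l}/p_l=\lambda_H=O(n)$, so this gives a per-step leakage of $O(\tau\lambda_H)$, which summed over $r$ steps is $O(t\lambda_H)$ --- independent of $r$ and of the shell separation, hence it can never be driven below $\epsilon$ by increasing $r$ and cannot produce the $t^{4/3}/\epsilon^{1/3}$ subleading term. The crucial ingredient, which a single-commutator first-order bound does not capture, is the Arad--Kuwahara--Landau estimate (\lem{ProjBound} and its corollaries \lem{LeakExp}, \lem{LeakDiff}): $\norm{\plpg A\pll}\leq\norm{A}e^{-\lambda(\Lambda'-\Lambda-2R_A)}$ with $\lambda=(2Jdk)^{-1}$, i.e., the leakage decays \emph{exponentially} in the shell separation measured in units of $Jdk$. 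Your nested-shell picture is pointing in the right direction, but to recover the exponential you must expand $e^{-i\tau H_l/p_l}$ to order $m$ and use that each application of a local term raises the energy by only $O(dkJ)$; truncating at first order discards exactly the mechanism that makes the scheme work. With the exponential bound in hand, the paper chooses $\Delta'=\Delta+\frac{1}{\lambda}\alpha\delta M+\frac{1}{\lambda}\log(r^2/\lambda_{\oH}^2t^2)$ to equate the projection error with the bias, and it is the broadening term $\alpha\delta M/\lambda\sim J^2dkM t/r$ (not an optimization of shell depth against an error budget) that generates the $O((LM)^{2/3}J^{4/3}t^{4/3}/\epsilon^{1/3})$ contribution. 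Finally, your anticipated obstacle about a union bound over $L^r$ trajectories is moot: the per-step leakage bound holds uniformly over which $H_l$ is sampled, so the projection error is controlled deterministically along every trajectory and no randomized strengthening of the telescoping is needed.
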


We leave the full proof for \Cref{thm:qDRIFTLow} in \Cref{sec:Rand}. \Cref{thm:qDRIFTLow} provides the step complexity (Trotter number) required for simulating low-energy states of $k$-local spin Hamiltonians both in the expectation case and in the case when one needs to ensure the performance of every single simulation unitary $U$ with high confidence. Compared to the step complexity for implementing the qDRIFT algorithm in the full Hilbert space, the above theorem indicates that we can expect an improvement concerning the dependence on system size $n$ as long as the number of terms in each $H_l$ $M=\Omega(1)$, which can be satisfied by most of the spin models. Although the step complexity~\eqref{eq:qDRIFTLowExp} in expectation and the step complexity~\eqref{eq:qDRIFTLowProb} to control random fluctuations are no better than that for the low-energy simulation using standard product formulas, we remark that this deficiency originates from the disadvantage of qDRIFT in simulating $k$-local spins even in the full Hilbert space~\cite{chen2021concentration}.

\paragraph{Random permutation.}
Except for the qDRIFT algorithm, we consider the random permutation approach in simulating $k$-local Hamiltonians~\cite{childs2019faster}. While the qDRIFT approach implements random evolution on each term of the Hamiltonian, the random permutation approach randomly permutes the sequence of the terms in higher-order Suzuki-Trotter formulas. In particular, for any permutation in the permutation group $\sigma\in\cS_{L}$, we denote
\begin{align*}
S_2^\sigma(\delta)&=\prod_{l=1}^L\exp\Bigl(-iH_{\sigma(l)} \frac{\delta}{2}\Bigr)\prod_{l=L}^1\exp\Bigl(-iH_{\sigma(l)}\frac{\delta}{2}\Bigr),\\
S_p^\sigma(\delta)&=S_{p-2}^\sigma(u_p\delta)^2S_{p-2}^\sigma((1-4u_p)\delta)S_{p-2}^\sigma(u_p\delta)^2,
\end{align*}
where $u_p=(4-4^{1/(p-1)})^{-1}$. In each time step $\delta$, we consider averaging over all possible $\sigma\in \cS_{L}$. For the simulation in low-energy subspace, we estimate the simulation error using
\begin{align}\label{eq:RandPermErr}
\norm{\left[V(t)-\left(\frac{1}{L!}\sum_{\sigma\in \cS_L}S_p^\sigma(\delta)\right)^r\right]\pdl}.
\end{align}

In practice, it is difficult to implement all the permutations $\sigma$ and average the results as the number of all permutations $L!$ increases exponentially with $L$. We consider a random sampling procedure similar to the qDRIFT algorithm. We sample by implementing a random evolution $U=U_rU_{r-1}\cdots U_1$ with each $U_i=S_p^{\sigma_i}(\delta)$ a randomly chosen permuted Suzuki-Trotter formula and $\sigma_i\sim\cS_L$ i.i.d.~random permutations. In expectation, the evolution in each step is $\frac{1}{L!}\sum_{\sigma\in \cS_L}S_p^\sigma(\delta)$, which is the same as the standard random permutation approach. Concerning implementing this approach in the low-energy subspace, we obtain the following complexity result.
\begin{theorem}\label{thm:RandPermLow}
Let $H=\sum_{l=1}^LH_l$ be an $n$-qubit $k$-local Hamiltonian with parameters $M$, $J$, and $d$ ($LMJ=O(n)$). By choosing the Trotter number
\begin{align}\label{eq:RandPermLowExp}
r_{{\mathrm{exp}}}=\tO\left(\frac{Lt^{1+1/p}(\Delta+dkJ)^{1+1/p}}{\epsilon^{1/p}}\right)+O\left(\frac{Lt^{1+1/(2p+1)}(dkMJ^2)^{\frac12+\frac{1}{4p+2}}}{\epsilon^{1/(2p+1)}}\right),
\end{align}
we can ensure that the expected simulation error in the low-energy subspace below $\Delta$ for the random permutation approach is bounded by $\epsilon$. For the random sampling implementation of the random permutation algorithm, if we choose the Trotter number
\begin{align}\label{eq:RandPermLowProb}
r_{{\mathrm{prob}}}=\tO\left(\frac{(Lt(\Delta+dkJ))^{\frac{2p+2}{2p+1}}\left(n+\log\left(\frac{1}{\chi}\right)\right)^{\frac{1}{2p+1}})}{\epsilon^{\frac{2}{2p+1}}}\right)+O\left(\frac{(L^2t^2dkMJ^2)^{\frac{2p+2}{4p+3}}\left(n+\log\left(\frac{1}{\chi}\right)\right)^{\frac{1}{4p+3}})}{\epsilon^{\frac{2}{4p+3}}}\right),
\end{align}
then with probability at least $1-\chi$, we can ensure that the simulation error for a single random sequence is bounded by $\epsilon$.
\end{theorem}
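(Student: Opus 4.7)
The plan is to handle the two bounds in \Cref{thm:RandPermLow} via a common per-step error analysis, augmented by matrix concentration for the high-probability statement. Let $\bar{S}_p(\delta) := \frac{1}{L!}\sum_{\sigma\in\cS_L} S_p^\sigma(\delta)$ denote the symmetrized one-step evolution, so that $\E[U]=\bar{S}_p(\delta)^r$ for the random sampling implementation. For the expectation bound \Cref{eq:RandPermLowExp}, I telescope
\[
V(t)\pdl - \bar{S}_p(\delta)^r\pdl = \sum_{k=0}^{r-1} V(\delta)^{r-1-k}\bigl(V(\delta)-\bar{S}_p(\delta)\bigr)\bar{S}_p(\delta)^k\pdl,
\]
so that unitarity of $V$ reduces the global error to bounding $\|(V(\delta)-\bar{S}_p(\delta))\bar{S}_p(\delta)^k\pdl\|$ at each step.

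To control a single-step error I rely on three ingredients. First, a Baker-Campbell-Hausdorff-type expansion with symmetry cancellation: because $\bar{S}_p(\delta)$ symmetrizes the ordering of factors, all nonvanishing terms in $\bar{S}_p(\delta)-V(\delta)$ begin at order $\delta^{p+2}$ and are nested commutators of the $H_l$'s, as in \cite{childs2019faster}. Second, a low-energy commutator estimate, adapted from the $k$-local arguments of \cite{csahinouglu2021hamiltonian}, showing that any such nested commutator applied to a state supported below energy $\Delta'$ has norm bounded by a polynomial $(\Delta')^{p+1}\cdot dkMJ^2$, with only one factor of the extensive quantity $dkMJ^2$ thanks to locality of the interaction graph. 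Third, a leakage bound: since each $S_p^\sigma(\delta)$ is a product of $O(L)$ elementary exponentials $e^{-iH_l\delta'}$ with $\|H_l\|\le MJ$, an application of nested commutator identities and the $H_l\ge 0$ assumption shows that $\bar{S}_p(\delta)^k\pdl$ remains essentially supported on the subspace of energies at most $\Delta + O(k\delta\cdot dkJ)$, uniformly in the permutation samples, so the effective energy cutoff entering (ii) is $\Delta'=\Delta+O(dkJ)$ once $k\delta = O(1/(dkJ))$. Combining the three ingredients gives a per-step error of order $\delta^{p+2}L(\Delta+dkJ)^{p+1}\cdot dkMJ^2$ up to polylog factors; summing over $r$ steps and balancing the two contributions (the commutator-driven term scaling with $(\Delta+dkJ)^{p+1}$ versus the local-strength term scaling with $dkMJ^2$) against $\epsilon$ produces the two summands in \Cref{eq:RandPermLowExp}.

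For the high-probability bound \Cref{eq:RandPermLowProb}, I write
\[
U - \E[U] = \sum_{k=1}^r U_r\cdots U_{k+1}\bigl(U_k-\E[U_k]\bigr)\E[U_{k-1}]\cdots\E[U_1],
\]
which, conditioned on $\sigma_1,\dots,\sigma_{k-1}$, is a matrix-valued martingale difference sequence. Applying a matrix Azuma/Bernstein inequality to $(U-\E[U])\pdl$, with deterministic per-step spectral bound $\|(U_k-\E[U_k])\pdl\|=O(\delta^{p+1}(\Delta+dkJ)^{p+1}+\delta(dkMJ^2)^{1/2}\cdots)$ furnished by the same commutator and leakage estimates as above, gives a tail probability $\exp\bigl(-\Omega(s^2/r\cdot\text{variance})\bigr)$ multiplied by the effective low-energy dimension, which I upper-bound by $2^n$. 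Setting the failure probability to $\chi$, using the triangle inequality with the expectation bound, and solving for $r$ then yields \Cref{eq:RandPermLowProb}.

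The main technical obstacle is the leakage bound (iii) for the symmetrized product $\bar{S}_p(\delta)^k$: one must argue that the $L!$-fold averaging does not destroy the effective low-energy confinement enjoyed by each individual $S_p^\sigma(\delta)$, and that the commutator structure from (ii) can be propagated through the random permutation in a way that preserves both the $\delta^{p+2}$ order-of-accuracy gain and the $k$-local norm savings. Careful bookkeeping of nested commutators along the interaction graph, together with uniform leakage estimates that are independent of the sampled permutation, constitutes the crux of the argument.
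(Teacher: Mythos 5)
Your high-level skeleton (a telescoping per-step bound for the expectation statement and a matrix-martingale concentration for the high-probability statement) matches the paper's, but two of your three ingredients have genuine problems, and the paper's central device is missing. First, your claim (i) that permutation averaging cancels all terms of $\bar S_p(\delta)-V(\delta)$ up to order $\delta^{p+2}$ is false for general $p$: the averaging only reduces the $L$-counting of the leading $\delta^{p+1}$ term, giving $\norm{\E_\sigma S_p^\sigma(\delta)-V(\delta)}=O((\lambda_{\oh}\delta)^{p+1}L^{p})$ in place of $O((L\lambda_{\oh}\delta)^{p+1})$; a genuine increase of the order is the separate order-doubling construction of Cho--Berry--Huang (handled in \Cref{coro:DoubLow}) and requires the extra assumption $r\gtrsim L\lambda_h t$. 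With your claimed per-step order $\delta^{p+2}$ the exponents on $t$ and $\epsilon$ come out as $1+1/(p+1)$ and $1/(p+1)$, which do not match \eqref{eq:RandPermLowExp}. Second, your leakage bound (iii) asserts an effective cutoff $\Delta+O(k\delta\, dkJ)$ valid ``once $k\delta=O(1/(dkJ))$,'' but $k$ runs up to $r$ so $k\delta$ runs up to $t$; for generic $t$ this either inflates the cutoff linearly in $t$ or leaves the claim unproved. You correctly flag this as the crux, but it is exactly the step that is not supplied.

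The paper sidesteps both issues with a different mechanism: it fixes a cutoff $\Delta'\ge\Delta$ once and for all, defines the truncated Hamiltonian $\oH=\pdpl H\pdpl$ and surrogate formulas $\overline{U}_i=S_p^{\sigma_i}(\delta)$ built from the $\oH_l$, and splits the error into (a) a projection error $\norm{(U_r\cdots U_1-\overline{U}_r\cdots\overline{U}_1)\pdl}\le r\cdot 5\exp(-\tfrac1q(\lambda(\Delta'-\Delta)-\alpha'\delta ML-q\log q))$, controlled by the leakage lemmas of \cite{arad2016connecting,csahinouglu2021hamiltonian} (\Cref{lem:ProdDisProj}) --- exponentially small per step at \emph{fixed} $\Delta'$, rather than an energy window that widens with $k$; (b) a deterministic bias bounded by the off-the-shelf worst-case estimate applied to $\oH$, with $\lambda_{\oh}=\Delta'$ replacing the $O(n)$ norm; and (c) the martingale fluctuation via Freedman's inequality with $\norm{C_k}=O((tL\Delta'/r)^{p+1})$. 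Choosing $\Delta'=\Delta+\tfrac{1}{\lambda}\alpha\delta ML+\tfrac{q}{\lambda}\log q+\cdots$ to balance (a) against (b) (or against (c) for the high-probability version) and then performing a case analysis on which term of $\Delta'$ dominates is what produces the two separate summands in \eqref{eq:RandPermLowExp} and \eqref{eq:RandPermLowProb}; your ``balancing the commutator-driven term against the local-strength term'' does not reproduce this structure. Your ingredient (ii), a direct low-energy nested-commutator estimate for the averaged formula, is precisely the hard substitute for this machinery, and it is asserted rather than proved.
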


We leave the technical details for \Cref{thm:RandPermLow} in \Cref{sec:Rand}. \Cref{thm:RandPermLow} shows the improvement over the step complexity of random permutation in the full Hilbert space~\cite{childs2019faster} concerning the dependence on systems size $n$. Compared to the step complexity in \eqref{eq:RandPermLowExp} with that for standard product formulas in the low-energy subspace~\cite{csahinouglu2021hamiltonian}, we can obtain an $L^{1/2p}$ reduction in the step complexity. This originates from the reduction from implementing the random permutation in the Hilbert space~\cite{childs2019faster}.

Based on the random permutation approach, Ref.~\cite{cho2022doubling} proposed an approach to double the order of approximation via the randomized product formulas. In particular, given $r\geq(5^{p/2-1}+\frac12)L\lambda_h t$ and $p$-th order contributing product formula, this algorithm provides a $(2p+1)$-th order simulation for the ideal evolution. For this approach, we also prove the following corollary on the Trotter numbers required for low-energy simulations.
\begin{corollary}\label{coro:DoubLow}
Let $H=\sum_{l=1}^LH_l$ be an $n$-qubit $k$-local Hamiltonian with parameters $M$, $J$, and $d$ ($LMJ=O(n)$). By choosing the Trotter number
\begin{align}\label{eq:DoubLowExp}
r_{{\mathrm{exp}}}=\tO\left(\frac{(Lt)^{1+1/(2p+1)}(\Delta+dkJ)^{1+1/(2p+1)}}{\epsilon^{1/(2p+1)}}\right)+O\left(\frac{(Lt)^{1+1/(4p+3)}(dkMJ^2)^{\frac{2p+2}{4p+3}}}{\epsilon^{1/(4p+3)}}\right),
\end{align}
we can ensure that the expected simulation error in the low-energy subspace below $\Delta$ for the doubling order approach is bounded by $\epsilon$. And the Trotter complexity $r_{\text{prob}}$ to ensure the algorithm converges remains the same as that in \Cref{thm:RandPermLow}.
\end{corollary}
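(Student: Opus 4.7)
The plan is to apply the doubling-order construction of Ref.~\cite{cho2022doubling} to the $p$-th order randomly permuted Suzuki-Trotter formula and then rerun the low-energy error analysis of \Cref{thm:RandPermLow} on the resulting $(2p+1)$-th order effective formula. Conceptually, the doubling trick cancels the leading error term of a $p$-th order randomized formula via a fixed linear combination of two random permutation evolutions, upgrading the per-step error from $O((\lambda_h\delta)^{p+1})$ to $O((\lambda_h\delta)^{2p+2})$, provided the precondition $r\geq(5^{p/2-1}+\tfrac12)L\lambda_h t$ is satisfied.

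First I would repeat, almost mechanically, the commutator decomposition behind the proof of \Cref{thm:RandPermLow}, applied now to the doubled formula. Because the doubled formula is itself a linear combination of random permutation evolutions, sandwiching the error operator by $\pdl$ and invoking the low-energy projection lemma splits the analysis into (i) a commutator piece whose norm is controlled by $(\Delta+dkJ)$-type factors raised to the $(2p+2)$-th power, and (ii) a random-fluctuation piece controlled by $(dkMJ^2)$-type factors. Summing per-step errors over $r$ steps and setting the total $\leq\epsilon$ then yields the two summands of~\eqref{eq:DoubLowExp} upon substituting $p\mapsto 2p+1$ into the exponents that appeared in the error-per-step bound of \Cref{thm:RandPermLow}. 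Incorporating the doubling precondition $r\geq(5^{p/2-1}+\tfrac12)L\lambda_h t=O(LJt)$ as an additional lower bound and taking the maximum absorbs an extra factor of $L$ into the base of each fractional exponent, producing the $(Lt)^{1+1/(2p+1)}$ and $(Lt)^{1+1/(4p+3)}$ scalings stated in the corollary rather than the $Lt^{1+1/(2p+1)}$-type scaling one would obtain from a naive substitution. Since the random object remains an i.i.d.~product of random permutations (only post-processed by fixed doubling coefficients), the martingale concentration argument used to establish~\eqref{eq:RandPermLowProb} transfers verbatim, so the required $r_{\text{prob}}$ is the same as in \Cref{thm:RandPermLow}.

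The main obstacle I anticipate is verifying that the leading-order cancellation of \cite{cho2022doubling}, originally proved in the full Hilbert space, survives the restriction to $\pdl$. This reduces to observing that the cancellation is a symbolic identity among nested commutators of $\{H_l\}$ and therefore persists when both sides of the error operator are projected by $\pdl$; the low-energy projection lemma from the proof of \Cref{thm:RandPermLow} can then be applied to each surviving commutator to replace $\lambda_H$ by the effective low-energy norm $\Delta+dkJ$, which is precisely what converts the full-Hilbert-space bound into the claimed low-energy bound.
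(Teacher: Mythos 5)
Your high-level plan (apply the doubling theorem of Ref.~\cite{cho2022doubling}, rerun the low-energy analysis, reuse the martingale argument for $r_{\text{prob}}$) matches the paper's, but two of the mechanisms you give for producing the stated bound are wrong, and they are exactly the two features that distinguish \eqref{eq:DoubLowExp} from a naive substitution.

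First, the $(Lt)^{1+1/(2p+1)}$ base does \emph{not} come from taking a maximum with the precondition $r\geq(5^{p/2-1}+\tfrac12)L\lambda_{\oh}t$. That precondition is linear in $L$, $t$, and $\lambda_{\oh}$, so maximizing it against a bound of the form $Lt^{1+1/(2p+1)}(\cdots)/\epsilon^{1/(2p+1)}$ can only ever return one of the two expressions; it cannot manufacture an exponent $1+1/(2p+1)$ on $L$. (For any $\epsilon\leq 1\leq Lt\Delta'$ the precondition is in fact subsumed and plays no role in the final scaling.) The actual source is the form of the deterministic bias from Theorem 2 of Ref.~\cite{cho2022doubling} applied to the projected Hamiltonian: $\norm{\E[\overline{U}_i]^r-\overline{V}}=O\bigl((L\lambda_{\oh}t)^{2p+2}/r^{2p+1}\bigr)$, in which $L$ carries the full exponent $2p+2$ alongside $t$ and $\lambda_{\oh}$ (unlike the random-permutation bias $O((L/r)^{p'}(t\lambda_{\oh})^{p'+1})$, where $L$ appears only to the power $p'$). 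Solving $(L\lambda_{\oh}t)^{2p+2}/r^{2p+1}\leq\epsilon$ is what yields $(Lt\Delta')^{1+1/(2p+1)}\epsilon^{-1/(2p+1)}$ directly.

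Second, you attribute the $(dkMJ^2)^{\frac{2p+2}{4p+3}}$ summand to the random-fluctuation piece, but in the derivation of $r_{\text{exp}}$ the random fluctuation is ignored entirely. Both summands of \eqref{eq:DoubLowExp} arise from balancing the projection error $r\cdot 5\exp\bigl(-\tfrac1q(\lambda(\Delta'-\Delta)-\alpha'\delta ML-q\log q)\bigr)$ against the deterministic bias, which forces the self-consistent choice $\Delta'=\Delta+\tfrac{1}{\lambda}\alpha\delta ML+\tfrac{q}{\lambda}\log q+\cdots$ with $\lambda=(2Jdk)^{-1}$. The second summand is the case in which the $\delta$-dependent term $\alpha\delta ML/\lambda$ dominates $\Delta'$; since $\delta=t/r$, this couples $\Delta'$ back to $r$ and shifts the exponent from $1/(2p+1)$ to $1/(4p+3)$. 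Finally, a structural remark: the paper never projects the full-space commutator cancellation identity term by term, as you propose. It applies the doubling theorem wholesale to $\oH=\pdpl H\pdpl$ (so the cancellation is an exact statement about a different Hamiltonian) and bounds the discrepancy $\norm{(U_r\cdots U_1-\overline{U}_r\cdots\overline{U}_1)\pdl}$ separately via \Cref{lem:ProdDisProj}. This sidesteps the obstacle you flag, whereas your commutator-by-commutator projection would require additional justification that the projectors can be threaded through the nested commutators without spoiling the cancellation.
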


\paragraph{Symmetry protection.}
Other than the randomized digital quantum simulation approaches, another approach to improve the performance of digital quantum simulations is to implement a symmetry transformation in each time step~\cite{tran2021faster}. 

Given a Hamiltonian $H$, we assume that it is invariant under a group of unitary transformations denoted by $\cC$. For each unitary $C$ chosen from $\cC$, we explicitly have $[H,C]\equiv 0$. The group $\cC$ represents a symmetry group of the system. According to Ref.~\cite{tran2021faster}, we ``rotate" each implementation of the Lie-Trotter formula $S_1(\delta)$ via a symmetry transformation $C_\mu\in\cC$ in each time step. The simulation for $V=e^{-iHt}$ reads $V(t)\approx\prod_{\mu=1}^rC_\mu^\dagger S_1(\delta)C_\mu$. Assuming the simulation error coherent, the digital quantum simulation of an evolution $V(t)$ may end up with the time evolution of a different Hamiltonian $H_{\text{eff}}$. We denote $\kappa=H_{\text{eff}}-H$ and rewrite the simulation under the assumption $\norm{\kappa}$ is small
\begin{align*}
V(t)\approx\prod_{\mu=1}^rC_\mu^\dagger S_1(\delta)C_\mu=\prod_{\mu=1}^re^{-iC_\mu^\dagger H_{\text{eff}}C_\mu \delta}=\prod_{\mu=1}^re^{-i(H+C_\mu^\dagger \kappa C_k)\delta}\approx e^{-i\left(H+\frac{1}{r}\sum_{\mu=1}^rC_\mu^\dagger \kappa C_\mu\right)t}.
\end{align*}

In the last line, we use the Baker-Campbell-Hausdorff (BCH) formula. The simulation error for this symmetry protection approach can thus be represented as
\begin{align*}
\norm{V(t)-\prod_{\mu=1}^LC_{\mu}^\dagger S_1(\delta)C_\mu}.
\end{align*}

For the symmetry protection approach, the Trotter number required for the $k$-local spin model is provided by Ref.~\cite{tran2021faster}
\begin{align*}
r=\max\left\{O\left(\left(\frac{nt^2}{\epsilon}\right)^{\frac{1}{1+\theta}}\right),O\left(\frac{nt^{3/2}}{\sqrt{\epsilon}}\right)\right\},
\end{align*}
where $\theta\in[0,1]$ is a constant that depends only on the structure of the Hamiltonian $H$ and the properties of the symmetry group $\cC$. {An intuitive definition of $\theta$ is given by considering the simulation error $v_0$ of the Lie-Trotter formula. The scaling of the ratio between $\norm{\mathbb{E}_{C_\mu\in\mathcal{C}}[C_\mu^\dagger v_0 C_\mu]}$ and $\norm{v_0}$ is $n^{\theta}$. Quantitatively, we have
\begin{align}\label{eq:theta}
\norm{\mathbb{E}_{C_\mu\in\mathcal{C}}[C_\mu^\dagger v_0 C_\mu]}=\norm{v_0}/n^{\theta}.
\end{align}}
There are several examples provided with a specific value of $\theta$. For example, when we draw symmetry transformations randomly, the behavior of the error would be analogous to a random walk, which results in $\theta=0.5$ under specific settings. The rigorous proof for this intuition for the localized Heisenberg model was provided in Ref.~\cite{tran2021faster}. Following a similar derivation in Ref.~\cite{gong2023improved}, one can explicitly obtain a construction where $\theta$ achieves the maximal value of $1$.

Now, we consider the performance of this symmetry protection approach in the low-energy subspace, which is also an open question in Ref.~\cite{tran2021faster}. We estimate the simulation error by computing the following error term
\begin{align*}
\norm{\left(V(t)-\prod_{\mu=1}^rC_\mu^\dagger S_1(\delta)C_\mu\right)\pdl}.
\end{align*}

We prove the following theorem concerning the low-energy simulations of $k$-local Hamiltonians using the symmetry protection approach.
\begin{theorem}\label{thm:SymProtLow}
Let $H=\sum_{l=1}^LH_l$ be an $n$-qubit $k$-local Hamiltonian with parameters $M$, $J$, and $d$ ($LMJ=O(n)$). By choosing the Trotter number
\begin{align}\label{eq:SymProtLowExp}
r=\tO\left(\left(\frac{L(\Delta+JdkL)t^2}{\epsilon}\right)^{\frac{1}{1+\theta}}+\left(\frac{L^2MJdkt^3}{\epsilon}\right)^{\frac{1}{2+\theta}}+\frac{L(\Delta+JdkL)t^{\frac32}}{\epsilon^{\frac12}}+\frac{t^{\frac54}L(JdkM)^{\frac12}}{\epsilon^{\frac14}}\right),
\end{align}
we can ensure that the symmetry protection approach provides an approximation within error $\epsilon$ in the low-energy subspace below $\Delta$. {Here, the value of $0\leq\theta\leq1$ is a constant depending on the structure of the Hamiltonian~\cite{tran2021faster}}. 
\end{theorem}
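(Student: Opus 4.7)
The plan is to combine the symmetry-protection cancellation mechanism of Ref.~\cite{tran2021faster} with the low-energy projector bookkeeping introduced in Ref.~\cite{csahinouglu2021hamiltonian} and reused in the proofs of \Cref{thm:qDRIFTLow} and \Cref{thm:RandPermLow}. I would first write the error telescopically as a sum over $r$ Trotter steps, reducing the global bound to bounding $\norm{(V(\delta)-C_\mu^\dagger S_1(\delta)C_\mu)\pdl}$ together with the propagation of the projector through one step. As in the previous low-energy arguments, the essential ingredient is an approximate invariance lemma: because each $H_l$ is $k$-local with strength bounded by $J$ and touches at most $d$ neighbors, a single Trotter step maps $\pdl$ into $\Pi_{\leq \Lambda}$ with $\Lambda \lesssim \Delta + JdkL$, so replacing $\pdl$ by $\Pi_{\leq \Lambda}$ after each step incurs only a polylogarithmic blow-up that we hide in the $\tO(\cdot)$.

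Next I would expand $C_\mu^\dagger S_1(\delta)C_\mu$ using BCH to identify the effective Hamiltonian $H+C_\mu^\dagger\kappa C_\mu$ in each step, where $\kappa$ is a sum of nested commutators of the $H_l$'s. The leading contribution to $\kappa$ is a first-order commutator term of order $\delta$, and the symmetry averaging over the $\{C_\mu\}$ yields the characteristic $\delta^{\theta}$ cancellation that defines $\theta$ in Ref.~\cite{tran2021faster}. Splitting the error as $\norm{(V-U)\pdl} \le A_1 + A_2$, where $A_1$ bounds the contribution from the averaged (symmetry-reduced) part and $A_2$ bounds the residual higher-order BCH tail, I would estimate each piece with its projector inserted. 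For $A_1$, I replace the $H$'s that are directly adjacent to $\pdl$ (through the approximate invariance) by their effective norm $\Delta + JdkL$, which produces the factor $L(\Delta+JdkL)$ appearing in the first and third terms of \eqref{eq:SymProtLowExp}. For $A_2$, which concerns the deeper BCH terms that cannot all be pushed against the low-energy projector, I use the standard norm bound $\norm{H_l}\le J$ together with the locality parameters $d,k,M$, which yields the $L^2 MJdk$ factor in the second and fourth terms.

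Once per-step errors are bounded by a sum of a $\delta^{2+\theta}$ piece scaling with $L(\Delta+JdkL)$ and a $\delta^{3+\theta}$ piece scaling with $L^2MJdk$, multiplying by $r$ and equating with $\epsilon$ gives the first two terms of \eqref{eq:SymProtLowExp}. The third and fourth terms come from controlling the random-walk-type fluctuations when $\theta<1$ (handled exactly as in Section~IV of Ref.~\cite{tran2021faster} but with the low-energy-replaced norms); here one uses a Hoeffding-type concentration, balancing $\delta^{3/2}$ variance contributions against the target error, which produces the $\sqrt{\epsilon}$ and $\epsilon^{1/4}$ denominators. Inverting the total bound for $r$ and applying $\max$ across the four regimes gives the stated \Cref{eq:SymProtLowExp}.

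The main obstacle I anticipate is the interaction between symmetry averaging and energy projection: the BCH commutators $[H_l,H_{l'}]$ do not a priori preserve $\pdl$, and the symmetry-averaged residual $\frac{1}{r}\sum_\mu C_\mu^\dagger \kappa C_\mu$ must be controlled after such a projection. Handling this cleanly requires iterating the approximate invariance lemma across all $r$ steps while keeping track of the $\theta$-dependent cancellation, so that we do not lose either the symmetry factor $\delta^{\theta}$ or the low-energy saving $\Delta+JdkL$ when both should apply simultaneously. I expect this to drive the polylogarithmic factors absorbed into $\tO(\cdot)$ and to force the case split that produces four separate contributions in \eqref{eq:SymProtLowExp}.
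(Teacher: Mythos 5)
Your high-level plan (low-energy projector bookkeeping combined with the symmetry-protection cancellation) points in the right direction, but two parts of the proposal would not go through as written. First, the telescoping-per-step decomposition is structurally incompatible with the $r^{\theta}$ saving: if you bound $\norm{(V(\delta)-C_\mu^\dagger S_1(\delta)C_\mu)\pdl}$ step by step and sum, the leading first-order commutator error accumulates as $r\cdot O(\norm{v_0}\delta^2)=O(\norm{v_0}t^2/r)$ with no $\theta$-dependence, because the cancellation $\norm{\overline{v}_0}\propto\norm{v_0}/r^{\theta}$ is a \emph{global} interference effect among the $r$ rotated steps, not a per-step phenomenon. You flag this as an ``obstacle'' but do not resolve it. The paper sidesteps it by a different decomposition: it defines the projected Hamiltonian $\oH=\pdpl H\pdpl$, notes that the symmetry group of $H$ remains a symmetry group of $\oH$, applies the \emph{whole-evolution} symmetry-protection bound (Lemma~\ref{lem:SymFull}) to $\oH$ with $\norm{\oH_l}\leq\Delta'$ in place of $O(n)$, and then separately controls $\norm{(\prod_\mu C_\mu^\dagger\overline{S}_1(\delta)C_\mu-\prod_\mu C_\mu^\dagger S_1(\delta)C_\mu)\pdl}$ via \Cref{lem:ProdDisProj} (together with an assumed $r^{1-\theta'}$ accumulation for that leakage term).

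Second, your accounting of where the four terms in \eqref{eq:SymProtLowExp} come from is incorrect. There is no Hoeffding-type or random-walk concentration argument in this theorem (unlike the qDRIFT and random-permutation proofs, which do invoke \Cref{lem:MartConc}); all four terms are deterministic. They arise by inserting the effective cutoff $\Delta'=\Delta+O(JdkML\delta)+O(JdkL\log L)$ into the two terms of the projected symmetry-protection bound $\epsilon_2=O(L\Delta' t^2/r^{1+\theta})+O(L^2\Delta'^{2}t^3/r^2)$: the constant part of $\Delta'$ gives the first and third terms (the third coming from the quadratic correction $2(\tfrac12\norm{v_0}+\iota t/r)^2t^4/r^2$ in Lemma~\ref{lem:SymFull}, hence the $\epsilon^{-1/2}$), while the $\delta$-dependent part $O(JdkML\,t/r)$ of $\Delta'$ produces the extra powers of $r$ that yield the $1/(2+\theta)$ exponent and the $\epsilon^{-1/4}$ term. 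Attributing the $L^2MJdk$ factors to ``deeper BCH tails'' bounded by $\norm{H_l}\le J$, and the $\epsilon^{-1/2},\epsilon^{-1/4}$ terms to fluctuation control, would lead you to a different (and unjustified) bound.
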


The proof for \Cref{thm:SymProtLow} is provided in \Cref{sec:Sym}. The above theorem provides the step complexity for the symmetry protection approach in the low-energy simulation. Compared with the complexity required for full Hilbert space simulations~\cite{tran2021faster}, the $\lambda_H=O(n)$ dependence is replaced by $L\Delta$ in two terms and $O(n^{1/2})$ in the other two terms, which enables an improvement concerning the $n$ dependence. When $\theta>0$, the required Trotter number is smaller than that for implementing the standard first-order Lie-Trotter product formula in the low-energy subspace.

\paragraph{Power-law model.}
\Cref{thm:qDRIFTLow}, \Cref{thm:RandPermLow}, and \Cref{thm:SymProtLow} consider different digital quantum simulation approaches in simulating the low-energy quantum states of $k$-local spin models. We also consider power-law models, which can be regarded as a specific extension of $k$-local spin models at $k=2$. {In this case, there is no bound on the degree $d$ compared to the $k$-local spin model}, and additional constraints on the interaction strength $J$ for each term. We consider a $D$-dimensional power-law interaction model consisting of $n$ qubits. A power-law interaction $H=\sum_{\bi,\bj\in\Sigma}H_{\bi,\bj}$ with exponent satisfies:
\begin{align*}
\norm{H_{\bi,\bj}}\leq\begin{cases}
1 & \quad\text{if }\bi=\bj,\\
\frac{1}{\norm{\bi-\bj}_2^\alpha} & \quad\text{if }\bi\neq\bj,
\end{cases}
\end{align*}
where $\bi,\bj\in\Sigma$ are the qubit sites, $\Lambda\subseteq\R^D$ is a $D$-dimensional square lattice, and $H_{\bi,\bj}$ is an operator supported on two sites $\bi,\bj$. Some notable examples include Coulomb interactions ($\alpha=1$), dipole-dipole interactions ($\alpha=3$), and van der Waals interactions ($\alpha=6$). According to Lemma H.1 of Ref.~\cite{childs2021theory}, we have
\begin{align*}
g=\norm{\max_{\bi}\sum_{\bj\neq\bi} H_{\bi,\bj}}\leq\begin{cases}
O(n^{1-\alpha/D})&\quad\text{for }0\leq\alpha<D,\\
O(\log(n))&\quad\text{for }\alpha=D,\\
O(1)&\quad\text{for }\alpha>D.\\
\end{cases}
\end{align*}

Here, $g$ is an upper bound of the strengths of the interactions associated with a single spin qubit. We can observe that every term of the power-law Hamiltonian is $2$-local. In the following, we assume that each part $H_l$ {contains at most $M$ interaction terms}. We obtain the following complexity result for simulating such a power-law model in the low-energy subspace. 
\begin{theorem}\label{thm:PowerLow}
Consider simulating a $D$-dimensional power-law Hamiltonian $H=\sum_{l=1}^LH_l$ with interaction strength $g$ in the low-energy subspace $\Delta$. Assume that each $H_l$ {contains at most $M$ interaction terms}. Then, the Trotter number required to ensure the simulation error below $\epsilon$ is
\begin{align}\label{eq:PowerLow}
r=\tO\left(\frac{t^{1+\frac1p}L^{\frac1p}}{\epsilon^{\frac1p}}(\Delta+gq\log q)^{1+\frac 1p}\right)+O\left(\frac{t^{1+\frac{1}{2p+1}}}{\epsilon^{\frac{1}{2p+1}}}(LMg)^{\frac12+\frac{1}{4p+2}}\right).
\end{align}
The corresponding gate complexity is given as:
\begin{align*}
G=\tO\left(\frac{t^{1+\frac1p}L^{\frac1p+1}}{\epsilon^{\frac1p}}(\Delta+gq\log q)^{1+\frac 1p}\right)+O\left(\frac{Lt^{1+\frac{1}{2p+1}}}{\epsilon^{\frac{1}{2p+1}}}(LMg)^{\frac12+\frac{1}{4p+2}}\right).
\end{align*}
\end{theorem}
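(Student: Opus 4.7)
The plan is to adapt the low-energy simulation framework of Ref.~\cite{csahinouglu2021hamiltonian}, which was designed for $k$-local spin Hamiltonians, to the power-law setting by carefully tracking how the power-law interaction strength $g$ enters the effective low-energy norm and the leakage bound. As in the $k$-local case, the total simulation error $\norm{(S_p(\delta)^r - e^{-iHt})\pdl}$ splits into (i) an \emph{effective dynamics error} where all intermediate steps remain inside the low-energy subspace, which yields the first term involving $(\Delta+gq\log q)^{1+1/p}$, and (ii) a \emph{leakage error} where at least one Trotter step pushes the state into the high-energy subspace, which yields the second term involving $(LMg)^{1/2+1/(4p+2)}$.

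First, I would expand the single-step error $S_p(\delta)-e^{-iH\delta}$ using the standard Magnus/BCH representation, so that it becomes a sum of nested commutators $[H_{l_1},[H_{l_2},\ldots,[H_{l_p},H_{l_{p+1}}]\ldots]]$ of order $p+1$. Summed across $r$ steps, this gives the usual $r\,\delta^{p+1}\alpha_{\mathrm{comm}}$ bound. I would then insert resolutions $I=\pdl+\pdg$ between each $H_{l_j}$ factor to decompose each commutator term into ``purely low-energy'' strings and strings that touch the high-energy sector at least once.

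Second, for the purely low-energy contribution, I would use the key low-energy effective-norm estimate: whenever both ends of an $H_l$ factor are projected onto $\pdl$, the relevant norm is controlled by the energy cutoff $\Delta$ plus a boundary correction coming from terms of $H$ that can raise the energy by a bounded amount. For $k$-local Hamiltonians this boundary correction is $dkJ$; for power-law interactions it becomes $g$ times a logarithmic packing factor from summing $\|\bi-\bj\|^{-\alpha}$ over the lattice (invoking Lemma H.1 of \cite{childs2021theory}), which I will bundle into the quantity $gq\log q$ matching the statement. Plugging this into the per-step error gives the first summand in~\eqref{eq:PowerLow} after setting the aggregate error to $\epsilon$ and solving for $r$; the extra $L^{1/p}$ comes from the number of non-commuting layers of the power-law decomposition.

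Third, for the leakage contribution I would bound $\norm{\pdg H_l \pdl}$ using the power-law structure: each $H_l$ couples the low- and high-energy sectors through its local support of strength at most $M$ and overall interaction budget $g$, so the single-step leakage norm is $O(\sqrt{LMg})$ in the spirit of the $k$-local analysis of \cite{csahinouglu2021hamiltonian}. Summing leakage contributions over $r$ steps, optimizing the trade-off between the $\delta^{p+1}$ factor and the $\sqrt{LMg}$ boundary, and balancing against the threshold $\epsilon$ yields the second summand with the characteristic exponents $1+1/(2p+1)$ and $1/2+1/(4p+2)$. Combining the two contributions gives the stated Trotter number $r$, and the gate complexity follows by multiplying $r$ by the $L$ exponentials executed per Trotter step.

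The main obstacle will be correctly capturing the boundary correction for power-law interactions. Unlike in the $k$-local setting, where each site participates in at most $d$ bounded-strength interactions, every pair of sites is coupled in a power-law Hamiltonian with decaying strength, so the ``local strength on any spin'' bound $dkJ$ must be replaced by a careful sum $\sum_{\bj\neq\bi}\|\bi-\bj\|_2^{-\alpha}$. Tracking the polylogarithmic factors from this sum through nested commutators of order $p+1$ without blowing up the dependence on $n$ beyond $\Delta+gq\log q$ is the delicate technical step; the leakage bound, by contrast, will largely mimic the $k$-local argument with $dkJ$ replaced by $g$ and $M$ interpreted as the per-term strength.
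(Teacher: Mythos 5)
Your high-level plan matches the paper's: reuse the low-energy framework of \cite{csahinouglu2021hamiltonian}, splitting the error into an effective low-energy dynamics part (giving the $(\Delta+gq\log q)^{1+1/p}$ term) and a leakage part, with the $k$-local per-site strength $dkJ$ replaced by the power-law quantity $g$. However, the step you defer as ``largely mimicking the $k$-local argument'' is precisely the content of the paper's proof, and it is not automatic. The $k$-local leakage lemmas (\Cref{lem:LeakExp}, \Cref{lem:LeakDiff}) rest on \Cref{lem:ProjBound}, whose proof in the bounded-degree setting does not directly cover a Hamiltonian in which every site couples to every other site. The paper re-establishes the exponential leakage suppression from scratch for dense $2$-local interactions: it expands $e^{sH}Ae^{-sH}$ via the Hadamard formula into nested commutators $K_\ell$, bounds the sum of norms of terms failing to commute at each level by $R+2(j-1)g$, sums the resulting series to obtain $\norm{e^{sH}Ae^{-sH}}\leq\norm{A}(1-2gs)^{-R/(2g)}$, and then converts this into $\norm{\plpg A\pll}\leq\norm{A}e^{-\lambda(\Lambda'-\Lambda-2R)}$ with $\lambda=(4g)^{-1}$. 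Without this lemma you cannot justify the replacement $dkJ\to g$ in the leakage rate, so your proposal has a genuine gap at its technical core. Your heuristic that the ``single-step leakage norm is $O(\sqrt{LMg})$'' is also not how that factor arises: $(LMg)^{1/2+1/(4p+2)}$ comes from balancing the $\alpha M\delta/\lambda \sim M\delta g$ contribution inside the effective cutoff $\Delta'$ against the $p$-th order Trotter error, exactly as $(L^2dMJ^2)^{1/2+1/(4p+2)}$ arises in \Cref{lem:ProdLow}.

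A smaller but still substantive error: you attribute the $\log q$ in $gq\log q$ to a ``logarithmic packing factor from summing $\norm{\bi-\bj}_2^{-\alpha}$ over the lattice.'' That lattice sum is what bounds $g$ itself (Lemma H.1 of \cite{childs2021theory}, giving $O(1)$, $O(\log n)$, or $O(n^{1-\alpha/D})$ depending on $\alpha$ versus $D$); it has nothing to do with the $q\log q$ factor. The latter comes from the telescoping choice of intermediate energy cutoffs $\Lambda_1\leq\cdots\leq\Lambda_q$ across the $q$ exponentials of a single Trotter step in \Cref{lem:ProdDisProj}, where each increment must satisfy $\Lambda_j-\Lambda_{j-1}\gtrsim\lambda^{-1}\log(q/\tau)$, accumulating to $q\log q/\lambda$ in $\Delta'$. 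Conflating these two sources would lead you to the wrong dependence if $\alpha$, $D$, or the formula order $p$ were varied.
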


The proof for the above theorem is provided in \Cref{sec:Power}. Compared with the best previously known complexity result for simulating the power-law model~\cite{childs2021theory}, the step complexity in \eqref{eq:PowerLow} shows an improvement concerning the $n$ dependence at $p=1$. Notice that the best result for full Hilbert space simulations is obtained by simulating each term $H_{\bi,\bj}$ separately using one term $H_l$. For $p>1$, we can also obtain some advantage for specific choices of parameters $M$ and $\Delta$ in this case as we assume that $H_l$'s consist of more than one term of $H_{\bi,\bj}$.

\subsubsection{Lower bound for low-energy simulations}
We now consider the necessary number of queries to the Hamiltonian $H$ to simulate any given state $\rho$ in the low-energy subspace $\leq\Delta$. In \Cref{sec:lower}, we show the following theorem as the lower bound on the quantum query complexity of simulations in the low-energy subspace:
\begin{theorem}[Informal, see~\Cref{thm:lower} for the formal version]\label{thm:lower_informal}
There exists a Hamiltonian $H$ such that simulating some state with constant error within the low-energy threshold $\leq\Delta$ for some chosen scaled time $\tau$ requires at least
$\Omega\left(\max\{\tau, \frac{\log(\Delta)}{\log\log(\Delta)}\}\right)$ queries to $H$.
\end{theorem}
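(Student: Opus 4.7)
The plan is to establish the two terms in the $\max$ by two separate hard-instance constructions, following the standard template for query lower bounds on Hamiltonian simulation. Throughout I would work in the standard sparse-access query model so that the lower bound is directly comparable to the upper bounds proved earlier.

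For the $\Omega(\tau)$ term I would adapt the no-fast-forwarding argument of Berry--Ahokas--Cleve--Sanders to the low-energy setting. The idea is to design a Hamiltonian $H$ whose restriction to its low-energy subspace $\leq\Delta$ acts like a (rescaled) nearest-neighbor hopping walk on a path of length proportional to $\tau$. By the standard lightcone bound, any unitary built from $q$ queries to $H$ can only propagate amplitude by distance $O(q)$, so approximating $e^{-iHt}$ on the appropriate low-energy initial state to constant error forces $q=\Omega(\tau)$. Adding a uniform shift $\Delta I$ together with a local penalty guarantees that the walk eigenstates genuinely lie below the threshold $\Delta$ without destroying the walk structure.

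For the $\Omega(\log(\Delta)/\log\log(\Delta))$ term I would use a reduction from \textsc{Parity}, in the same spirit as the logarithmic factor that appears in the tight lower bound for sparse Hamiltonian simulation. Choose $m = \Theta(\log\Delta/\log\log\Delta)$ and construct a Hamiltonian on $m + O(1)$ qubits of the form $H = H_0 + \Delta\cdot P$, where $P$ is a projector that lifts all ``illegal'' computational configurations above the low-energy threshold, and $H_0$ encodes the parity of a hidden $m$-bit string $x\in\{0,1\}^m$ into the relative phase accumulated by a designated ancilla during the evolution $e^{-iHt}$ for some short scaled time $\tau$. Any algorithm that simulates this evolution to constant spectral error within the low-energy subspace can be turned into a constant-error quantum algorithm deciding $\textsc{Parity}(x)$; since the polynomial-method bound of Beals--Buhrman--Cleve--Mosca--de~Wolf gives an $\Omega(m)$ quantum query lower bound for \textsc{Parity}, and each query to the oracle encoding $x$ is implementable by $O(1)$ queries to $H$, this yields the desired $\Omega(\log\Delta/\log\log\Delta)$ bound.

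The main obstacle is the correctness of this encoding: one must show that the \textsc{Parity} signal survives the projection onto the low-energy subspace, and that $(\tau,\Delta)$ are compatible. Concretely, the calibration is that the penalty $\Delta$ must be large enough that the Schrieffer--Wolff effective Hamiltonian on the low-energy sector agrees with the intended parity-dependent evolution up to corrections of order $\|H_0\|^2/\Delta$, while $m$ must be chosen so that the $m$-bit parity can still be read off from the resulting state with constant probability from the resulting phase. The scaling $m = \Theta(\log\Delta/\log\log\Delta)$ is exactly the largest allowed by this calibration, matching the $\log/\log\log$ factor that governs the query complexity of evaluating polynomials of this degree. Combining the two constructions (by running them on disjoint registers and taking a direct sum, or by a straightforward hybrid argument) yields the claimed $\Omega(\max\{\tau,\log\Delta/\log\log\Delta\})$ lower bound.
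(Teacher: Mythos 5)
Your overall strategy (a reduction from \textsc{Parity} in the spirit of the no--fast-forwarding theorem) is the right one and matches the paper's, but there is a genuine gap at the point where the low-energy constraint meets the hard instance, and your proposed fixes do not close it. The paper uses a \emph{single} construction: the parity path Hamiltonian $H_2$ of Berry et al., shifted by $\frac{D}{2}I$ so that it becomes positive semidefinite. After this shift the informative initial state $\ket{0,0}$ has energy $D/2$, i.e.\ it is \emph{not} in the low-energy subspace, and no local rearrangement of the spectrum fixes this: $\ket{0,0}$ is a coherent superposition across the full band of the $J_x$-type walk operator, so its mean energy is pinned at $\norm{H}/2$. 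Your proposal for the $\Omega(\tau)$ term (``a uniform shift $\Delta I$ together with a local penalty guarantees that the walk eigenstates genuinely lie below the threshold $\Delta$'') therefore does not work as stated --- a uniform shift moves every eigenvalue, and a penalty that pushes the walk band to the bottom of the spectrum necessarily suppresses the hopping amplitude and alters the very dynamics you are trying to lower-bound. The paper's resolution is different and is the crux of \Cref{thm:lower}: take as input the mixed state $\rho_{\text{in}}=(1-\tfrac{2\Delta}{D})\ket{\psi_{\min}}\bra{\psi_{\min}}+\tfrac{2\Delta}{D}\ket{0,0}\bra{0,0}$, which has energy exactly $\Delta$, and accept that the parity signal is attenuated by the factor $2\Delta/D$. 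Requiring the attenuated signal $\tfrac{2\Delta}{D}(\sin(t/2))^{D}$ to survive an $\epsilon$-accurate simulation is then precisely what produces the $\log(\Delta/\epsilon)/\log\log(\Delta/\epsilon)$ term, while the unbounded-error \textsc{Parity} bound gives $\Omega(D)\geq\Omega(\tau)$ from the same instance. The two terms of the $\max$ are thus not two independent constructions to be glued by a direct sum; they are two consequences of one calibration on a single Hamiltonian.

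Relatedly, your derivation of the $\log\Delta/\log\log\Delta$ term is not actually carried out. The Schrieffer--Wolff gadget $H=H_0+\Delta\cdot P$ is plausible in outline, but the claim that $m=\Theta(\log\Delta/\log\log\Delta)$ is ``exactly the largest allowed by this calibration, matching the $\log/\log\log$ factor that governs the query complexity of evaluating polynomials of this degree'' conflates two different things: the polynomial method gives only $\Omega(m)$ queries for $m$-bit parity, and the $\log/\log\log$ scaling must come from an explicit relation between $m$ and the signal strength. In the paper this relation is the inequality tying $(\sin(\tau/2D))^{D}$ to $\Delta\epsilon/(2D)$, which forces $D\gtrsim\log(\Delta/\epsilon)/\log\log(\Delta/\epsilon)$; your sketch never writes down the analogous inequality for the perturbative gadget, where both the second-order corrections of order $\norm{H_0}^2/\Delta$ and the magnitude of the parity-dependent phase would need to be tracked. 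I would recommend replacing the two-gadget plan by the paper's single mixed-state construction, which makes both terms fall out of one elementary computation.
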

The above theorem indicates that, similar to full space simulations~\cite{berry2007efficient}, simulating Hamiltonians in the low-energy subspace also requires a linear number of queries to $H$ in the simulation time $\tau$. In addition, we also obtain a logarithmic dependence on the threshold $\Delta$. Although this is far from the upper bound, where we require $\poly(\Delta)$ queries, we remark that this gap also exists in the full space simulation~\cite{berry2014exponential}. The main idea to prove this theorem is to show a Hamiltonian of which exact low-energy subspace simulations for any time $\tau>0$ enable one to compute the parity of a string, which requires a linear number of queries to the length of the string~\cite{beals2001quantum,farhi1998limit}.

\subsection{Open questions}\label{sec:Outlook}
Our paper leaves several open questions for future investigations:
\begin{itemize}
    \item Ref.~\cite{zhao2022hamiltonian} developed a theory of average error for Hamiltonian simulation with random inputs and showed that improvement is possible when considering the average-case instead of the worst-case error. It is interesting to explore if we can obtain further improvement if we consider the simulation error for low-energy random states.
    \item In this paper, we study the simulation of evolution $V$ on a low-energy quantum state $\rho$. It is also meaningful to study the complexity for estimating the expected value $\Tr(O{V\rho V^\dagger})$ for some observable $O$ instead of the full state {$V\rho V^\dagger$} for a low-energy state $\rho$.
    \item Can we propose some Hamiltonians that are widely considered in the field of quantum information such that we can obtain improvement in the step complexity of digital quantum simulation under the low-energy assumption?
    
    \item {Throughout the paper, the norm of the Hamiltonian is assumed to be bounded and its dimension is assumed to be finite. A natural open question is whether we can generalize our results concerning simulations in low-energy subspace to Hamiltonians with unbounded norm or infinite-dimensional Hilbert space.}
\end{itemize}

\paragraph{Roadmap.}
The rest of the paper is organized as follows. In \Cref{sec:Frame}, we provide the general framework for low-energy simulations. In particular, we recap some of the results in \cite{csahinouglu2021hamiltonian} and prove several important lemmas for our results. We clarify the difference between our work and previous works. We further prove the robustness of the improvement originates from the low-energy assumption in imperfect state preparation {due to thermalization}. In \Cref{sec:Appl}, we consider the applications of low-energy settings for different digital quantum simulation approaches and quantum models. In \Cref{sec:Rand}, we consider randomized approaches including qDRIFT and random permutation for simulating $k$-local Hamiltonians, and provide the proofs for \Cref{thm:qDRIFTLow}, \Cref{thm:RandPermLow}, and \Cref{coro:DoubLow}. In \Cref{sec:Sym}, we consider the symmetry protection approach and provide the proof for \Cref{thm:SymProtLow}. We show that the low-energy setting can provide improvements in power-law Hamiltonians in \Cref{sec:Power} and prove \Cref{thm:PowerLow}. Finally, we prove the query lower bound in \Cref{thm:lower_informal} for simulating dynamics in the low-energy subspace in \Cref{sec:lower}.


\section{The General Framework}\label{sec:Frame}

\paragraph{Notations. }We consider simulating the Hamiltonian $H=\sum_{l=1}^LH_l$ composed of $n$ particles. We assume that each term $H_l$ contains at most $M$ interaction terms and each interaction term acts on at most $k$ qubits. For each qubit, we assume that the strength of interaction between this qubit and the rest qubits is bounded by $g$ measured by the spectral norm. Without loss of generality, we assume that $H_l\geq 0$ for any $l$. We also assume the number of terms acting on each qubit is bounded by $d$ and the strength of each term is bounded by $J$. Thus, we have $g\leq dJ$. 

Throughout this paper, we use the notation $\tO(\cdot)$ which omits the polylogarithmic dependence on the parameters. Given positive constant $\Lambda$, we denote $\pll$ the projector to the subspace spanned by eigenstates of $H$ with energy smaller than $\Lambda$. We further denote $\plg=I-\pll$ as the projector to the orthogonal subspace. {We also summarize the notations used throughout this paper in Table~\ref{tab:notations}.}
\begin{table}[ht]
\centering
\resizebox{1.0\columnwidth}{!}{
{\begin{tabular}{ll|ll}
\hline
Symbol & Definition & Symbol & Definition \\ 
\hline
\hline
$H$ & Hamiltonian & $L$& $H=\sum_{l=1}^L H_l$ terms \\
$A$ & Operator to apply projection & $R_A$ & Interaction terms strength on $A$ \\ 
$M$ & Interaction terms number in $H_l$ & $J$ & Interaction term strength\\
$\Delta$ & Low-energy parameter & $\Delta'$ & $\geq\Delta$, effective low-energy norm \\
$d$ & Degree of Hamiltonian & $k$ & $k$-local Hamiltonian \\
$g$ & Interaction strength on single qubit & $\lambda$ &$(2gk)^{-1}$, abbreviation\\
$n$ & Number of particles & $t$ & Total evolution time\\
$\epsilon$ & Target simulation error & $r$ & Trotter number \\
$\Pi_{\leq\Lambda}$ & Low-energy subspace projector & $\delta$ & $t/r$, unit time step \\
$p$ & Order of product formula & $q$ & Terms number of product formula\\
$\lambda_h$ & $\max_{l}\norm{H_l},~\lambda_{\overline{h}}=\max_{l}\norm{\oH_l}$ & $\lambda_H$ & $\sum_{l=1}^L\norm{H_l},~\lambda_{\oH}=\sum_{l=1}^L\norm{\oH_l}$\\
$C_\mu$ & Symmetry transformation operator & $B_k$ & Matrix martingale, $C_k=B_k-B_{k-1}$\\
$\theta$ & Symmetry transformation parameter & $\chi$ & Failure probability\\
$V$ & Ideal time-evolution operator & $U$ & Unitary operator to approximate $V$\\ 
$G$ & Gate complexity & $S$ & Suzuki-Trotter formula, $W=S_p$ \\
\hline
\end{tabular}}
}
\caption{{The notation table.}}
\label{tab:notations}
\end{table}

\paragraph{Simulating $k$-local Hamiltonians in the low-energy subspace. } For $k$-local Hamiltonians with bounded interaction strengths, the following lemma holds as the backbone of the framework for analyzing the simulation performance in the low-energy subspace.
\begin{lemma}[Theorem 2.1 of~\cite{arad2016connecting}]\label{lem:ProjBound}
Given $H=\sum_{l=1}^LH_l$ defined above with parameters $g$, $L$, and $k$, and any operator $A$, the following inequality holds
\begin{align*}
\norm{\plpg A\pll}\leq\norm{A}\cdot e^{-\lambda(\Lambda^\prime-\Lambda-2R_A)},
\end{align*}
where $R_A$ is the strengths of interaction terms acting on $A$ and $\lambda=(2gk)^{-1}$. Here, $\Lambda'\geq\Lambda\geq 0$ are two positive values.
\end{lemma}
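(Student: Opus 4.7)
The plan is to establish the bound by an imaginary-time tilting of the operator $A$, combined with the locality of the $k$-local Hamiltonian $H$ to control the tilted operator norm. (I will assume the intended exponent is $\Lambda'-\Lambda-2R_A$, since the displayed expression has a typographical collapse.)

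First, for any parameter $\mu>0$, I would insert the identity $I=e^{\mu H}e^{-\mu H}$ on both sides of $A$ to write
\begin{align*}
\plpg A \pll = \bigl(\plpg e^{-\mu H}\bigr)\,\bigl(e^{\mu H} A e^{-\mu H}\bigr)\,\bigl(e^{\mu H}\pll\bigr).
\end{align*}
Because $H\geq 0$ and $\plpg,\pll$ are spectral projectors of $H$ onto the energy ranges $(\Lambda',\infty)$ and $[0,\Lambda]$ respectively, the two outer factors are immediately controlled by functional calculus: $\norm{\plpg e^{-\mu H}}\leq e^{-\mu\Lambda'}$ and $\norm{e^{\mu H}\pll}\leq e^{\mu\Lambda}$. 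Submultiplicativity of the spectral norm then reduces the whole statement to proving the conjugation bound
\begin{align*}
\norm{e^{\mu H} A e^{-\mu H}} \leq \norm{A}\cdot e^{2\mu R_A}
\end{align*}
for the specific choice $\mu=\lambda=1/(2gk)$; once this is in hand, multiplying the three factors gives exactly the claimed inequality $\norm{\plpg A\pll}\leq \norm{A}\cdot e^{-\lambda(\Lambda'-\Lambda-2R_A)}$.

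The main technical step is the tilted-operator bound. I would expand
\begin{align*}
e^{\mu H} A e^{-\mu H} = \sum_{m=0}^{\infty}\frac{\mu^m}{m!}\,\mathrm{ad}_H^m(A),\qquad \mathrm{ad}_H(X)=[H,X],
\end{align*}
and estimate each nested commutator using $k$-locality. Only those $H_l$'s whose support overlaps $\supp(\mathrm{ad}_H^{m-1}(A))$ contribute, and each application of $\mathrm{ad}_H$ inflates the support by at most $k$ sites, so that the operator at stage $m$ effectively sees a total interaction strength of at most $R_A + m g k$. A Lieb--Robinson-style count of distinct interaction chains of length $m$ then bounds the $m$-th term by a combinatorial expression that, after summing with the $1/m!$ weights and setting $\mu=1/(2gk)$, leaves only the leading linear contribution $e^{2\mu R_A}$ from the terms that directly overlap $\supp(A)$. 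This is the argument carried out in Theorem~2.1 of~\cite{arad2016connecting}; an equivalent route approximates the characteristic function of the high-energy subspace by Chebyshev polynomials in $H$ and uses the exponential growth of Chebyshev polynomials outside their stability interval to obtain the same decay constant.

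The principal obstacle lies in that last combinatorial estimate: one must show that the factorial blowup in the number of distinct nested-commutator terms of a $k$-local Hamiltonian is exactly compensated by the $1/m!$ in the exponential series, so that the only surviving piece of the exponent is linear in $R_A$. The choice $\lambda=1/(2gk)$ is precisely the radius of convergence dictated by the $k$-body interaction strength, which is why it appears as the sharp prefactor in the exponent of the lemma.
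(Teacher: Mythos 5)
Your proposal is correct and follows essentially the same route as the source: the paper itself imports this statement without proof from Theorem~2.1 of~\cite{arad2016connecting}, but reproduces exactly your argument---inserting $e^{-sH}e^{sH}$, bounding $\norm{\plpg e^{-sH}}$ and $\norm{e^{sH}\pll}$ by functional calculus, and controlling $\norm{e^{sH}Ae^{-sH}}$ via the Hadamard nested-commutator expansion with a locality-based count of interaction chains---in its treatment of the $2$-local power-law case in \Cref{sec:Power}. You also correctly identify the typographical collapse in the exponent, which should read $\Lambda'-\Lambda-2R_A$.
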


Based on \Cref{lem:ProjBound}, we can obtain some corollaries when $A$ is an evolution of some terms $H_l$. We list these lemmas in \Cref{app:Lemma}. Using these lemmas, Ref.~\cite{csahinouglu2021hamiltonian} obtained the following result concerning the performance of arbitrary product formulas of length $q$ in the low-energy subspace. Given $H=\sum_{l=1}^LH_l$ an $n$-qubit $k$-local Hamiltonian with parameters $M$, $J$, and $d$ ($LMJ=O(n)$), the simulation error between the target evolution $V(\delta)$ and a product formula $W(\delta)$ of $p$-th order error is bounded by~\cite{csahinouglu2021hamiltonian}:
\begin{align}\label{eq:SimErrRep}
\norm{(V(\delta)-W(\delta))\pdl}=O((L\Delta'\delta)^{p+1}),
\end{align}
where 
\begin{align}\label{eq:SimErrRepDelta}
\Delta'=\Delta+\frac{\alpha LM\delta}{\lambda}+\frac{q\log q}{\lambda}+\frac{q}{\lambda}\log\left(\frac{1}{J\delta}\right),
\end{align}
$\alpha$ is some constant, and $\lambda=(2Jdk)^{-1}$. We can further deduce the following lemma regarding the Trotter number required for low-energy simulations based on the above error decomposition.
\begin{lemma}[Eq.~(111) of~\cite{csahinouglu2021hamiltonian}]\label{lem:ProdLow}
Let $H=\sum_{l=1}^LH_l$ be an $n$-qubit $k$-local Hamiltonian with parameters $M$, $J$, and $d$ ($LMJ=O(n)$). By choosing the Trotter number
\begin{align}\label{eq:ProdLow}
r=\frac{t}{\delta}=\tO\left(\frac{t^{1+\frac1p}}{\epsilon^{\frac1p}}(L\Delta+LdkJq\log q)^{1+\frac1p}\right)+O\left(\frac{t^{1+\frac{1}{2p+1}}}{\epsilon^{\frac{1}{2p+1}}}(L^2dMJ^2)^{\frac12+\frac{1}{4p+2}}\right),
\end{align}
we can ensure that the $p$-th order product formula $W(\delta)$ provides an approximation within error $\epsilon$ in the low-energy subspace below $\Delta$.
\end{lemma}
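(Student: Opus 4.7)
The plan is to reduce the lemma to a single-step error accumulation, using the per-step bound \eqref{eq:SimErrRep}--\eqref{eq:SimErrRepDelta} that is already available from \cite{csahinouglu2021hamiltonian}, and then solve for $r$ by separately balancing the two parts of $\Delta'$ that scale differently in $\delta$.

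First I would telescope the error across the $r$ identical Trotter steps. Writing
\begin{align*}
\bigl(V(t)-W(\delta)^r\bigr)\pdl=\sum_{j=0}^{r-1}W(\delta)^{j}\bigl(V(\delta)-W(\delta)\bigr)V(\delta)^{r-j-1}\pdl,
\end{align*}
and using the crucial fact that $V(s)=e^{-iHs}$ commutes with every spectral projector of $H$, I can push $V(\delta)^{r-j-1}$ through $\pdl$, so each summand is $W(\delta)^{j}(V(\delta)-W(\delta))\pdl V(\delta)^{r-j-1}$. Since $\|W(\delta)\|,\|V(\delta)\|\le 1$, the triangle inequality together with \eqref{eq:SimErrRep} yields
\begin{align*}
\bigl\|\bigl(V(t)-W(\delta)^r\bigr)\pdl\bigr\|=O\!\left(r(L\Delta'\delta)^{p+1}\right).
\end{align*}

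Next I would split $\Delta'$ from \eqref{eq:SimErrRepDelta} as $\Delta'=\Delta_0+\Delta_1$, where $\Delta_0=\Delta+q\log q/\lambda+(q/\lambda)\log(1/(J\delta))$ is $\delta$-independent up to a polylogarithmic factor and $\Delta_1=\alpha LM\delta/\lambda$ is linear in $\delta$. Substituting $\delta=t/r$ turns the two contributions to the accumulated error into
\begin{align*}
r\,(L\Delta_0)^{p+1}(t/r)^{p+1}\lesssim \epsilon \qquad\text{and}\qquad r\,(L^2MJdk)^{p+1}(t/r)^{2p+2}\lesssim \epsilon,
\end{align*}
after using $1/\lambda=2Jdk$ in the second. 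Solving the first inequality for $r$ gives $r=\tilde{O}\bigl(t^{1+1/p}(L\Delta+LJdkq\log q)^{1+1/p}/\epsilon^{1/p}\bigr)$, which is the first term of \eqref{eq:ProdLow}, with the logarithmic $\log(1/(J\delta))$ factor absorbed into the $\tilde{O}(\cdot)$. Solving the second yields $r=O\bigl(t^{1+1/(2p+1)}(L^2dMJ^2)^{1/2+1/(4p+2)}/\epsilon^{1/(2p+1)}\bigr)$, the second term of \eqref{eq:ProdLow}. Taking the larger of the two is the stated choice.

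The main subtlety — and the only real obstacle — is the self-referential appearance of $\delta$ inside $\Delta'$ through $\log(1/(J\delta))$. I would handle this by a bootstrapping argument: plug in a coarse upper bound for $r$, observe that $\log(1/(J\delta))=O(\log r)$ is polylogarithmic in $t,L,\Delta,\epsilon^{-1},J,d,k$, and then verify that the refined choice of $r$ in \eqref{eq:ProdLow} is consistent. Once this is done, the lemma follows immediately from the accumulated per-step estimate above, matching Eq.~(111) of \cite{csahinouglu2021hamiltonian}.
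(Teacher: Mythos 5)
Your proposal is correct and follows essentially the same route as the paper, which simply accumulates the per-step low-energy bound \eqref{eq:SimErrRep}--\eqref{eq:SimErrRepDelta} over $r$ steps (valid precisely because $V(\delta)$ commutes with $\pdl$, as you note) and then balances the $\delta$-independent and $\delta$-linear parts of $\Delta'$ to obtain the two terms of \eqref{eq:ProdLow}, absorbing the $\log(1/(J\delta))$ self-reference into the $\tO(\cdot)$. The only cosmetic difference is that your second term carries an explicit factor of $k$ inside $(L^2dkMJ^2)$, which the lemma suppresses as a constant.
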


Furthermore, as the implementation of the $p$-th order Suzuki-Trotter formula requires {$O(2\cdot5^{p/2-1}L)$ gates~\cite{suzuki1990fractal,berry2007efficient}}, the gate complexity required is
\begin{align*}
G=\tO\left(\frac{Lt^{1+\frac1p}}{\epsilon^{\frac1p}}(L\Delta+LdkJq\log q)^{1+\frac1p}\right)+O\left(\frac{Lt^{1+\frac{1}{2p+1}}}{\epsilon^{\frac{1}{2p+1}}}(L^2dMJ^2)^{\frac12+\frac{1}{4p+2}}\right).
\end{align*}

Notice that $LMJ=O(n)$, the step complexity has an explicit $n^{\frac12\left(1+\frac{1}{2p+1}\right)}$ term. For $p=1$, the step complexity obtained by \Cref{lem:ProdLow} achieves a reduction concerning the dependence on $n$ compared to the performance for simulating high energy states. For general $p\geq1$ and {$d=O(n^{\frac{1}{2(p+1)}})$}, we can also obtained a similar improvement. 

As an illustration for the reduction of simulating dynamics of Hamiltonian assuming input states from the low-energy subspace, we perform numerical experiments to benchmark the product formulas in simulating $k$-local Hamiltonians. In particular, we study the following  homogeneous Heisenberg model without external fields: 
\begin{equation}\label{eq:Heisenberg}
    H = -\sum_{<i,j>}X_iX_j+Y_iY_j+Z_iZ_j,
\end{equation}
where $X_i, Y_i$, and $Z_i$ are Pauli operators acting on the $i$-th spin, and the summation is over every adjacent spin pair. Since the ground energy $E_{0l}<0$ for every $H_l$, we implement the shift $H_l\rightarrow H_l-E_{0l}I$ for every $H_l$. We empirically pick the threshold $\Delta = 4$ to make sure that the corresponding low-energy subspace is neither (close to) the full Hilbert space nor an empty subspace.

\begin{figure}[ht]
    \centering
    \begin{subfigure}[ht]{.2\linewidth}
        \includegraphics[width=\linewidth]{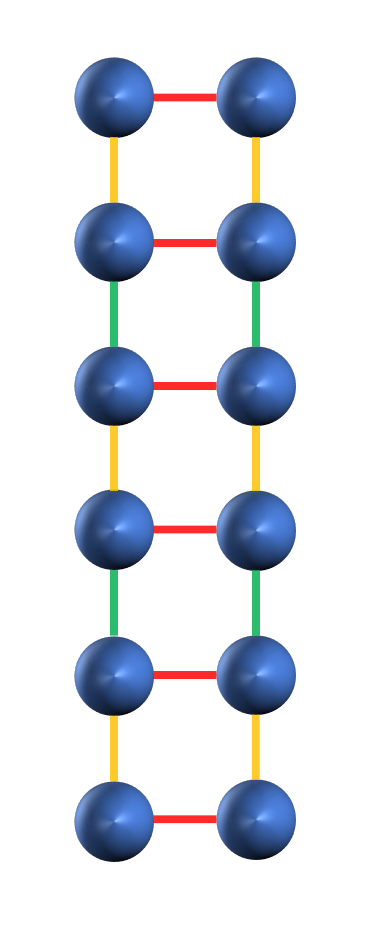}
    \end{subfigure}
    \begin{subfigure}[ht]{.75\linewidth}
        \includegraphics[width=\linewidth]{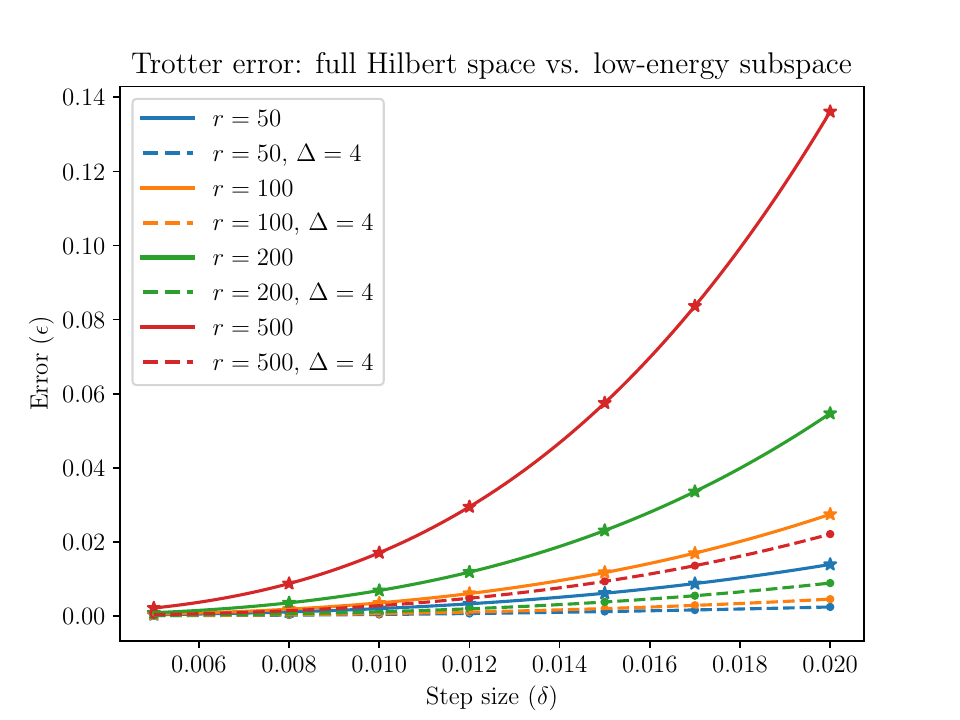}
    \end{subfigure}
    \caption{\textbf{Quantum simulation in full Hilbert space vs. low-energy subspace: Trotter error induced by product formulas for a $\mathbf{2\times 6}$ homogeneous Heisenberg spin-$\mathbf{1/2}$ model.} The solid lines denote simulation errors in the full Hilbert space while the dashed lines denote errors in the low-energy subspace. The plot distinguishes different Trotter numbers $r$ by the color of the lines.}
    \label{fig:trotter-2x6}
\end{figure}

As shown in the left-hand side of \Cref{fig:trotter-2x6}, the Hamiltonian in \eqref{eq:Heisenberg} is decomposed into three interaction terms: $H = H_1+H_2+H_3$, respectively represented by yellow, green, and red bonds. Here, a bond represents all the interactions between a pair of spins. In the right-hand side of \Cref{fig:trotter-2x6}, we plot the simulation {error $\epsilon$} as a function of single-step evolution time $\delta$ with respect to different Trotter numbers $r$. The time evolution operator $V(\delta)$ is approximated by the second-order product formula $S_2(\delta)$. 
We can observe that the projection to the low-energy subspace significantly reduces the error compared to the full Hilbert space for the same parameter configuration. {We further remark that the scaling of the numerical simulation regarding parameters fits perfectly with the theoretical results.}

\subsection{Low-energy simulation with imperfect state preparations {due to thermalization}}\label{sec:ImperfectPrep}
In the previous analysis, we focused on simulating the dynamics of low-energy input states for a given Hamiltonian. We assumed the quantum state is chosen in the low-energy subspace under a threshold $\Delta$. However, the state preparation on near-term devices is not perfect, resulting in a Gaussian tail distribution over higher energy states on the energy spectrum. It is natural to ask if the improvement obtained by low-energy simulations persists under the imperfect state preparation scenario. In the following, we provide an affirmative answer to this question.

{In real quantum systems and experiments, the actually prepared states usually deviate from the ideal states due to thermalization~\cite{banuls2020entanglement,ge2019faster,lu2021algorithms}, which results in a Gaussian tail on the spectrum distribution beyond the energy threshold $\Delta$}. We focus on pure states. and suppose the perfect input state $\ket{\psi_p}$ has energy $\Delta$, and denote the actual input state to be $\ket{\psi}$. After thermalization deviation, the mean energy and variance of $H$ in the state $\ket{\psi}$ is
\begin{align*}
E_\psi&=\bra{\psi}H\ket{\psi},\\
\sigma_\psi^2&=\bra{\psi}(H-{E_\psi})^2\ket{\psi}.
\end{align*}

In the case when $H$ is local and $\ket{\psi}$ has finite correlation length, $\sigma_\psi^2$ will scale as $O(n)$~\cite{banuls2020entanglement}. Assuming imperfect state preparation, we adapt {$E_\psi=\Delta$} and $\sigma_{\psi}^2\ll \Delta$ from current analog state preparation schemes~\cite{ge2019faster,lu2021algorithms}. Quantitatively, the support of the imperfect input state is a Gaussian distribution with parameter $\sigma_\psi$ around $\Delta$. By using the error representation in \eqref{eq:SimErrRep} and \eqref{eq:SimErrRepDelta}, we can compute the simulation error $\text{Err}_{\text{imp}}$ as:
\begin{align}\label{eq:EqqImp}
\text{Err}_{\text{imp}}&\propto\int_{-\infty}^{\infty}e^{-(x-\Delta)^2/\sigma_\psi^2}(L\Delta_x\delta)^{p+1}dx,
\end{align}
where
\begin{align*}
\Delta_x=x+\frac{\alpha LM\delta}{\lambda}+\frac{q\log q}{\lambda}+\frac{q}{\lambda}\log\left(\frac{1}{J\delta}\right).
\end{align*}

Recall that for standard Suzuki-Trotter formulas {obtained by the Yoshida method~\cite{Yoshida1990construcation}}, $p$ is an even number. Therefore, we can compute $\text{Err}_{\text{imp}}$ as
\begin{align*}
\text{Err}_{\text{imp}}\propto(L\delta\sigma_\psi)^{p+1}\left[\left(\frac{\Delta'}{\sigma_\psi}\right)^{(p+1)}+\sum_{i=1}^{p/2}\binom{p+1}{2i}\left(\frac{\Delta'}{\sigma_\psi}\right)^{(p+1-2i)}\cdot\prod_{j=1}^{i}\frac{1}{2(2j-1)}\right].
\end{align*}

As $\sigma_\psi\ll \Delta\leq \Delta'$ and $p$ a constant, we can discard the lower-order terms for $\Delta'$ and obtain
\begin{align*}
\text{Err}_{\text{imp}}\simeq(L\delta\Delta')^{p+1}\left(1+\frac{p(p+1)}{4}\left(\frac{\sigma_\psi}{\Delta'}\right)^2\right).
\end{align*}

We compare the above error decomposition with \eqref{eq:SimErrRep} and observe that we only get an additional $O(\Delta^{'(p-1)}\sigma_\psi^2)$ term when the state preparation is imperfect. Therefore, we conclude that for imperfect state preparation {due to thermalization}, the improvement obtained by the low-energy simulation assumption persists. We further claim that this robustness persists for all the analyses on \textit{any} following low-energy performance for various digital quantum simulation approaches. This is because the low-energy simulation error for any following algorithms has a polynomial scaling on $\Delta'$. We can thus employ similar calculations as \eqref{eq:EqqImp} and prove the robustness.


\section{Applications}\label{sec:Appl}
Intuitively, the resource requirement for different simulations reduces when the evolved quantum system does not consider high-energy states. Yet, \Cref{lem:ProdLow} only provides strict proof for standard product formulas and $k$-local Hamiltonians. In this work, we explore other digital quantum simulation algorithms and Hamiltonians. In this section, we provide the technical details for our results in applying low-energy analysis for different digital quantum simulation algorithms and Hamiltonians.

\subsection{Randomized product formulas}\label{sec:Rand}
In this subsection, we analyze the performance of randomized simulation approaches in the low-energy subspace. In particular, we provide the proof for \Cref{thm:qDRIFTLow}, \Cref{thm:RandPermLow}, and \Cref{coro:DoubLow}.
\vspace{-.5em}
\paragraph{qDRIFT algorithm. }We begin with recapping the qDRIFT algorithm in the below \Cref{algo:qDRIFT}.

\begin{algorithm}[ht]
\caption{The qDRIFT algorithm}
\label{algo:qDRIFT}
\begin{algorithmic}[1]
\REQUIRE Hamiltonian $H=\sum_{l=1}^LH_l$, parameter $\lambda_{\oH}=\sum_l\norm{\oH_l}$ where $\overline{H}_l=\pdpl H_l\pdpl$, evolution time $t$, and number of steps $r$
\STATE At each time interval $t/r$: evolve a random term in
Hamiltonian $U_i=e^{-i(t/r) X_i}$, where $X_i$ is randomly chosen as $\lambda_{\oH} H_l/\norm{\oH_l}$ with probability $\norm{\oH_l}/\lambda_{\oH}$
\ENSURE The unstructured (randomly generated)
product formula $U=U_r\ldots U_1$
\end{algorithmic}
\end{algorithm}

We now analyze the performance of the qDRIFT algorithm in the low-energy subspace and prove \Cref{thm:qDRIFTLow}. We still consider the $k$-local Hamiltonians and parameters $d$, $M$, {and $J$} defined above. We carefully pick $\Delta'\geq\Delta$ a value to be determined later. We denote $\oH=\pdpl H\pdpl$ and $\oH_l=\pdpl H_l\pdpl$. Concerning $H_l\geq 0$, we have $\bra{\psi}\pdpl H_l\pdpl\ket{\psi}\leq \sum_{l=1}^L \bra{\psi}\oH_l\ket{\psi}=\bra{\psi}\oH\ket{\psi}\leq\norm{\oH},~\forall \ket{\psi}$. Therefore, $\norm{\oH_l}\leq\norm{\oH}\leq\Delta'$.

Given a random sequence $U_rU_{r-1}\ldots U_1$ generated by \Cref{algo:qDRIFT}, our goal is to estimate the error
\begin{align*}
\norm{(U_r\ldots U_1-V)\pdl},
\end{align*}
where $V=e^{-iHt}$ is the ideal evolution. To begin with, we decompose this error into three terms:
\begin{align*}
&\quad\ \norm{(U_r\ldots U_1-V)\pdl}\nonumber\\
&\leq\norm{(U_r\ldots U_1-\overline{U}_r\ldots\overline{U}_1)\pdl}+\norm{(\overline{U}_r\ldots\overline{U}_1-{(\E[\overline{U}_i])^r})\pdl}+\norm{({(\E[\overline{U}_i])^r}-V)\pdl}\nonumber\\
&\leq\norm{(U_r\ldots U_1-\overline{U}_r\ldots\overline{U}_1)\pdl}+\norm{\overline{U}_r\ldots\overline{U}_1-{(\E[\overline{U}_i])^r}}+\norm{{(\E[\overline{U}_i])^r}-\overline{V}},
\end{align*}
where $\overline{V}=e^{-i\oH t}$ and $\overline{U}_i$ are obtained by running \Cref{algo:qDRIFT} on $\oH$ with corresponding $\oH_l$'s, and choosing $\overline{X}_i=\lambda_{\oH}\oH_{l}/\norm{\oH_{l}}$ in the $i$-th  step with probability $\norm{\oH_l}/\lambda_{\oH}$. Here, the first term is the projection error for the low-energy subspace estimation. The second term is the random fluctuation in the low-energy subspace. The third term is the deterministic bias in the low-energy subspace. 

We first consider the deterministic bias error term. For the projected Hamiltonian $\oH$, $\lambda_{\oH}=\sum_{l=1}^L\norm{\oH_l}=O(L\Delta')$ for some $\Delta'\geq\Delta$ to be fixed later. It is worthwhile to mention that for $k$-local Hamiltonians, $\lambda_{\oH}$ is explicitly independent of system size $n$ while $\lambda_H=O(n)$~\cite{childs2021theory}. According to Lemma 3.5 of Ref.~\cite{chen2021concentration}, for Hamiltonian $\oH=\sum_{l=1}^L{\oH_l}$ and $\overline{U}_i$, the deterministic bias term is bounded by 
\begin{align*}
\norm{{(\E[\overline{U}_i])^r}-\overline{V}}\leq r\norm{\E[\overline{U}_i]-\overline{V}^{1/r}}\leq O\left(r\cdot\frac{\lambda^{2}_{\oH} t^2}{r^2}\right)=O\left(\frac{\lambda^{2}_{\oH} t^2}{r}\right).
\end{align*}

Here $\overline{V}^{1/r}=e^{-iHt/r}=e^{iH\delta}$.

The next step is to provide a bound for the random fluctuation term $\norm{\overline{U}_r\ldots\overline{U}_1-{(\E[\overline{U}_i])^r}}$. For the randomized sequence generated by the qDRIFT algorithm projected into the low-energy subspace $\overline{U}_r\ldots\overline{U}_1$, we denote $B_k=(\E[\overline{U}_i])^{r-k}\overline{U}_k\cdots \overline{U}_1$. Therefore, $B_r=\overline{U}_r\ldots\overline{U}_1$ and $B_0=(\E[\overline{U}_i])^r$. We can verify that this sequence $\{B_k:k=0,\ldots,r\}$ satisfies the martingale properties as follows:
\begin{itemize}
\item \textit{Causality: } every $B_k$ completely depends on the previous information of $B_{k-1},\ldots,B_1$, i.e., the choice of $\overline{U}_{k-1},\ldots,\overline{U}_1$.\vspace{-.1em}
\item \textit{Status quo: } the expectation value for $B_k$ equals to $B_{k-1}$ conditioned on the previous history, i.e., $\E[B_k|B_{k-1},\ldots,B_1]=B_{k-1}$ for all $k=1,\ldots r$.
\end{itemize}

The property for such martingale features similar properties to an unbiased random walk. Based on Freedman's inequality and its application to martingales~\cite{huang2022matrix,oliveira2009spectrum,tropp2011freedman,pinelis1994optimum,kathuria2020concentration,tropp2015introduction}, we have \Cref{lem:MartConc} in \Cref{app:Lemma}. Starting from this lemma, we consider the martingale provided by the qDRIFT algorithm defined above. We define $C_k=B_k-B_{k-1}$. We observe that
\begin{align*}
\norm{C_k}&=\norm{B_k-B_{k-1}}\nonumber\\
&=\norm{\E[\overline{U}_i]^{r-k}(\overline{U}_{k}-\E[\overline{U}_i])\overline{U}_{k-1}\ldots \overline{U}_1}\nonumber\\
&\leq{(\norm{\E[\overline{U}_i]})^{r-k}}\norm{(\overline{U}_{k}-\E[\overline{U}_i])}\norm{\overline{U}_{k-1}\ldots \overline{U}_1}\nonumber\\
&{\leq(\E[\norm{\overline{U}_i}])^{r-k}}\norm{(\overline{U}_{k}-\E[\overline{U}_i])}\nonumber\\
&=2\delta\lambda_{\oH},
\end{align*}
{where the second inequality follows the convexity of mathematical expectation to arrive at the fourth line.}

We invoke \Cref{lem:MartConc} with $v$ bounded by $r\norm{C_k}^2$ and obtain that
\begin{align*}
\Pr[\norm{B_0-B_r}\geq\epsilon]\leq {2^{n+1}}\exp\left(-\frac{3r\epsilon^2}{24(t\lambda_{\oH})^2+4(t\lambda_{\oH})\epsilon}\right),
\end{align*}
where $2^n$ is the dimension of the Hilbert space. With probability at least $1-\chi$, we have
\begin{align}\label{eq:qDRIFT-dependence}
\norm{\overline{U}_r\ldots\overline{U}_1-{(\E[\overline{U}_i])^r}}\leq \tO\left(\sqrt{\frac{\lambda_{\oH}^2t^2}{r}\left(n+\log\left(\frac{1}{\chi}\right)\right)}\right).
\end{align}

Finally, we provide the bound for the projection error $\norm{(U_r\ldots U_1-\overline{U}_r\ldots\overline{U}_1)\pdl}$. {Intuitively, we have
\begin{align*}
&\norm{(U_r\ldots U_1-\overline{U}_r\ldots\overline{U}_1)\pdl}\leq\sum_{i=1}^r\norm{(U_i-\overline{U}_i)\pdpl}.
\end{align*}
This is because $[\overline{U}_i,\pdpl]=0$ for any $i=1,\ldots,r$ and $\pdpl\pdpl=\pdpl$. However, this yields no bound on this term $\norm{(U_r\ldots U_1-\overline{U}_r\ldots\overline{U}_1)\pdl}$. To address this issue, we consider the following two-phase decomposition of a qDRIFT algorithm. We write $r=r_1r_2$ with $r_1$ and $r_2$ to be fixed later. We decompose the qDRIFT sequence of $r$ steps into $r_1$ groups of ``random product formulas", each of $r_2$ steps. The full qDRIFT sequence can be written as $W_{r_1}\ldots W_1$ where $W_{i}=U_{i,r_2}\ldots U_{i,1}$ for $i=1,\ldots,r_1$ with $U_{i,j}$ the random qDRIFT step for $j=1,\ldots,r_2$. We thus decompose the simulation error as
\begin{align*}
\norm{(V-W_{r_1}\ldots W_1)\pdl}&=\norm{\left(\left(V^{1/r_1}\right)^{r_1}-W_{r_1}\ldots W_1\right)\pdl}\\
&\leq \sum_{i=1}^{r_1}\norm{(V^{1/r_1}-W_i)\pdl}\\
&\leq r_1\max_i\norm{(V^{1/r_1}-W_i)\pdl},
\end{align*}
where $V=e^{-iHt}$ the second line follows from the fact that $[V,\pdl]=[V^{1/r_1},\pdl]=0$. We then consider an arbitrary $W_i$ (denoted as $W$ in the following), and employ the above error decomposition as:
\begin{align*}
\norm{(V^{1/r_1}-W)\pdl}&=\norm{(V^{1/r_1}-U_{r_2}\ldots U_1)\pdl}\\
&\leq\norm{(U_{r_2}\ldots U_1-\overline{U}_{r_2}\ldots\overline{U}_1)\pdl}+\norm{\overline{U}_{r_2}\ldots\overline{U}_1-(\E[\overline{U}_i])^{r_2}}+\norm{(\E[\overline{U}_i])^{r_2}-\overline{V}^{1/r_1}}.
\end{align*}
According to the above calculation, with probability $1-\chi$ we have the second and the third terms as:
\begin{align*}
\norm{\overline{U}_{r_2}\ldots\overline{U}_1-(\E[\overline{U}_i])^{r_2}}&=\tO\left(\sqrt{\frac{\lambda_{\oH}^2(t/r_1)^2}{r_2}\left(n+\log\left(\frac{1}{\chi}\right)\right)}\right)\\
\norm{(\E[\overline{U}_i])^{r_2}-\overline{V}^{1/r_1}}&=O\left(\frac{\lambda^{2}_{\oH}(t/r_1)^2}{r_2}\right),
\end{align*}
where $\lambda_H=O(L\Delta')$.
}

We now provide a bound for the truncation error {$\norm{(U_{r_2}\ldots U_1-\overline{U}_{r_2}\ldots\overline{U}_1)\pdl}$}. We assume that $U_i$ evolves on $H_{i}$ for some $i=1,\ldots,r_2$ and denote $\delta_i'=\lambda_{\oH}\delta/\norm{\oH_i}$, {$\delta'=\max_i \delta_i'$}, {the error can be bounded as follows follows from \Cref{lem:LeakExp} and \Cref{lem:LeakDiff}:
\begin{align*}
\norm{(U_{r_2}\ldots U_1-\overline{U}_{r_2}\ldots\overline{U}_1)\pdl}\leq \exp\left(-\frac{1}{r_2}\left(\lambda(\Delta'-\Delta)-\alpha\abs{\bm{\delta'}}M-r_2\log r_2\right)\right),
\end{align*}
where $\alpha=eJ$, $\lambda=(2Jdk)^{-1}$, and $\bm{\delta'}=(\delta_{r_2}',\ldots,\delta_1')$ with $\abs{\bm{\delta'}}=O(Lt/r_1)$.
} 

Now, we are ready to prove \Cref{thm:qDRIFTLow}. We first consider the expectation error and prove \eqref{eq:qDRIFTLowExp}. In this case, we do not need to guarantee the random fluctuation term. {We compute the $r_2$ such that the projection error is of the same scale as the deterministic bias, i.e.,
\begin{align*}
\norm{(U_{r_2}\ldots U_1-\overline{U}_{r_2}\ldots\overline{U}_1)\pdl}\simeq \norm{{(\E[\overline{U}_i])^{r_2}}-\overline{V}^{1/r_1}}.
\end{align*}}

Based on the previous bound on these two terms of error, we conclude that
{\begin{align*}
\exp\left(-\frac{1}{r_2}\left(\lambda(\Delta'-\Delta)-\alpha\abs{\bm{\delta'}}M-r_2\log r_2\right)\right)=O\left(\frac{\lambda^{2}_{\oH}(t/r_1)^2}{r_2}\right).
\end{align*}}

To achieve this, we can fix the value of $\Delta'$ as follows:
{\begin{align*}
\Delta'=\Delta+O\left(\frac{1}{\lambda }\alpha \frac{t}{r_1} ML\right)+\frac{1}{\lambda}r_2\log r_2+\frac{1}{\lambda}r_2\log(\frac{r_2}{\lambda_{\oH}^2(t/r_1)^2}).
\end{align*}}

The total error is thus $O\left(\frac{\lambda_{\oH}^2 t^2}{r}\right)$ with $\lambda_{\oH}=O(L\Delta')$. Now, we consider three cases separately to derive the final step complexity. 

In the first case, $\Delta$ is the dominant term of $\Delta'$. In this case, we require
\begin{align*}
\frac{L^2\Delta^2t^2}{r}\leq\epsilon\to r\geq\frac{L^2\Delta^2t^2}{\epsilon}.
\end{align*}

In the second case, $O\left(\frac{1}{\lambda }\alpha \frac{t}{r_1} ML\right)$ is the dominant term of $\Delta'$. {We require
\begin{align*}
\frac{L^4\alpha^2 r_2^2M^2t^4}{\lambda^2r^3}\leq O(\epsilon)\to r\geq O( L^{4/3}M^{2/3}J^{4/3}r_2^{2/3}t^{4/3}\epsilon^{-1/3}).
\end{align*}}

{In the last case, $\frac{1}{\lambda}r_2\log r_2+\frac{1}{\lambda}r_2\log(\frac{r_2}{\lambda_{\oH}^2(t/r_1)^2})$ is the dominant term of $\Delta'$. Except for a mild polylogarithmic correction, this term is the same as $r_2\log r_2/\lambda$}. In this case, we can prove that the corresponding step complexity is not the dominant term. Optimizing over all choices of $r_2$, we have the step complexity minimized when $r_2=O(1)$.

In general, the Trotter step complexity is 
\begin{align*}
r=\tO\left(\frac{L^2(\Delta+dkJ)^2t^2}{\epsilon}\right)+O\left(\frac{(LJ)^{4/3}M^{2/3}t^{4/3}}{\epsilon^{1/3}}\right),
\end{align*}
which finishes the proof for \eqref{eq:qDRIFTLowExp}. The corresponding gate complexity is
\begin{align*}
G=\tO\left(\frac{L^3(\Delta+dkJ)^2t^2}{\epsilon}\right)+O\left(\frac{(L^{7/4}J)^{4/3}M^{2/3}t^{4/3}}{\epsilon^{1/3}}\right),
\end{align*}
given that $O(L)$ gates are required to implement a Suzuki-Trotter formula in each time step.

Now, we consider the step complexity required to ensure qDRIFT converges and prove \eqref{eq:qDRIFTLowProb}. If we want to further control the random fluctuation, {at $r_2=O(1)$} we require
\begin{align*}
\exp\left(-[\lambda(\Delta'-\Delta)-\alpha \delta' M]\right)=O\left(\sqrt{\frac{\lambda_{\oH}^2t^2}{r}\cdot\left(n+\log(\frac{1}{\chi})\right)}\right).
\end{align*}

To achieve this, we can fix the value of $\Delta'$ as follows:
\begin{align*}
\Delta'=\Delta+\frac{1}{\lambda}\alpha\delta' M+\frac{1}{2\lambda}\log(\frac{r^{3}}{\lambda_{\oH}^2t^2(n+\log(1/\chi))}).
\end{align*}

Following a similar procedure to the average-case performance, we derive the Trotter number to ensure a $1-\chi$ success probability for qDRIFT as
\begin{align}\label{eq:qDRIFTNum}
r= \tO\left(\frac{L^2(\Delta+dkJ)^2 t^2}{\epsilon^2}\left(n+\log(\frac{1}{\chi})\right)\right)+O\left(\frac{(LJ)^{4/3}M^{2/3}t^{4/3}}{\epsilon^{2/3}}\left(n+\log(\frac{1}{\chi})\right)^{1/3}\right),
\end{align}
which finishes the proof for \Cref{thm:qDRIFTLow}. In addition, the corresponding gate complexity is
\begin{align*}
G=\tO\left(\frac{L^3(\Delta+dkJ)^2 t^2}{\epsilon^2}\left(n+\log(\frac{1}{\chi})\right)\right)+O\left(\frac{(L^{7/4}J)^{4/3}M^{2/3}t^{4/3}}{\epsilon^{2/3}}\left(n+\log(\frac{1}{\chi})\right)^{1/3}\right).
\end{align*}

\paragraph{Random permutation. }In the previous qDRIFT algorithm, we only consider applying a random evolution in each time step, which is a random version of the first-order Lie-Trotter formula. A randomized approach for higher-order formulas, known as random permutation, was proposed in Ref.~\cite{childs2019faster}. As shown in \Cref{algo:RandPerm}, this algorithm randomly permutes the order of Hamiltonian terms within each block to $S_{p}^{\sigma_i}(\delta=t/r)$ for a permutation $\sigma_i\in\cS_L$ from the permutation group.
 
\begin{algorithm}[ht]
\caption{The random permutation algorithm}
\label{algo:RandPerm}
\begin{algorithmic}[1]
\REQUIRE Hamiltonian $H=\sum_{l=1}^LH_l$, evolution time $t$, and number of steps $r$
\STATE At each time interval $\delta=t/r$: evolve a random term in
Hamiltonian $U_i=S_{p}^{\sigma_i}(\delta)$ with a randomly chosen sequence with probability $1/L!$
\ENSURE The unstructured (randomly generated)
product formula $U=U_r\ldots U_1$
\end{algorithmic}
\end{algorithm}

We now consider the performance of this algorithm in the low-energy subspace and prove \Cref{thm:RandPermLow}. Similar to the previous section, we can decompose the error into three terms:
\begin{align*}
&\quad\ \norm{(U_r\ldots U_1-V)\pdl}\nonumber\\
&\leq\norm{(U_r\ldots U_1-\overline{U}_r\ldots\overline{U}_1)\pdl}+\norm{(\overline{U}_r\ldots\overline{U}_1-{(\E[\overline{U}_i])^r})\pdl}+\norm{({(\E[\overline{U}_i])^r}-V)\pdl}\nonumber\\
&\leq\norm{(U_r\ldots U_1-\overline{U}_r\ldots\overline{U}_1)\pdl}+\norm{\overline{U}_r\ldots\overline{U}_1-{(\E[\overline{U}_i])^r}}+\norm{{(\E[\overline{U}_i])^r}-\overline{V}},
\end{align*}
where $\overline{U}_i=S_{p}^{\sigma_i}(\delta)$ with Hamiltonian $\oH$ and $\overline{U}=e^{-i\oH t}$. The first term is the projection error for the low-energy subspace estimation. The second term is the random fluctuation in the low-energy subspace. The third term is the deterministic bias in the low-energy subspace. We introduce two new parameters $\lambda_h=\max_l\norm{H_l}=JML=O(n)$ and $\lambda_{\oh}=\max_l\norm{\oH_l}=\Delta'$ for $n$-qubit $k$-local Hamiltonian.

We begin with the deterministic bias error term. According to the direct calculation~\cite{childs2019faster,suzuki1985decomposition}, we can derive that (see, e.g. Eq. (C14) of Ref.~\cite{chen2021concentration})
\begin{align*}
\norm{\E[\overline{U}_i]-V^{1/r}}\leq O\left(\left(\frac{t\lambda_{\oh}}{r}\right)^{p+1}L^{p}\right).
\end{align*}

According to Lemma 3.5 of Ref.~\cite{chen2021concentration}, for Hamiltonian $\oH=\sum_{l=1}^L\oH_l$ and $\overline{U}_i$, the deterministic bias term is bounded by 
\begin{align*}
\norm{{(\E[\overline{U}_i])^r}-\overline{V}}\leq r\norm{\E[\overline{U}_i]-\overline{V}^{1/r}}\leq O\left(\left(\frac{L}{r}\right)^{p}\cdot(t\lambda_{\oh})^{p+1}\right).
\end{align*}

Here, $\overline{U}^{1/r}=e^{-iHt/r}=e^{iH\delta}$.

The next step is to provide a bound for the random fluctuation term $\norm{\overline{U}_r\ldots\overline{U}_1-{(\E[\overline{U}_i])^r}}$. We still use the Freedman inequality for martingales, and define $B_k=(\E[\overline{U}_i])^{r-k}\overline{U}_k\cdots \overline{U}_1$, $B_0=(\E[\overline{U}_i])^{r}$ and $C_k=B_k-B_{k-1}$. For the random permutation approach, each $C_k$ is bounded by
\begin{align*}
\norm{C_k}\leq\norm{\overline{U}_i-\E[\overline{U}_i]}{\leq}\norm{\overline{U}_i-\overline{V}^{1/r}}+\norm{\E[\overline{U}_i]-\overline{V}^{1/r}}=O\left(\left(\frac{tL\lambda_{\oh}}{r}\right)^{p+1}\right).
\end{align*}

Here, the last equation follows from the fact~\cite{childs2019faster,suzuki1991general} that $\norm{\overline{U}_i-\overline{V}^{1/r}}=O\left(\left(\frac{tL\lambda_{\oh}}{r}\right)^{p+1}\right)$ and $\norm{\E[\overline{U}_i]-\overline{V}^{1/r}}=O\left(\left(\frac{t\lambda_{\oh}}{r}\right)^{p+1}L^p\right)$. Therefore, we can derive from \Cref{lem:MartConc} that with probability at least $1-\chi$ the random fluctuation scales as
\begin{align}\label{eq:RanPer-dependence}
\norm{\overline{U}_r\ldots\overline{U}_1-{(\E[\overline{U}_i])^r}}\leq O\left(\sqrt{\frac{(tLh_\lambda')^{2p+2}}{r^{2p+1}}\cdot\left(n+\log(\frac{1}{\chi})\right)}\right).
\end{align}

Finally, we provide the bound for the projection error $\norm{(U_r\ldots U_1-\overline{U}_r\ldots\overline{U}_1)\pdl}$. According to \Cref{lem:ProdDisProj}, we decompose the projection error as follows:
\begin{align*}
\norm{(U_r\ldots U_1-\overline{U}_r\ldots\overline{U}_1)\pdl}\leq r\cdot 5\exp(-\frac{1}{q}(\lambda(\Delta'-\Delta)-\alpha'\delta ML-q\log q)),
\end{align*}
where $\lambda=(2Jdk)^{-1}$, $\alpha=eJ$, and $q(p)$ is the length of the product formula in each step depending only on $p$.

Now, we are ready to derive \Cref{thm:RandPermLow}. Similar to the previous section, we still consider two cases:

In the first case, we consider the average-case performance of the random permutation approach in the low-energy subspace and prove~\eqref{eq:RandPermLowExp}. In this case, we ignore the random fluctuation and compute the step complexity $r$ such that the projection error is of the same scale as the deterministic bias, i.e.,
\begin{align*}
\norm{(U_r\ldots U_1-\overline{U}_r\ldots\overline{U}_1)\pdl}\simeq \norm{{(\E[\overline{U}_i])^r}-\overline{V}}.
\end{align*}

We can derive the bound on $\Delta'$ and $r$ as
\begin{align*}
\Delta'&=\Delta+\frac{1}{\lambda}\alpha \delta ML+\frac{q}{\lambda}\log q+\frac{q}{\lambda}\log\left(\frac{r^{p+1}}{t^{p+1}\lambda_{\oh}^{p+1}L^{p}}\right),\\
r&=O\left(\frac{Lt^{1+1/p}(\Delta+dkJq\log q)^{1+1/p}}{\epsilon^{1/p}}\right)+O\left(\frac{Lt^{1+1/(2p+1)}(dkMJ^2)^{\frac12+\frac{1}{4p+2}}}{\epsilon^{1/(2p+1)}}\right).
\end{align*}

This finishes the proof for~\eqref{eq:RandPermLowExp}, which provides an $O(L^{1/p})$ or $O(L^{1/(2p+1)})$ reduction on the step complexity compared to standard product formulas~\cite{csahinouglu2021hamiltonian}. This means that as $p$ increases, the random permutation will eventually lose its reduction---a similar observation to that in Ref.~\cite{childs2019faster}. The gate complexity requires an $O(L)$ overhead from the Trotter number $r$.

In the second case, we want to further control the random fluctuation. Thus we require
\begin{align*}
\norm{(U_r\ldots U_1-\overline{U}_r\ldots\overline{U}_1)\pdl}\simeq \norm{\overline{U}_r\ldots\overline{U}_1-{(\E[\overline{U}_i])^r}}.
\end{align*}

We can derive the bound on $\Delta'$ and $r$ to ensure that the simulation error is bounded by $\epsilon$ with probability at least $1-\chi$ as
\begin{align}
\Delta'&=\Delta+\frac{1}{\lambda}\alpha \delta ML+\frac{q}{\lambda}\log q+\frac{q}{2\lambda}\log\left(\frac{r^{2p+3}(n+\log(1/\chi))}{t^{2p+2}\lambda_{\oh}^{2p+2}L^{2p+2}}\right),\\
r&=O\left(\frac{(Lt)^{1+1/(2p+1)}(\Delta+dkJq\log q)^{1+1/(2p+1)}(n+\log(1/\chi)^{1/(2p+1)})}{\epsilon^{2/(2p+1)}}\right)\nonumber\\
\label{eq:RandPermNum}&\quad+O\left(\frac{(Lt)^{1+1/(4p+3)}(dkMJ^2)^{\frac{2p+2}{4p+3}}(n+\log(1/\chi)^{1/(4p+3)})}{\epsilon^{2/(4p+3)}}\right),
\end{align}
which finishes the proof for \eqref{eq:RandPermLowProb}. The overhead for controlling randomized fluctuation will quickly disappear as the order $p$ increases. Yet, the advantage of this randomized approach compared to deterministic product formulas would also disappear correspondingly. Similarly, The gate complexity requires an $O(L)$ overhead from the Trotter number $r$.

\paragraph{Doubling the order of product formula. }In Ref.~\cite{cho2022doubling}, Cho, Berry, and Huang proposed an approach to double the order of digital quantum simulations via the randomized product formulas. We also consider the performance of this approach in the low-energy subspace and obtain \Cref{coro:DoubLow}.  We refer to Ref.~\cite{cho2022doubling} for a detailed description of this algorithm. As the algorithm requires the implementation of some random well-designed unitaries $\left\{U_h^{(l)}=\exp(\alpha_{h,l}H_{h}^{(l)})\right\}$ according to some well-designed probability $\{p_{h,l}\}$, we have to assume that $U_h^{(l)}$'s also satisfy the properties of $\exp(-iH_l\delta)$'s. We still consider the error decomposition as follows:
\begin{align*}
&\quad\ \norm{(U_r\ldots U_1-V)\pdl}\nonumber\\
&\leq\norm{(U_r\ldots U_1-\overline{U}_r\ldots\overline{U}_1)\pdl}+\norm{\overline{U}_r\ldots\overline{U}_1-{(\E[\overline{U}_i])^r}}+\norm{{(\E[\overline{U}_i])^r}-\overline{V}}.
\end{align*}

According to Theorem 2 of Ref.~\cite{cho2022doubling} and \textit{the assumption that $r\geq (5^{p/2-1}+\frac12)L\lambda_{\oh}t$}, we can first compute the deterministic bias as
\begin{align*}
\norm{{(\E[\overline{U}_i])^r}-\overline{V}}\leq O\left(\frac{(L\lambda_{\oh}t)^{2p+2}}{r^{2p+1}}\right),
\end{align*}

For the random fluctuation, we still define $B_k=\overline{U}_r\ldots\overline{U}_1$, $B_0=(\E[\overline{U}_i])^r$, and $C_k=B_k-B_{k-1}$. We can compute the norm of $C_k$ as
\begin{align*}
\norm{C_k}\leq\norm{\overline{U}-\E[\overline{U}_i]}\leq O\left(\frac{(L\lambda_{\oh}t)^{p+1}}{r^{p+1}}\right).
\end{align*}
Therefore, we can derive that with probability at least $1-\chi$, the random fluctuation scales as
\begin{align*}
\norm{\overline{U}_r\ldots\overline{U}_1-{(\E[\overline{U}_i])^r}}\leq O\left(\sqrt{\frac{(tL\lambda_{\oh})^{2p+2}}{r^{2p+1}}\cdot\left(n+\log(\frac1\chi)\right)}\right).
\end{align*}

For the projection error, we have
\begin{align*}
\norm{(U_r\ldots U_1-\overline{U}_r\ldots\overline{U}_1)\pdl}\leq r\cdot 5\exp(-\frac{1}{q}(\lambda(\Delta'-\Delta)-\alpha'\delta ML-q\log q)),
\end{align*}
where $\lambda=(2Jdk)^{-1}$, $\alpha=eJ$, and $q(p)$ is the length in each step depending only on $p$.

To prove \Cref{coro:DoubLow}, we consider the step complexity required to ensure the average-case performance and the convergence of the algorithm, respectively. In the first setting, we ignore the random fluctuation and compute the step complexity $r$ such that the projection error is of the same scale as the deterministic bias, i.e.,
\begin{align*}
\norm{(U_r\ldots U_1-\overline{U}_r\ldots\overline{U}_1)\pdl}\simeq \norm{{(\E[\overline{U}_i])^r}-\overline{V}}.
\end{align*}

We can derive the bound on $\Delta'$ and $r$ as
\begin{align*}
\Delta'&=\Delta+\frac{1}{\lambda}\alpha'\delta ML+\frac{q}{\lambda}\log q+\frac{q}{\lambda}\log\left(\frac{r^{2p+2}}{t^{2p+2}\lambda_{\oh}^{2p+2}L^{2p+2}}\right),\\
r&=O\left(\frac{(Lt)^{1+1/(2p+1)}(\Delta+dkJq\log q)^{1+1/(2p+1)}}{\epsilon^{1/(2p+1)}}\right)+O\left(\frac{(Lt)^{1+1/(4p+3)}(dkMJ^2)^{\frac{2p+2}{4p+3}}}{\epsilon^{1/(4p+3)}}\right).
\end{align*}

This finishes the proof for $r_{\text{exp}}$ in \eqref{eq:DoubLowExp}. For $r_{\text{prob}}$, we follow a similar proof to \Cref{thm:RandPermLow} to obtain the same result. The gate complexity required correspondingly is
\begin{align*}
G&=O\left(\frac{L(Lt)^{1+1/(2p+1)}(\Delta+dkJq\log q)^{1+1/(2p+1)}}{\epsilon^{1/(2p+1)}}\right)+O\left(\frac{L(Lt)^{1+1/(4p+3)}(dkMJ^2)^{\frac{2p+2}{4p+3}}}{\epsilon^{1/(4p+3)}}\right).
\end{align*}

\begin{figure}[htbp]
    \centering
    \hspace{.05\linewidth}
    \begin{subfigure}[ht]{.4\linewidth}
        \centering
        \includegraphics[height = 3.3cm]{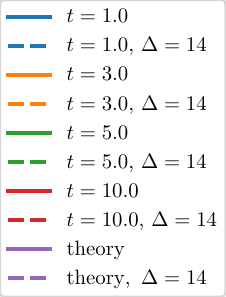}
    \end{subfigure}\hfill
    \begin{subfigure}[ht]{.4\linewidth}
        \centering
        \hspace{1em}\includegraphics[height = 3.6cm]{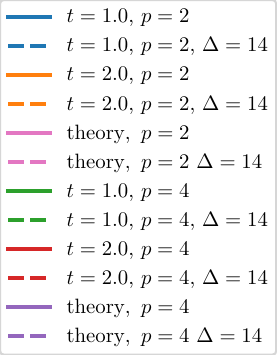}
    \end{subfigure}\\[1.3em]
    \begin{subfigure}[ht]{.47\linewidth}
        \includegraphics[width=\linewidth]{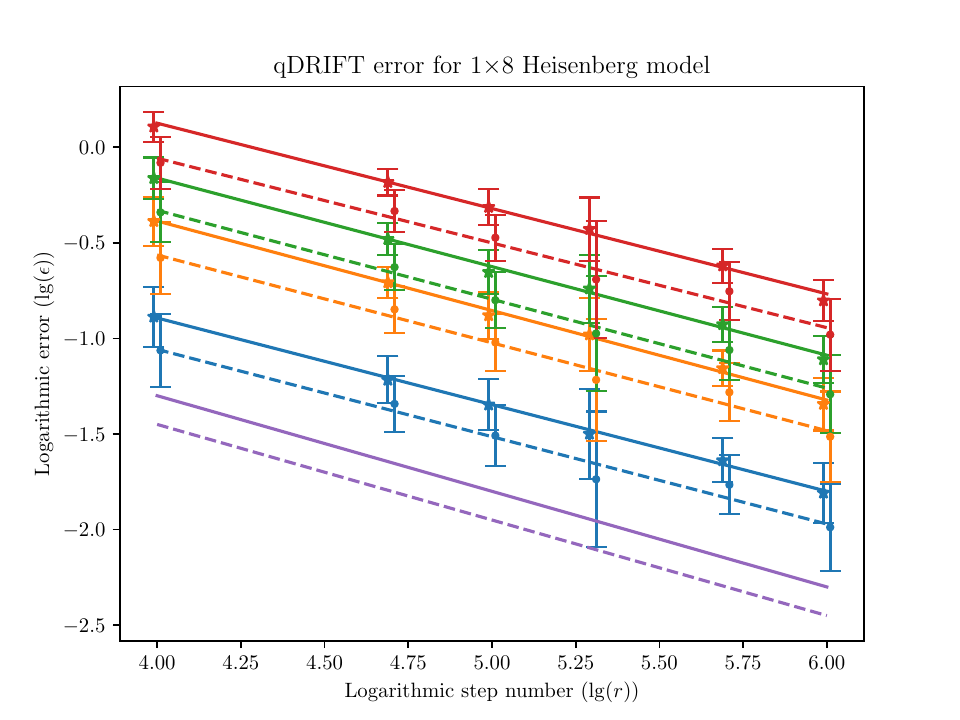}
        \caption{qDRIFT for system size $n=8$}
        \label{fig:qdrift-1x8}
    \end{subfigure}\hfill
    \begin{subfigure}[ht]{.47\linewidth}
        \includegraphics[width=\linewidth]{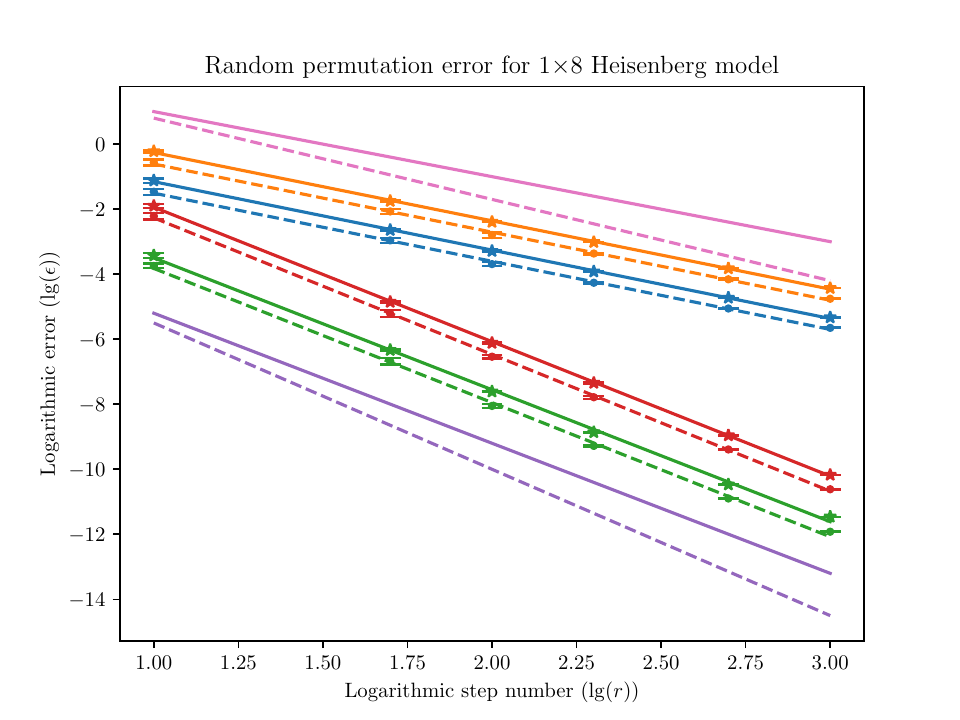}
        \caption{Random permutation for system size $n=8$}
        \label{fig:permutation-1x8}
    \end{subfigure}\\[1em]
    \begin{subfigure}[ht]{.47\linewidth}
        \includegraphics[width=\linewidth]{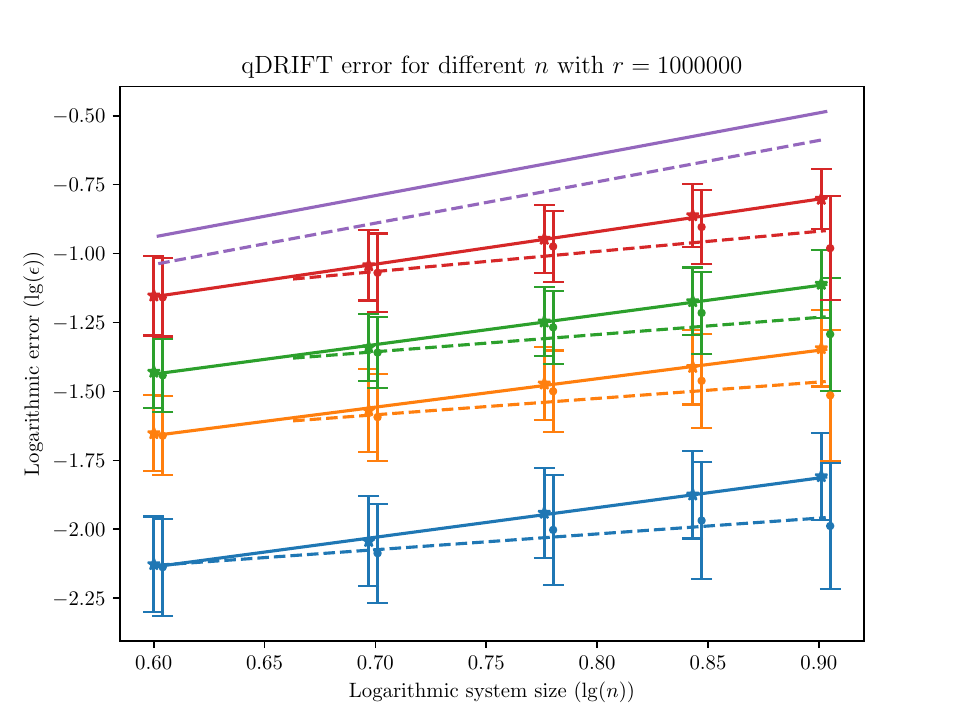}
        \caption{qDRIFT for step number $r=10^6$}
        \label{fig:qdrift-n}
    \end{subfigure}\hfill
    \begin{subfigure}[ht]{.47\linewidth}
        \includegraphics[width=\linewidth]{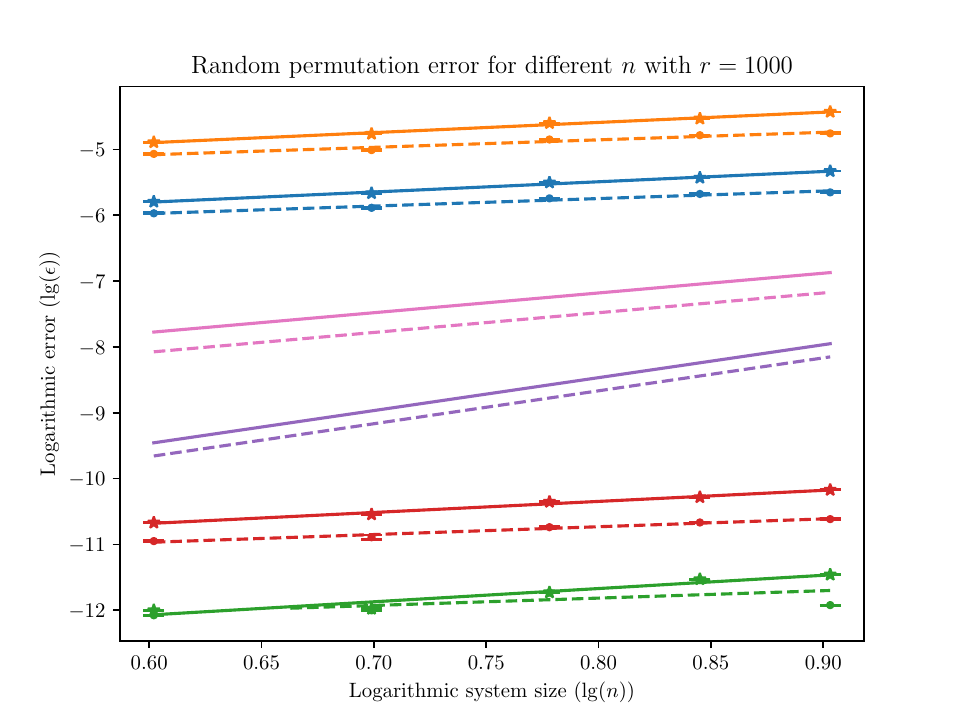}
        \caption{Random permutation for step number $r=10^3$}
        \label{fig:permutation-n}
    \end{subfigure}
    \caption{\textbf{Full Hilbert space vs. low-energy subspace: simulation error induced by the qDRIFT and random permutation algorithms for localized Heisenberg spin-$\mathbf{1/2}$ models.} The solid lines denote errors in the full Hilbert space while the dashed lines denote errors in the low-energy subspace. We distinguish the total evolution time $t$ or the order of the product formula $p$ by the color of the lines. {We also compare the numerical and the theoretical error scalings}. (a) We plot the worst-case and the low-energy simulation errors of the qDRIFT algorithm as a function of step number $r$. (b) The same as the upper left subfigure, but for the random permutation approach at $p=2$ and $p=4$. (c) We plot the worst-case and the low-energy simulation errors of the qDRIFT algorithm as a function of system size $n$. (d) The same as the lower left subfigure, but for the random permutation approach at $p=2$ and $p=4$.}
    \label{fig:Random}
\end{figure}

To benchmark the performance of random Hamiltonian simulation approaches in the low-energy subspace, we further carry out extensive numerical experiments concerning simulating the one-dimensional localized homogeneous Heisenberg models of different system sizes $n\in[4,8]$ without external fields as follows:
\begin{equation}\label{eq:1xn}
    H = \underbrace{-\sum_{i=1}^{n-1}X_iX_{i+1}}_{H_1}\underbrace{-\sum_{i=1}^{n-1}Y_iY_{i+1}}_{H_2}\underbrace{-\sum_{i=1}^{n-1}Z_iZ_{i+1}}_{H_3},
\end{equation}
where $X_i$, $Y_i$, and $Z_i$ are Pauli operators acting on the $i$-th spin, and the summation is over every adjacent spin pair. The Hamiltonian in \eqref{eq:1xn} is decomposed into three interaction terms: $H=H_1+H_2+H_3$, containing all $X$, $Y$, and $Z$ terms, respectively, as indicated by the curly brackets. Since the ground energy $E_{0l}<0$ for every $H_l$, we implement the shift $H_l\rightarrow H_l-E_{0l}I$ for every $H_l$. We empirically pick $\Delta = 14$ (the shifted ground energy of the largest system $n=8$) such that for system sizes $n\in[4,8]$, the low-energy subspace covers neither (nearly) the whole spectrum nor a vacant subspace. {We also provide the theoretical scaling for simulation error in the figures. We remark that as we can not theoretically calculate the constant coefficient, we \emph{cannot} compare the \emph{exact values} of theoretical and numerical simulation error. In the remaining figures, we compare the \emph{scaling} between these two cases instead.}

In \Cref{fig:Random}, we plot the qDRIFT error and the random permutation error as a function of step number $r$ and system size $n$ in four subfigures.
We fix the total evolution time $t$ and randomly sample $10$ sequences for each choice of parameters. For the qDRIFT approach, in order to distinguish between closely spaced error bars, we made a tiny shift to the abscissa of nearby data points. 
We can observe that the projection to the low-energy subspace reduces the error compared to the full Hilbert space for the same parameter configuration. We plot the numerical results for the qDRIFT approach in the left two figures of \Cref{fig:Random}. We estimate that $\epsilon=O(nr^{-1/2})$ for full Hilbert space simulations and $\epsilon=O(n^{2/3}r^{-1/2})$ for low-energy simulations. The numerical performances for both cases have better scaling concerning the $n$ dependence than theoretical bounds as the latter one gives $\epsilon=O(n^{3/2}r^{-1/2})$ and $\epsilon=O(\max\{n^{1/2}r^{-1/2}, n^{3/2}r^{-3/2}\})$ for the two cases as shown in Ref.~\cite{chen2021concentration} and \eqref{eq:qDRIFTNum}. We remark that we take the random fluctuation into consideration here as the variance is much larger than the average value for simulation errors. In the right two figures, we plot the numerical results for the random permutation approach. We estimate that $\epsilon=O(n^{p/2+1}r^{-p})$ and $\epsilon=O(nr^{-p})$ for worst-case and low-energy simulations, which are again better than the theoretical bounds of $\epsilon=O(n^{p+1}r^{-p})$ and $\epsilon=O(\max\{n^{1/2}r^{-p-1/2},n^{p+3/2}r^{-2p-3/2}\})$ in Ref.~\cite{childs2019faster} and \eqref{eq:RandPermNum}. We can conclude that numerical error for the low-energy subspace has a better performance than the theoretical bounds in numerics and provides a steady reduction in the simulation error with various parameters and settings.

\subsection{Symmetry protections}\label{sec:Sym}
In this subsection, we consider an alternative approach, symmetry protection~\cite{tran2021faster}. Given a Hamiltonian $H$, we consider the group of unitary transformations denoted by $\cC$ such that $[H,C]\equiv 0$ for each unitary $C$ chosen from $\cC$. As mentioned in \Cref{sec:Results}, we implement the Lie-Trotter formula $S_1(\delta)$ in each time step with a symmetry transformation $C\in\cC$. Explicitly, the simulation for $V=e^{-iHt}$ reads 
\begin{align*}
U(t)=\prod_{\mu=1}^rC_\mu^\dagger S_1(\delta)C_\mu\approx V(t),
\end{align*}
where $V(t)=e^{-iHt}$ is the ideal evolution. 

We briefly recap the results in Ref.~\cite{tran2021faster} for the intuition of reducing simulation error using symmetry protection. Intuitively, $U(t)$ can be represented as an evolution on a slightly different Hamiltonian $H_{\text{eff}}$ and
\begin{align*}
U(t)=\prod_{\mu=1}^re^{-iC_\mu^\dagger H_{\text{eff}}C_\mu \delta}=\prod_{\mu=1}^re^{-i(H+C_\mu^\dagger \kappa C_k)\delta}\approx e^{-i\left(H+\frac{1}{r}\sum_{\mu=1}^rC_\mu^\dagger \kappa C_\mu\right)t}\approx V(t),
\end{align*}
where $\kappa=H_{\text{eff}}-H$.

For $k$-local Hamiltonians, we define the following quantities~\cite{childs2021theory}:
\begin{align*}
&\alpha=\sum_{\gamma_1=1}^L\sum_{\gamma_2=\gamma_1+1}^L\norm{[H_{\gamma_2},H_{\gamma_1}]}=O(n),\\
&\beta=\sum_{\gamma_1=1}^L\sum_{\gamma_2=\gamma_1+1}^L\sum_{\gamma_3=\gamma_2}^L\norm{[H_{\gamma_3},[H_{\gamma_2},H_{\gamma_1}]]}=O(n),\\
&\gamma=\sum_{\gamma_1=1}^L\sum_{\gamma_2=\gamma_1+1}^L\norm{[H,[H_{\gamma_2},H_{\gamma_1}]]}=O(n^2).
\end{align*}
Ref.~\cite{tran2021faster} obtains the following result concerning the simulation error in each time step:
\begin{lemma}[Lemma 7 of~\cite{tran2021faster}]
For all $\delta$ such that $\beta\delta\leq2\alpha$ and $\alpha^2\delta\leq\beta+\gamma$, the Trotter error in each step is represented by
\begin{align*}
V(\delta)-S_1(\delta)=V(\delta)v_0\frac{\delta^2}{2}+\cV(\delta),
\end{align*}
where $v_0=\sum_{\gamma_1=1}^L\sum_{\gamma_2=\gamma_1+1}^L[H_{\gamma_2},H_{\gamma_1}]$ and $\cV(\delta)\leq\iota \delta^3$ with $\iota=5(\gamma+\beta)/6$.
\end{lemma}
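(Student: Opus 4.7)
The plan is to reduce the claim to a Taylor expansion of the operator-valued function $G(s) := V(-s) S_1(s)$ around $s=0$. Since $V(\delta)$ is unitary, the identity $V(\delta) - S_1(\delta) = V(\delta)\bigl[I - G(\delta)\bigr]$ shows that it suffices to establish an expansion $G(\delta) = I - \tfrac{v_0}{2}\delta^2 + R(\delta)$ with $\|R(\delta)\| \leq \tfrac{5(\beta+\gamma)}{6}\delta^3$; setting $\cV(\delta) := -V(\delta) R(\delta)$ then realizes both the claimed decomposition and the claimed norm bound.

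To pin down the leading-order terms I would differentiate directly. Since $V(-s) = e^{iHs}$ commutes with $H$, one has $G'(s) = V(-s)\bigl[iH\, S_1(s) + S_1'(s)\bigr]$, so $G'(0) = iH - iH = 0$. Differentiating again gives $G''(s) = V(-s)\bigl[-H^2 S_1(s) + 2iH\, S_1'(s) + S_1''(s)\bigr]$, and substituting the explicit quadratic coefficient
\begin{align*}
S_1''(0) = -\sum_{l} H_l^2 - 2 \sum_{l_1 < l_2} H_{l_2} H_{l_1}
\end{align*}
together with the expansion $H^2 = \sum_{l} H_l^2 + \sum_{l_1 < l_2}(H_{l_1}H_{l_2} + H_{l_2}H_{l_1})$ collapses the non-commuting sums into $G''(0) = -v_0$. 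Taylor's theorem with integral remainder then yields
\begin{align*}
G(\delta) = I - \frac{v_0}{2}\delta^2 + \int_0^\delta \frac{(\delta-s)^2}{2}\, G'''(s)\, ds,
\end{align*}
reducing the problem to a uniform bound $\|G'''(s)\| \leq 5(\beta+\gamma)$ on $[0,\delta]$.

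To control $G'''(s)$, I would differentiate $G''(s)$ once more and then rearrange the resulting products so that the free Hamiltonian $H$ appears only inside a commutator with one of the $H_l$'s. This is the step that upgrades the trivial $O(\|H\|^3)$ bound into the desired bound in terms of $\beta$ and $\gamma$: triple commutators of the form $[H_{\gamma_3},[H_{\gamma_2},H_{\gamma_1}]]$ coming from differentiating $S_1(s)$ alone contribute an $O(\beta)$ piece, while terms where $H$ sits outermost contribute an $O(\gamma)$ piece via sums $\sum \|[H,[H_{\gamma_2},H_{\gamma_1}]]\|$. Cross terms from the combined propagation of $V(-s)$ and $S_1(s)$ produce products of two second-order commutators with norm of order $\alpha^2 \delta^4$; these are absorbed into the cubic bound using the hypothesis $\alpha^2 \delta \leq \beta + \gamma$, while $\beta\delta \leq 2\alpha$ is used to keep the propagators in a regime where the rearrangement remains quantitatively valid throughout $[0,\delta]$.

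The main obstacle is the combinatorial bookkeeping required both to cancel the non-commutator contributions in $G'''(s)$ and to track the precise prefactor $5/6$. A cleaner way I would organize this accounting is through the Baker--Campbell--Hausdorff expansion $S_1(s) = \exp\bigl(-iHs - \tfrac{v_0}{2}s^2 + s^3\Phi_3 + O(s^4)\bigr)$, where $\Phi_3$ is a known linear combination of double commutators of the $H_l$'s with explicit Dynkin coefficients. A second application of BCH to $G(s) = e^{iHs} \cdot S_1(s)$ produces $G(s) = \exp\bigl(-\tfrac{v_0}{2}s^2 + s^3(\Phi_3 - \tfrac{i}{4}[H,v_0]) + O(s^4)\bigr)$, in which the cubic term splits transparently into a $\beta$-controlled piece (from $\Phi_3$) and a $\gamma$-controlled piece (from $[H,v_0]$). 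Expanding the outer exponential via $\|e^X - I - X\| \leq \tfrac{1}{2}\|X\|^2$ and integrating against $\int_0^\delta (\delta-s)^2/2\, ds = \delta^3/6$ then delivers the stated prefactor $\iota = 5(\beta+\gamma)/6$.
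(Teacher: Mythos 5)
This statement is quoted verbatim from Lemma~7 of Ref.~\cite{tran2021faster}; the paper you are reading does not prove it, so the comparison below is with the argument in that reference rather than with an in-paper proof.

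Your reduction to $G(s)=V(-s)S_1(s)$ and the second-order computation are correct: $G'(0)=0$, and the cancellation $H^2+S_1''(0)=-v_0$ does hold (up to fixing the ordering convention in $\prod_l e^{-iH_l s}$, which determines the sign of $v_0$), so the $V(\delta)\,v_0\,\delta^2/2$ term is correctly identified. The gap is in the remainder bound, which is the entire nontrivial content of the lemma. The Taylor route needs $\sup_{s\in[0,\delta]}\norm{G'''(s)}\le 5(\beta+\gamma)$, but you verify nothing beyond $s=0$; for $s>0$ the rearrangement of $G'''(s)$ into nested commutators plus controllable products of commutators is exactly where the hypotheses $\beta\delta\le 2\alpha$ and $\alpha^2\delta\le\beta+\gamma$ must enter, and you name the mechanism without carrying it out, so the constant $5/6$ is asserted rather than derived. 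The alternative BCH organization has an additional problem: writing $S_1(s)=\exp\bigl(-iHs-\tfrac{v_0}{2}s^2+s^3\Phi_3+O(s^4)\bigr)$ with a controllable tail requires convergence of the BCH series, i.e.\ a smallness condition on $\sum_l\norm{H_l}\,\delta$, which is \emph{not} implied by the lemma's hypotheses (they only constrain commutator norms, not $\norm{H}\delta$). Finally, the two halves of your write-up disagree about where the factor $1/6$ originates (the integral $\int_0^\delta(\delta-s)^2/2\,\d s=\delta^3/6$ versus the direct bound $\norm{e^X-I-X}\le\tfrac12\norm{X}^2$), so as written neither route actually lands on $\iota=5(\gamma+\beta)/6$.

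The proof in Ref.~\cite{tran2021faster} sidesteps both difficulties with a variation-of-parameters (Duhamel) identity, $S_1(\delta)-V(\delta)=\int_0^\delta V(\delta-s)\,\mathcal{E}(s)\,S_1(s)\,\d s$, where the error generator $\mathcal{E}(s)=\dv{s}\log$-type expression is \emph{exactly} expressible through conjugations $e^{iH_j s}(\cdot)e^{-iH_j s}$ of the $H_l$; expanding $\mathcal{E}(s)=-s\,v_0+O(s^2)$ via the Hadamard lemma makes every correction manifestly a nested commutator, keeps all identities exact for every $s$ (no series convergence needed), and the constant $5/6$ falls out of summing the explicit integrals. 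If you want to salvage your approach, you should either adopt that integral representation or genuinely compute and bound $G'''(s)$ for all $s\in[0,\delta]$ rather than only at the origin.
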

Based on the above result, the Trotterization error for the full Hilbert space simulation is:
\begin{lemma}[Theorem 1 of~\cite{tran2021faster}]\label{lem:SymFull}
The simulation error for the whole evolution using symmetry protection for $k$-local Hamiltonians is bounded by:
\begin{align*}
\norm{V(t)-\prod_{\mu=1}^LC_{\mu}^\dagger S_1(\delta)C_\mu}\leq\overline{v}_0\frac{t^2}{2r}+\iota\frac{t^3}{r^2}+2\left(\frac12\norm{v_0}+\iota\frac{t}{r}\right)^2\frac{t^4}{r^2},
\end{align*}
where 
\begin{align*}
\overline{v}_0=\norm{\frac{1}{r}\sum_{\mu=1}^rC_\mu^\dagger V(\mu s)^\dagger v_0 V(\mu\delta)C_\mu}
\end{align*}
is the averaged first-order error term.
\end{lemma}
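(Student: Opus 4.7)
My approach is to combine a standard telescoping decomposition of the global error with the per-step expansion $V(\delta)-S_1(\delta) = V(\delta) v_0\, \delta^2/2 + \cV(\delta)$, crucially exploiting the commutativity $[V(\delta), C_\mu]=0$ that follows directly from $[H,C_\mu]=0$.

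First, I would write $V(t) = V(\delta)^r$ and use $V(\delta) = C_\mu^\dagger V(\delta) C_\mu$ to set up the telescoping identity
\begin{align*}
V(t) - \prod_{\mu=1}^r C_\mu^\dagger S_1(\delta) C_\mu = \sum_{\mu=1}^r \Bigl(\prod_{\nu=\mu+1}^r C_\nu^\dagger S_1(\delta) C_\nu\Bigr)\, C_\mu^\dagger\bigl(V(\delta) - S_1(\delta)\bigr) C_\mu\, V(\delta)^{\mu-1}.
\end{align*}
Substituting the per-step expansion separates each summand into a leading-order piece built from $v_0$ and a residual piece built from $\cV(\delta)$. The residual contribution is immediate: it is bounded by $\sum_{\mu=1}^r \norm{\cV(\delta)} \leq r\iota\delta^3 = \iota t^3/r^2$, matching the second term of the claimed bound.

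Next I would handle the $v_0$-leading piece. If the trailing product $\prod_{\nu=\mu+1}^r C_\nu^\dagger S_1(\delta) C_\nu$ is replaced by its ideal counterpart $V(\delta)^{r-\mu}$, then using $[V(\delta),C_\mu]=0$ I can pull $V(t)$ out on the left to obtain exactly
\begin{align*}
V(t)\cdot \frac{\delta^2}{2}\sum_{\mu=1}^r V(\mu\delta)^\dagger C_\mu^\dagger v_0 C_\mu V(\mu\delta),
\end{align*}
whose norm is $r\delta^2\overline{v}_0/2 = \overline{v}_0\, t^2/(2r)$, giving the first term of the bound. The cost of the replacement step is controlled by a sub-telescoping argument: the discrepancy between $\prod_{\nu=\mu+1}^r C_\nu^\dagger S_1(\delta) C_\nu$ and $V(\delta)^{r-\mu}$ is at most $(r-\mu)\bigl(\tfrac12\norm{v_0}+\iota\delta\bigr)\delta^2$, and multiplying by the per-step factor $C_\mu^\dagger(V(\delta)-S_1(\delta))C_\mu$ (of comparable norm) and summing over $\mu$ yields precisely the compound cross-term $2\bigl(\tfrac12\norm{v_0}+\iota t/r\bigr)^2 t^4/r^2$.

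The hard part will be tracking the quadratic cross-term carefully enough to recover the correct prefactor of $2$ and to see that the error truly factors through $(\tfrac12\norm{v_0}+\iota\delta)^2$ rather than a looser polynomial; this is where the second-order interplay between the product approximation and the single-step Trotter error must be bookkept. The use of $[V(\delta),C_\mu]=0$ is indispensable throughout: it enables both the symmetric telescoping and, more importantly, the collapse of the leading-order sum into the averaged quantity $\overline{v}_0$ rather than into a naive sum of $r$ copies of $\norm{v_0}$, which is precisely the mechanism by which symmetry protection lowers the error.
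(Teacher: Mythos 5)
The paper does not prove this lemma --- it imports it verbatim as Theorem~1 of the cited reference~\cite{tran2021faster} --- so there is no in-paper proof to compare against; your telescoping reconstruction is correct and is essentially the standard argument behind that theorem (telescope with $C_\mu^\dagger V(\delta)C_\mu=V(\delta)$, insert the per-step expansion, bound the $\cV$ residual by $r\iota\delta^3$, collapse the leading piece into the averaged $\overline{v}_0$, and control the tail-replacement cost quadratically). Two cosmetic remarks: the conjugating evolution that falls out of your telescoping is $V((\mu-1)\delta)$ rather than $V(\mu\delta)$, an irrelevant reindexing; and your cross-term bookkeeping, via $\sum_{\mu=1}^{r}(r-\mu)\le r^2/2$, actually yields the sharper prefactor $\tfrac12\bigl(\tfrac12\norm{v_0}+\iota t/r\bigr)^2 t^4/r^2$, so the stated factor of $2$ follows a fortiori.
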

We then focus on $\overline{v}_0$. According to Ref.~\cite{tran2021faster}, in general $\norm{\overline{v}_0}\propto\norm{v_0}/r^\theta$ for $\theta\in[0,1]$. Hence, it suffices to choose a step complexity of
\begin{align*}
r=\max\left\{{O}\left(\left(\frac{nt^2}{\epsilon}\right)^{\frac{1}{1+\theta}}\right),{\Tilde{O}}\left(\frac{nt^{3/2}}{\sqrt{\epsilon}}\right)\right\}.
\end{align*}
to bound the simulation error below $\epsilon$. There are several examples given for the value this $\theta$. For example, when we draw symmetry transformations randomly, the behavior of the error would be analogous to a random walk, which results in $\theta=0.5$. The rigorous proof for this intuition for localized Heisenberg model was provided in~\cite{tran2021faster}. There are also some instances when $\theta=1$~\cite{tran2021faster,gong2023improved}.

Now, we consider the performance of this symmetry protection approach in the low-energy subspace, which is also an open question in~\cite{tran2021faster}. Our goal is to compute the following error term
\begin{align*}
\norm{\left(V(t)-\prod_{\mu=1}^rC_\mu^\dagger S_1(\delta)C_\mu\right)\pdl}.
\end{align*}

We start with the following observation:
\begin{proposition}
Suppose $\cS$ is the symmetry group for $H$, then $\cS$ is a subgroup of the symmetry group for $\oH$, {where $\oH = \pdpl H \pdpl$}.
\end{proposition}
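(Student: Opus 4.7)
The plan is to show that every $C \in \cS$ satisfies $[\oH, C] = 0$, which since both $\cS$ and the symmetry group of $\oH$ are subgroups of the unitary group under the same composition, immediately gives the subgroup relation. The core content is thus a single commutator identity, and the argument rests on the observation that an operator commuting with $H$ must commute with every spectral projector of $H$, in particular with $\pdpl$.

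First I would take an arbitrary $C \in \cS$, so $[H, C] = 0$. Since $H$ is Hermitian, the spectral theorem lets me write $\pdpl = f(H)$ for $f$ the indicator of the interval $[0, \Delta']$ (or more formally the limit of polynomial approximations thereof). Because $C$ commutes with $H$, it commutes with every polynomial in $H$, and hence with $\pdpl$; that is, $[\pdpl, C] = 0$. This is the single nontrivial step.

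Having established $[\pdpl, C] = 0$, I would expand
\begin{align*}
[\oH, C] = [\pdpl H \pdpl, C] = [\pdpl, C] H \pdpl + \pdpl [H, C] \pdpl + \pdpl H [\pdpl, C] = 0,
\end{align*}
where all three terms vanish by either $[\pdpl, C] = 0$ or $[H, C] = 0$. Thus $C$ lies in the symmetry group of $\oH$. Finally, since $\cS$ is a group under unitary composition and is contained in the symmetry group of $\oH$ (also a group under the same operation), it is a subgroup.

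I do not anticipate a real obstacle here: the only subtle point is justifying $[\pdpl, C] = 0$ rigorously, which follows from the functional calculus for Hermitian operators. This proposition then enables the subsequent analysis to use the same symmetry transformations $C_\mu$ for the projected Hamiltonian $\oH$ as were available for $H$, which is essential for carrying the symmetry-protection estimates through the low-energy projection.
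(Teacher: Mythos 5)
Your proof is correct: the key step, that $[H,C]=0$ implies $[\pdpl,C]=0$ because the spectral projector is a function of $H$ (in finite dimensions simply a polynomial interpolating the indicator on the spectrum, so no limiting argument is even needed), together with the Leibniz expansion of $[\pdpl H\pdpl, C]$, is exactly the standard argument. The paper states this proposition as an unproved observation, and your write-up supplies precisely the justification it implicitly relies on.
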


We then decompose the error into three parts:
\begin{align*}
 \norm{\left(V(t)-\prod_{\mu=1}^rC_\mu^\dagger S_1(\delta)C_\mu\right)\pdl}&=\norm{(V(t)-\overline{V}(t))\pdl}+\norm{\left(\overline{V}(t)-\prod_{\mu=1}^rC_\mu^\dagger\overline{S}_1(\delta)C_\mu\right)\pdl}\nonumber\\
 &+\norm{\left(\prod_{\mu=1}^rC_\mu^\dagger\overline{S}_1(\delta)C_\mu-\prod_{\mu=1}^rC_\mu^\dagger S_1(\delta)C_\mu\right)\pdl},
\end{align*}
where $\overline{S}_1(\delta)$ is the Lie-Trotter formula for $\oH=\sum_{l=1}^L\oH_l$. The first part is $0$ according to the property of the projector. The second part, according to the results of the symmetry protection approach, gives:
\begin{align*}
\epsilon_2=O\left(\frac{L\Delta' t^2}{r^{1+\theta}}\right)+O\left(\frac{L^2\Delta^{'2}t^3\log{r}}{r^2}\right).
\end{align*}

According to \Cref{lem:ProdDisProj} and the fact that $q=L$ for first-order Trotterizations, we have
\begin{align*}
\norm{(\overline{S}_1(\delta)-S_1(\delta))\pdl}{\leq}5\exp(-\frac{1}{L}(\lambda(\Delta'-\Delta)-\alpha'\delta ML-L\log L)).
\end{align*}

Analogous to the intuition from symmetry transformation, we assume
\begin{align}\label{eq:transformation}
&\quad\ \norm{\left(\prod_{\mu=1}^rS_\mu^\dagger\overline{S}_1(\delta)S_\mu-\prod_{\mu=1}^rS_\mu^\dagger S_1(\delta)S_\mu\right)\pdl}\nonumber\\
&=O\left(r^{1-{\theta}}\exp\left(-\frac{1}{L}(\lambda(\Delta'-\Delta)-\alpha'\delta ML-L\log L)\right)\right)
\end{align}
for some ${\theta}\in[0,1]$. {Here, we remark that the simulation error for the symmetry-protection approach satisfies ${\theta}\in[0,1]$ is a rigorously proved result from Ref.~\cite{tran2021faster} as shown in \eqref{eq:theta}.}

We can derive the step complexity here as $(\lambda=(2Jdk)^{-1})$:
\begin{align*}
r=\tO\left(\left(\frac{L(\Delta+JdkL)t^2}{\epsilon}\right)^{\frac{1}{1+\theta}}+\left(\frac{L^2MJdkt^3}{\epsilon}\right)^{\frac{1}{2+\theta}}+\frac{L(\Delta+JdkL)t^{\frac32}}{\epsilon^{\frac12}}+\frac{t^{\frac54}L(JdkM)^{\frac12}}{\epsilon^{\frac14}}\right),
\end{align*}
which finishes the proof for \Cref{thm:SymProtLow}. The gate complexity of this approach depends on the property of each symmetry transformation $C_\mu$ and varies from case to case. However, we can still observe that the step complexity above is strictly better than that provided by either standard Trotterizations or the symmetry-protected simulation in the full Hilbert space.

\begin{figure}[ht]
    \centering
    \begin{subfigure}[ht]{.18\linewidth}
        \vspace{-1.5em}
        \includegraphics[width=\linewidth]{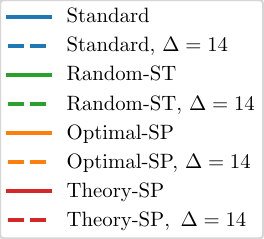}
    \end{subfigure}\hfill
    \begin{subfigure}[ht]{.4\linewidth}
        \includegraphics[width=\linewidth]{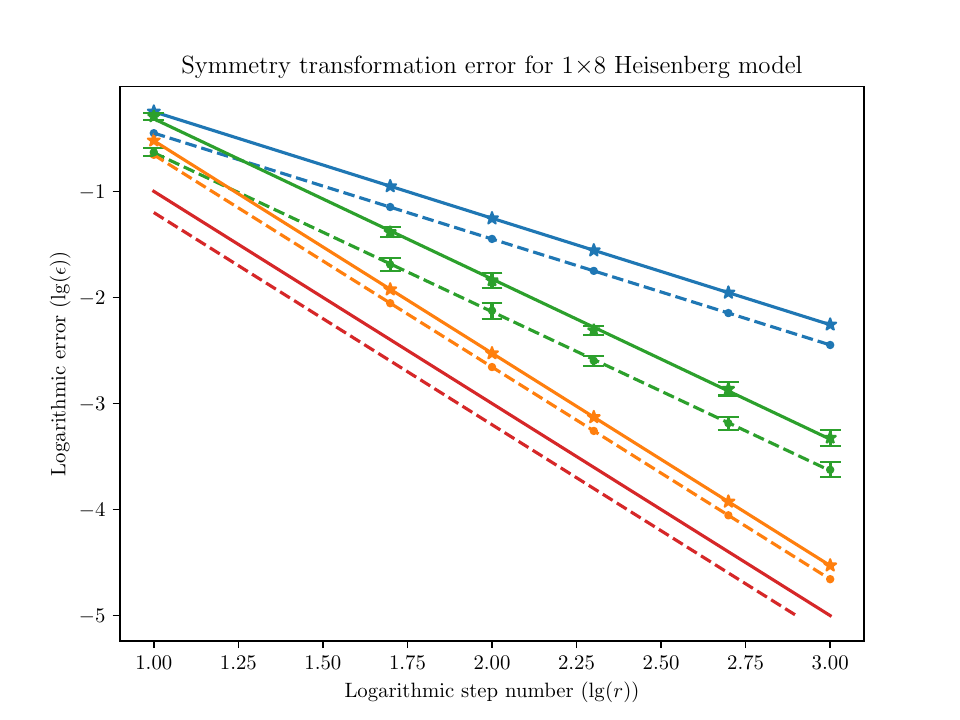}
        \caption{Symmetry transformation for $n=8$}
    \end{subfigure}\hfill
    \begin{subfigure}[ht]{.4\linewidth}
        \includegraphics[width=\linewidth]{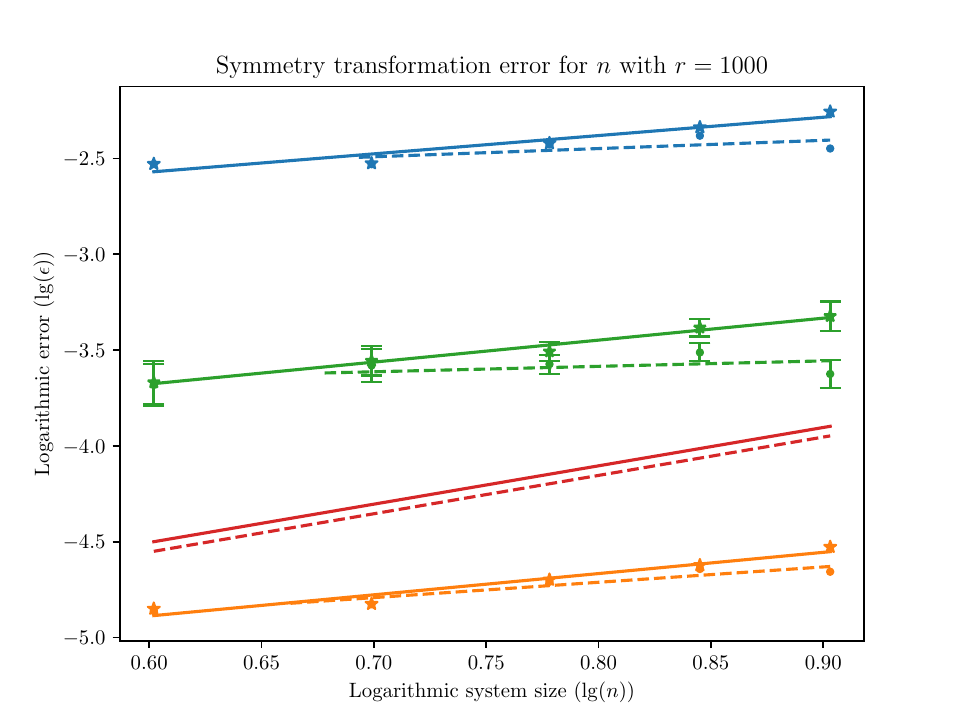}
        \caption{Symmetry transformation for $r=10^3$}
    \end{subfigure}
    \caption{\textbf{Full Hilbert space vs. low-energy subspace: Error induced by the symmetry protection approach for Heisenberg spin-$\mathbf{1/2}$ models.} The solid lines denote errors in the full Hilbert space while the dashed lines denote errors in the low-energy subspace. We distinguish the experiments with different schemes by the color of the lines. {We also compare the numerical and the theoretical error scalings}. (a) We plot the worst-case and the low-energy simulation errors as a function of step number $r$. (b) We plot the worst-case and the low-energy simulation errors as a function of system size $n$.}
    \label{fig:Symmetry}
\end{figure}

We conduct numerical experiments concerning simulating the one-dimensional Heisenberg models of different system sizes $n\in[4,8]$ in \eqref{eq:1xn}. We consider three different schemes of simulations: the standard first-order Trotterization (``Standard"), the first-order Trotterization protected by a random set symmetry transformation (``Random-ST"), and the first-order Trotterization protected by the optimal set (``Optimal-SP"). For the ``Random-ST" approach, considering the $SU(2)$-symmetry of Heisenberg models, each step we sample a $2\times 2$ matrix from norm distribution ensuring diagonal conjugation, then scale the determinant to $1$ and finally tensorize the normalized matrix to the $n$-th power. We randomly repeat this process 10 times and plot the average value as well as error bars. For the ``Optimal-SP" approach, we employ $C_0 = U_H^{\otimes n}, C_\mu = C_0^\mu$ as \cite{tran2021faster}, where $U_H$ denotes single-qubit Hadamard gate.

In \Cref{fig:Symmetry}, we plot the symmetry protection error as a function of step number $r$ and system size $n$ in two subfigures. In our numerical simulations, we fix the total evolution time $t=1.0$. We can observe that the projection to the low-energy subspace reduces the error compared to the full Hilbert space for the same parameter configuration. We estimate that full Hilbert space simulations and low-energy simulations respectively yield $\epsilon= O(nr^{-1})$ and $\epsilon= O(n^{2/5}r^{-1})$ for the standard Trotter formula, $\epsilon= O(nr^{-3/2})$ and $\epsilon= O(n^{1/5}r^{-3/2})$ for the random set symmetry transformation, $\epsilon= O(nr^{-2})$ and $\epsilon= O(n^{4/5}r^{-2})$ for the optimal set symmetry protection. The numerical performances for all the three approaches conform to the scaling of theoretical bounds, which give $\epsilon= O(\max\{nr^{-(1+\theta)},n^2r^{-2}\})$ and $\epsilon= O(\max\{r^{-(1+\theta)},r^{-2}, nr^{-(2+\theta)}, n^2r^{-4}\})$ for the full Hilbert space and low-energy subspace simulations. 

\subsection{Simulating power-law models}\label{sec:Power}

Except for $k$-local Hamiltonians, we also show that simulating low-energy states provides improvement for power-law interactions $H=\sum_{\bi,\bj\in\Sigma}H_{\bi,\bj}$ with exponent satisfying:
\begin{align*}
\norm{H_{\bi,\bj}}\leq\begin{cases}
1 & \quad\text{if }\bi=\bj,\\
\frac{1}{\norm{\bi-\bj}_2^\alpha} & \quad\text{if }\bi\neq\bj,
\end{cases}
\end{align*}
where $\bi,\bj\in\Sigma$ are the qubit sites, $\Lambda\subseteq\R^D$ is a $D$-dimensional square lattice, and $H_{\bi,\bj}$ is an operator supported on two sites $\bi,\bj$. Through direct computation~\cite{childs2021theory}, we have
\begin{align*}
g=\norm{\max_{\bi}\sum_{\bj\neq\bi} H_{\bi,\bj}}\leq\begin{cases}
O(n^{1-\alpha/D})&\quad\text{for }0\leq\alpha<D,\\
O(\log(n))&\quad\text{for }\alpha=D,\\
O(1)&\quad\text{for }\alpha>D.\\
\end{cases}
\end{align*}

Here, $g$ is an upper bound on the strengths of the interactions associated with a single spin. We can observe that every term of the power-law Hamiltonian is $2$-local. As assumed by the \Cref{thm:PowerLow}, each part $H_l$ contains at most {$M$ $k$-local interaction terms}. We consider an operator $A$, and denote $E_A$ as the subset of interaction terms in $H$ that do not commute with $A$ and $R_A$ as the strengths of the terms in $E_A$. Then we can bound $R$ by $R\leq 2g$ and $R\leq 2gn$ for any $H_l$ or $H_l^n$ for arbitrary $n$. We consider the following term:
\begin{align*}
\norm{e^{sH}Ae^{-sH}}.
\end{align*}

Using the Hadamard formula~\cite{miller1973symmetry}, we write
\begin{align*}
\norm{e^{sH}Ae^{-sH}}=A+s[H,A]+\frac{s^2}{2!}[H,[H,A]]+\cdots=\sum_{\ell=0}^\infty\frac{s^\ell}{\ell!}K_\ell.
\end{align*}

For $\ell=0$, we have $\norm{K_\ell}=\norm{A}$, and for $\ell>0$, we have
\begin{align*}
K_\ell\coloneqq\sum_{X_1\in E_A}\sum_{X_2|X_1}\sum_{X_3|X_2,X_1}\cdots\sum_{X_\ell|X_{\ell-1},\cdots,X_1}[h_{X_\ell},[h_{X_{\ell-1}},\cdots,[h_{X_1},A]\cdots]],
\end{align*}
where $h_X$ denotes a term acting on a two-qubit string $X$ and the sum $\sum_{X_j|X_{j-1},\ldots,X_1}$ denotes a summation over the non-zero terms in the commutator. {After shifting each $h_X$ to positive semi-definite operator, we denote $\tilde{h}_X=h_X-\frac{1}{2}\norm{h_X}$. It follows that $\|{\tilde{h}_X}\| \leq \frac{1}{2}\norm{h_X}$ and so $\|{[h_X, O]}\|=\|[\tilde{h}_X, O]\| \leq 2\|{\tilde{h}_X}\| \cdot\|O\| \leq\|{h_X}\|\cdot\|O\|$~\cite{arad2016connecting}.} Thus the norm of $\norm{K_\ell}$ can be bounded by
\begin{align*}
\norm{K_\ell}\leq\sum_{X_1\in E_A}\norm{h_{X_1}}\cdots\sum_{X_\ell|X_{\ell-1}\cdots X_1}\norm{h_{X_\ell}}\norm{A}.
\end{align*}

Now, we compute the value for each term. Notice that the sum of the norms of $h_X$ that do not commute with $A$ is bounded by $R_A$. The sum of the norms of $h_X$ that does not commute with another $h_Y$ is bounded by $2g$. We have
\begin{align*}
\sum_{X_j|X_{j-1},\ldots,X_1}\norm{h_{X_j}}\leq R_A+2(j-1)g.
\end{align*}

Therefore, we have 
\begin{align*}
\norm{K_\ell}\leq \norm{A}R_A(R_A+2g)(R_A+4g)\cdots(R_A+2(\ell-1)g)=(2g)^\ell {m}({m}+1)\cdots({m}+\ell-1)\norm{A},
\end{align*}
where ${m}\coloneqq\frac{R_A}{2g}$. Thus, we have
\begin{align*}
\norm{e^{sH}Ae^{-sH}}\leq\norm{A}\sum_{\ell=0}^\infty\frac{(2gs)^\ell}{\ell!}{m}({m}+1)\cdots({m}+\ell-1)={\norm{A}}\cdot\frac{1}{(1-2gs)^{{m}}}.
\end{align*}

Formally, we can prove the following lemma:
\begin{lemma}
Consider a Hamiltonian with $2$-local terms and suppose the interaction on each qubit is bounded by strength $g$. For any $0\leq s\leq\frac{1}{2g}$, we have $\norm{e^{sH}Ae^{-sH}}\leq\norm{A}(1-2gs)^{-R_A/2g}$, where $g$, $R_A$, and $E_A$ are defined as above.
\end{lemma}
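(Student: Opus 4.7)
The plan is to follow exactly the computation that was sketched immediately before the lemma statement and verify that the manipulations there yield a rigorous proof under the stated hypothesis $0 \leq s \leq \tfrac{1}{2g}$. First, I would expand $e^{sH}Ae^{-sH}$ using the Hadamard (BCH conjugation) formula as
\[
e^{sH}Ae^{-sH} \;=\; \sum_{\ell=0}^\infty \frac{s^\ell}{\ell!}\, K_\ell, \qquad K_\ell = [\,\underbrace{H,[H,\ldots,[H}_{\ell \text{ times}},A]\cdots]\,].
\]
Then, expanding $H=\sum_X h_X$ as a sum over its $2$-local terms and distributing each commutator, $K_\ell$ becomes a sum of iterated commutators $[h_{X_\ell},[h_{X_{\ell-1}},\cdots,[h_{X_1},A]\cdots]]$ over tuples $(X_1,\ldots,X_\ell)$. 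The crucial point is that only tuples where $X_1$ has nonzero commutator with $A$, and every subsequent $X_j$ has nonzero commutator with something in $\{A,h_{X_1},\ldots,h_{X_{j-1}}\}$, contribute nontrivially, which restricts $X_j \in E_A \cup \bigcup_{i<j}E_{h_{X_i}}$.

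The main technical step is the combinatorial bound
\[
\sum_{X_j \mid X_{j-1},\ldots,X_1} \|h_{X_j}\| \;\leq\; R + 2(j-1)g,
\]
which uses two ingredients: by definition, the norms of terms not commuting with $A$ sum to at most $R$, and by the assumption that each qubit has total interaction strength bounded by $g$, the norms of terms not commuting with a fixed two-qubit $h_{X_i}$ sum to at most $2g$ (at most $g$ per qubit in its support). Iterating the commutator bound $\|[P,Q]\|\le 2\|P\|\|Q\|$ carefully, and absorbing the factor of $2$ as is standard in this style of Lieb--Robinson-type argument, gives
\[
\|K_\ell\| \;\leq\; \|A\|\, R(R+2g)(R+4g)\cdots(R+2(\ell-1)g) \;=\; \|A\|\, (2g)^\ell\, r(r+1)\cdots(r+\ell-1),
\]
where $r=R/(2g)$.

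Finally, I would sum the series: for $|2gs|<1$, which is guaranteed by the hypothesis $0\leq s\leq 1/(2g)$ (treating the boundary case as a limit if needed), the generalized binomial series gives
\[
\sum_{\ell=0}^\infty \frac{(2gs)^\ell}{\ell!}\, r(r+1)\cdots(r+\ell-1) \;=\; (1-2gs)^{-r},
\]
so combining with the triangle inequality on the Hadamard expansion yields the claimed bound $\|e^{sH}Ae^{-sH}\|\leq \|A\|(1-2gs)^{-R/2g}$. The main obstacle is verifying the combinatorial inequality for $\|K_\ell\|$ cleanly; specifically, one must be careful that iterated commutators introduce factors of $2$ in the right places so that the net effect is $R+2(j-1)g$ rather than a larger power of $2$, and that the counting properly handles repeated or overlapping supports among $X_1,\ldots,X_{j-1}$. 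Everything else is a bookkeeping exercise and a recognition of the binomial series.
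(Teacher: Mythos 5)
Your proposal is correct and follows essentially the same route as the paper, which presents its proof as exactly the computation preceding the lemma: the Hadamard expansion into iterated commutators $K_\ell$, the combinatorial bound $\sum_{X_j\mid X_{j-1},\ldots,X_1}\norm{h_{X_j}}\leq R+2(j-1)g$ yielding $\norm{K_\ell}\leq\norm{A}(2g)^\ell r(r+1)\cdots(r+\ell-1)$ with $r=R/2g$, and the generalized binomial series summing to $(1-2gs)^{-r}$. Your explicit caution about where the factors of $2$ from iterated commutators are absorbed is a fair point of care, but it does not constitute a departure from the paper's argument.
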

Following the relationship~\cite{arad2016connecting} that
\begin{align*}
\norm{\plpg A\pll}=\norm{\plpg e^{-sH}e^{sH}Ae^{-sH}e^{sH}\pll}\leq\norm{e^{sH}Ae^{-sH}}\cdot e^{-s(\Lambda'-\Lambda)},
\end{align*}
we obtain the following lemma:
\begin{lemma}
Consider a Hamiltonian with $2$-local terms and suppose the interaction on each qubit is bounded by strength $g$. We have
\begin{align*}
\norm{\plpg A\pll}\leq\norm{A}\cdot e^{-\lambda(\Lambda'-\Lambda-2R_A)},
\end{align*}
where $\lambda\coloneqq (4g)^{-1}$.
\end{lemma}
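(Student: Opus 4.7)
The plan is to combine the two ingredients the paragraph preceding the lemma has already assembled. The previous lemma in the excerpt gives $\norm{e^{sH}Ae^{-sH}}\leq\norm{A}(1-2gs)^{-R/(2g)}$ for every $s\in[0,1/(2g)]$, and the identity quoted from~\cite{arad2016connecting} gives $\norm{\plpg A\pll}\leq\norm{e^{sH}Ae^{-sH}}\cdot e^{-s(\Lambda'-\Lambda)}$. Multiplying these two bounds reduces the entire proof to a one-parameter optimisation of
\[
\norm{\plpg A\pll}\leq \norm{A}\,(1-2gs)^{-R/(2g)}\,e^{-s(\Lambda'-\Lambda)}
\]
over $s\in(0,1/(2g))$.

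Before choosing $s$ I would briefly verify the identity above, since that is the only place where the spectral structure of $H$ is used. Because $H$ commutes with both $\plpg$ and $\pll$, one may insert $e^{-sH}e^{sH}=I$ on either side of $A$ to write $\plpg A\pll=(\plpg e^{-sH})(e^{sH}Ae^{-sH})(e^{sH}\pll)$ and apply submultiplicativity of the operator norm. For $s>0$ the outer factors are controlled by the spectrum of $H$ on the respective projected subspaces: $\norm{\plpg e^{-sH}}\leq e^{-s\Lambda'}$ and $\norm{e^{sH}\pll}\leq e^{s\Lambda}$, which together produce the $e^{-s(\Lambda'-\Lambda)}$ factor.

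For the final step I would simply take $s=1/(4g)$, which sits strictly inside the admissible interval. With this choice the first factor becomes $(1-1/2)^{-R/(2g)}=2^{R/(2g)}\leq e^{R/(2g)}$ (using $\ln 2<1$), and the second factor becomes $e^{-(\Lambda'-\Lambda)/(4g)}$. Multiplying yields
\[
\norm{\plpg A\pll}\leq \norm{A}\exp\!\Bigl(\tfrac{R}{2g}-\tfrac{\Lambda'-\Lambda}{4g}\Bigr)=\norm{A}\exp\bigl(-\lambda(\Lambda'-\Lambda-2R)\bigr)
\]
with $\lambda=1/(4g)$, which is exactly the claimed bound. There is no genuine obstacle here; the only things worth watching are the sign conventions when pulling projectors through the exponentials, and the requirement $s\leq 1/(2g)$ for the preceding lemma to apply, both of which are trivially satisfied at $s=1/(4g)$. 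A mildly tighter constant can be obtained by minimising the exponent exactly (the optimum is $s=(1-R/(\Lambda'-\Lambda))/(2g)$), but the clean choice $s=1/(4g)$ is already enough to produce the stated form.
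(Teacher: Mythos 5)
Your proposal is correct and follows exactly the route the paper takes: combine the preceding bound $\norm{e^{sH}Ae^{-sH}}\leq\norm{A}(1-2gs)^{-R/(2g)}$ with the identity $\norm{\plpg A\pll}\leq\norm{e^{sH}Ae^{-sH}}e^{-s(\Lambda'-\Lambda)}$ from~\cite{arad2016connecting}, then set $s=1/(4g)$ so that $(1/2)^{-R/(2g)}\leq e^{2R\lambda}$ yields the stated exponent. Your explicit verification of the projector identity and the explicit choice of $s$ only make concrete what the paper leaves implicit.
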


According to the above lemma, we can deduce that
\begin{align*}
\norm{\plpg(H_l)^n\pll}\leq ({MJ})^n e^{-\frac{1}{4g}(\Lambda'-\Lambda-{4gn})}{=}({eMJ})^ne^{-\frac{1}{4g}(\Lambda'-\Lambda)},
\end{align*}
where we assume that $\Lambda'-\Lambda-2R\geq 0$.

Thus, we can deduce that
\begin{align*}
\norm{\plpg e^{-i\delta H_l}\pll}&\leq e^{-\lambda(\Lambda'-\Lambda)}(e^{\alpha M\delta}-1),\\
\norm{\plpg e^{-i\delta\oH_l}\pll}&\leq 3e^{-\lambda(\Lambda'-\Lambda)}(e^{\alpha M\delta}-1).
\end{align*}
for $\alpha={eJ}$ a constant. Compared with \Cref{lem:LeakDiff} and \Cref{lem:LeakExp} in \Cref{app:Lemma}, we find that $k$ is exactly replaced by a constant $2g$. Thus, similar to the proof for \Cref{lem:ProdLow}, we can prove the step complexity in \Cref{thm:PowerLow} as:
\begin{align*}
r=\tO\left(\frac{t^{1+\frac1p}L^{\frac1p}}{\epsilon^{\frac1p}}(\Delta+gq\log q)^{1+\frac 1p}\right)+O\left(\frac{t^{1+\frac{1}{2p+1}}}{\epsilon^{\frac{1}{2p+1}}}(LMg)^{\frac12+\frac{1}{4p+2}}\right).
\end{align*}

It is straightforward to observe that the gate complexity required is
\begin{align*}
G=\tO\left(\frac{t^{1+\frac1p}L^{\frac1p+1}}{\epsilon^{\frac1p}}(\Delta+gq\log q)^{1+\frac 1p}\right)+O\left(\frac{Lt^{1+\frac{1}{2p+1}}}{\epsilon^{\frac{1}{2p+1}}}(LMg)^{\frac12+\frac{1}{4p+2}}\right).
\end{align*}

Compared with the gate count required for full-energy simulation tasks in~\cite{childs2021theory}, our result gets rid of the explicit overhead of $O(n^2)$ gates to implement each Trotter step. When we combine multiple terms in each $H_l$, the overhead can be largely reduced.

Here, we perform numerical experiments to benchmark the product formulas in simulating low-energy states for power-law Hamiltonians. In particular, we study the following 2D power-law Heisenberg spin-$1/2$ model without external fields (as shown in the left-hand side of \Cref{fig:power-law}): 
\begin{equation}\label{eq:power-law}
    H = -\sum_{<i,j>}\frac{1}{\|i-j\|_2^\alpha}(X_iX_j+Y_iY_j+Z_iZ_j),
\end{equation}
where $X_i, Y_i$, and $Z_i$ are Pauli operators acting on the $i$-th spin, and the summation is over every spin pair. 
Since the ground energy $E_{0l}<0$ for every $H_l$, we make a shift $H_l\rightarrow H_l-E_{0l}I$ for every term. We empirically pick $\Delta = 15$ such that the low-energy subspace contains eigenstates suitable.

We plot the Trotter error as a function of single-step evolution time at different decay factors $\alpha$ in the right-hand side of \Cref{fig:power-law}. The Hamiltonian in \eqref{eq:power-law} is decomposed into three interaction terms: the horizontal, the vertical, and the remaining, respectively represented by yellow, green, and red arrows as shown in the left-hand side of \Cref{fig:power-law}. {The target evolution within each time step $V(\delta)$ is approximated by the second-order product formula $S_2(\delta)$.} 
We can observe that the projection to the low-energy subspace significantly reduces the error compared to the full Hilbert space for the same parameter configuration. 
Moreover, in the full Hilbert space, the error increases as $\alpha$ grows, while in the low-energy subspace, the opposite is true. This could be explained by the fact that the norm of the Hamiltonian decreases as $\alpha$ increases. Therefore, $\Delta = 15$ becomes less stringent compared to the whole spectrum of the Hamiltonian as $\alpha$ increases. For a fixed $\Delta$, the proportion of the spectrum in the low-energy subspace increases for larger $\alpha$, which results in a smaller improvement compared to the full Hilbert space simulation. {We further remark that the scaling of the numerical simulation regarding parameters fits perfectly with the theoretical results.}

\begin{figure}[ht]
        \centering
    \begin{minipage}[ht]{.27\linewidth}
        \includegraphics[width=\textwidth]{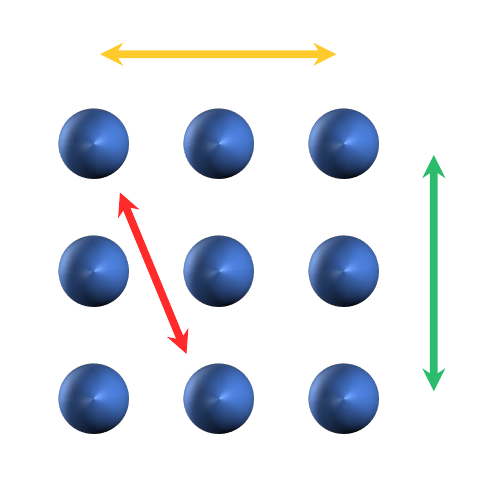}
    \end{minipage}
    \begin{minipage}[ht]{.7\linewidth}
        \includegraphics[width=\textwidth]{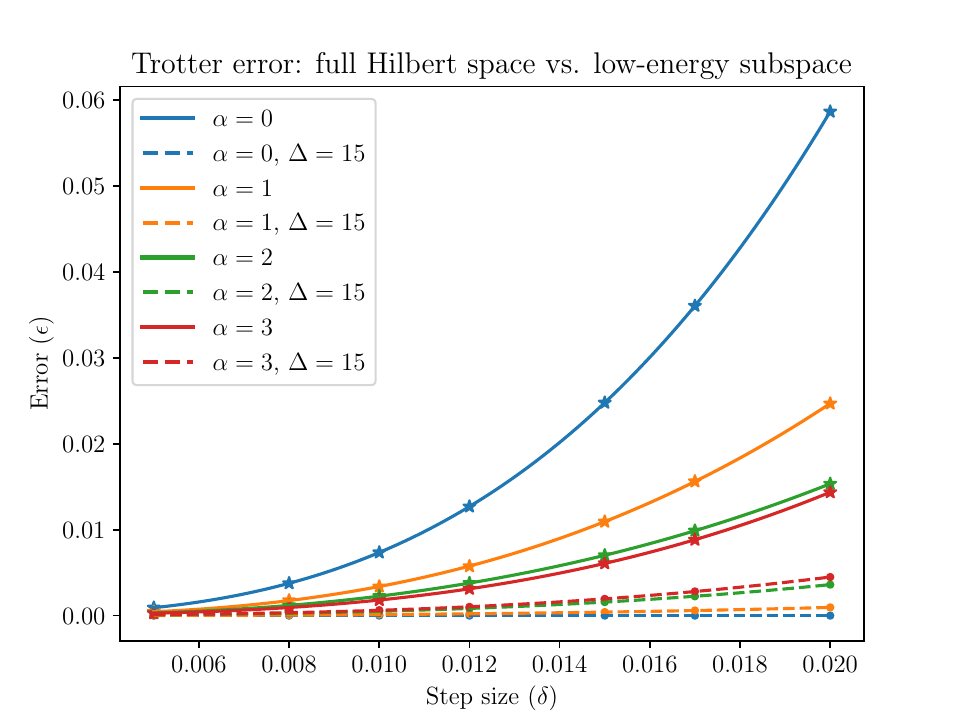}
    \end{minipage}
    \caption{\textbf{Full Hilbert space vs. low-energy subspace: Trotter error induced by the product formulas for a $\mathbf{3\times 3}$ power-law model.} The solid lines denote errors in the full Hilbert space while the dashed lines denote errors in the low-energy subspace, and the plot distinguishes interaction power $\alpha$ by the color of the lines.}
    \label{fig:power-law}
\end{figure}

\section{A Lower Bound for Simulating in the Low-Energy States}\label{sec:lower}
In the above discussion, we show that Hamiltonian simulation in the low-energy subspace can provide improvement compared to full space simulation. It is natural to ask what is the lower bound in the resource requirement for low-energy simulations. Here, we provide a lower bound on the query complexity to simulate any Hamiltonian $H$ for low-energy simulations. In particular, we prove the following theorem:

\begin{theorem}[Lower bound]\label{thm:lower}
Given any positive number $D$, we can construct a $2$-sparse positive semi-definite Hamiltonian $H$ with $\norm{H}=\Theta(D)$ and zero minimal eigenvalues such that simulating some state within the low-energy threshold $\Delta$ and accuracy $\epsilon$ for some chosen scaled time $\tau$ requires at least $\Omega\left(\max\left\{\tau,\frac{\log(\Delta/\epsilon)}{\log\log(\Delta/\epsilon)}\right\}\right)$ queries to $H$.
\end{theorem}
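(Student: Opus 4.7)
The plan is to derive the two terms inside the max separately, each via a black-box reduction from the parity problem on $N$ bits, which by Beals et al.\ and Farhi et al.\ requires $\Omega(N)$ quantum queries. In each case I construct a family of $2$-sparse PSD Hamiltonians $\{H_x\}_{x\in\{0,1\}^N}$ indexed by the input bitstring $x$, arrange that the relevant initial state lies in the low-energy subspace $\pdl$, and show that simulating $e^{-iH_x t}$ to accuracy $\epsilon$ on that state, for a specific time $t$ and threshold $\Delta$, lets one recover the parity of $x$. Any query-efficient low-energy simulator then yields a query-efficient parity algorithm, contradicting the parity lower bound.

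For the $\Omega(\tau)$ term I adapt the standard no-fast-forwarding construction of Berry--Ahokas--Cleve--Sanders: take $H_x$ to be $D$ times the (appropriately shifted) weighted adjacency matrix of a path graph of length $\Theta(N)$ whose edge weights depend on $x$, so that continuous-time evolution of a single endpoint basis state for scaled time $\tau = \Theta(N)$ produces a state whose amplitude on a designated target vertex determines the parity of $x$. Shifting so $H\succeq 0$ with minimum eigenvalue zero and with $\|H\|=\Theta(D)$ is straightforward; because the walk takes place on the low eigenstates of the shifted path operator, the whole evolution remains inside $\pdl$ by choosing the threshold $\Delta$ slightly above the bandwidth used by the walk. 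This immediately gives the $\Omega(\tau)$ bound, which is insensitive to $\Delta$.

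For the $\Omega(\log(\Delta/\epsilon)/\log\log(\Delta/\epsilon))$ term I exploit the bandwidth offered by the low-energy subspace itself. The idea is to encode an $N$-bit parity problem into a Hamiltonian whose low-energy subspace carries a ladder-like or clock-like structure with $\Theta(\Delta/\epsilon)$ distinct evolution phases, chosen so that constant-time evolution of a uniform superposition in this subspace produces, up to accuracy $\epsilon$, a state whose overlap with a reference state is $\pm1$ according to the parity of $x$. I pick $N$ as large as possible subject to the constraint that the $2^N$ branch phases remain resolvable within error $\epsilon$ while the largest energy label stays below $\Delta$; by a Stirling-style counting argument, this forces $N!\lesssim \Delta/\epsilon$, hence $N=\Theta(\log(\Delta/\epsilon)/\log\log(\Delta/\epsilon))$. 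Again the PSD/sparsity/zero-ground-state properties are enforced by shifting and padding.

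The main obstacle will be the second reduction: simultaneously satisfying (i) $2$-sparsity of $H_x$, (ii) PSD-ness with minimum eigenvalue $0$ and $\|H_x\|=\Theta(D)$, (iii) that the parity-encoding state sits strictly below $\Delta$, and (iv) that the simulation error $\epsilon$ does not wipe out the parity signal. A natural diagonal clock encoding violates $2$-sparsity, so I expect the construction to route the phase accumulation through an ancilla register whose off-diagonal couplings are $2$-sparse while the phase structure lives in a block-diagonalization; balancing the block spectra against the $\Delta$ cutoff, and tracking how the accuracy $\epsilon$ propagates through the amplitude-estimation step that reads off parity, is the delicate part of the argument.
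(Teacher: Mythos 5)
Your high-level strategy (reduce from $N$-bit parity via a no--fast-forwarding Hamiltonian, shift to make it PSD) matches the paper's, but there are two genuine gaps. First, in your $\Omega(\tau)$ reduction you assert that ``the walk takes place on the low eigenstates of the shifted path operator'' so the evolution stays inside $\pdl$ for a threshold $\Delta$ just above the walk's bandwidth. This is false for the standard construction: the endpoint basis state $\ket{0,0}$ that must be evolved to read off parity is a $J_z$-type basis vector of the spin-$D/2$ operator and is spread across the \emph{entire} spectrum of the (shifted) Hamiltonian; its mean energy after the shift is $D/2=\Theta(\norm{H})$, so it is nowhere near the low-energy subspace for any interesting $\Delta\ll\norm{H}$. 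Restricting to genuinely low eigenstates destroys the endpoint localization and hence the parity readout. The paper resolves exactly this obstruction by taking the input to be the mixed state $\rho_{\text{in}}=(1-\tfrac{2\Delta}{D})\ket{\psi_{\min}}\!\bra{\psi_{\min}}+\tfrac{2\Delta}{D}\ket{0,0}\!\bra{0,0}$, which has energy exactly $\Delta$; the price is that the parity signal is attenuated by the factor $2\Delta/D$, and it is precisely this attenuation that injects the $\Delta$-dependence into the bound. Without some such device your $\Omega(\tau)$ argument does not produce a state satisfying the low-energy hypothesis of the theorem.

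Second, your $\Omega\bigl(\log(\Delta/\epsilon)/\log\log(\Delta/\epsilon)\bigr)$ term rests on an unspecified ``clock-like'' construction with $\Theta(\Delta/\epsilon)$ resolvable phases and a Stirling-type bound $N!\lesssim\Delta/\epsilon$ that is asserted rather than derived; you yourself flag that reconciling $2$-sparsity, PSD-ness, the $\Delta$ cutoff, and $\epsilon$-resolvability is ``the delicate part,'' and no actual Hamiltonian is exhibited. The paper does not need a second construction at all: both terms of the max come from the single parity Hamiltonian. The transition amplitude to the parity-revealing vertex is $(\sin(t/2))^D$, further damped by the weight $2\Delta/D$ of the useful component of $\rho_{\text{in}}$, and requiring this attenuated signal to survive an $\epsilon$ simulation error forces $D=\Omega\bigl(\log(\Delta/\epsilon)/\log\log(\Delta/\epsilon)\bigr)$ via $D(\log D-\log\tau)\lesssim\log(1/(\Delta\epsilon))$. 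You should either carry out your clock construction in full or, more economically, extract the precision term from the same attenuated-overlap inequality that already governs your first reduction.
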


\begin{proof}
For the construction of the Hamiltonian, we adopt the idea of proving the no--fast-forwarding theorem for computing parity using quantum computation~\cite{beals2001quantum,farhi1998limit}. We start with the following Hamiltonian
$H_1$ acting on $(D+1)$-dimensional quantum states. We denote the basis of these quantum states as $\ket{i}$ for $i\in\{0,\ldots,D\}$. The nonzero entries for the Hamiltonian are
\begin{align*}
\bra{i+1}H_1\ket{i}=\bra{i}H_1\ket{i+1}=\frac{\sqrt{(D-i)(i+1)}}{2},\quad i=0,\ldots,D-1.
\end{align*}

We can observe that the Hamiltonian is actually an $J_x$ operator acting on a spin-$D/2$ system. The operator norm of the Hamiltonian is exactly $\frac{D}{2}$. There are two eigenstates corresponding to the eigenvalue $-D/2$ and $D/2$, the former within which is the ground state. In addition, we can find that $\bra{0}e^{iH_1t/2}\ket{D}=(\sin(t/2))^D$.

This Hamiltonian has been utilized as a component to construct lower bounds for Hamiltonian simulation through the following construction $H_2$~\cite{berry2007efficient,berry2014exponential}. In particular, we consider the Hamiltonian $H_2$ generated by an $D$-bit string $x=x_1,\ldots,x_D$ acting on the quantum state $\ket{i,j}$ with $j\in\{0,\ldots,D\}$ and $i\in\{0,1\}$. The nonzero entries of $H_2$ read:
\begin{align*}
\bra{i,j}H_2\ket{i',j+1}=\bra{i,j+1}H_2\ket{i',j+1}=\frac{\sqrt{(D-i)(i+1)}}{2},\quad j=0,\ldots,D-1,
\end{align*}
where $i\oplus i'=x_{i+1}$. This Hamiltonian, however, has a minimal eigenvalue of $-D/2$. In order to make the Hamiltonian positive semi-definite, we consider the following Hamiltonian
\begin{align*}
H=H_2+\frac{D}{2}I,
\end{align*}
which adds an additional $e^{-i\Delta t}$ global phase to the evolution of any input states. As we can always add back the inverse of a global phase to the input state when we fix $t$, we simply ignore the effect of the term $e^{-i\Delta t}$ in the following analysis. By definition, $\ket{0,0}$ is connected to exactly one of $\ket{0,1}$ or $\ket{1,1}$. Moreover, it is connected to the state $\ket{i,j}$ if and only if $x_1\oplus x_2\oplus\ldots\oplus x_i=j$. Therefore, $\ket{0,0}$ is exactly connected to one of $\ket{D,0}$ or $\ket{D,1}$, determining the parity of the bit string $x_1x_2\ldots x_D$. The graph of this Hamiltonian contains two paths within which the vertices are connected, one containing $\ket{0,0}$ and $\ket{D,p(x)}$ for $p(x)$ the parity of $x$ and the other containing $\ket{0,1}$ and $\ket{D,p(x)\oplus 1}$. Given a time $t$, the overlap between $e^{iHt}\ket{0,0}$ and $\ket{D,p(x)}$ is $\abs{\bra{D,p(x)}e^{-iHt}\ket{0,0}}=(\sin(t/2))^D$ while the overlap between $e^{iHt}\ket{0,0}$ and $\ket{D,p(x)\oplus 1}$ is strictly zero. 

Notice that the Hamiltonian is $2$-sparse and we can only query at most two values of $x_i$ by querying $H$ once. Therefore, simulating the evolution of input state $\ket{0,0}$ for time $t=\pi$ yields an unbounded-error algorithm for computing the parity of $x$, thus requiring $\Omega(D)$ queries when we measure the output state in computational basis and consider the probability of obtaining $\ket{D,p(x)}$~\cite{berry2007efficient}. However, there are two issues with this technique in our setting. The first issue is that the state $\ket{0,0}$ is not in the low-energy subspace. It has energy $\bra{0,0}H\ket{0,0}=D/2$ due to the second term $DI/2$. To address this issue, we consider the input state as a mixed state of the minimal eigenstate $\ket{\psi_{\min}}$ and $\ket{0,0}$ as
\begin{align*}
\rho_{\text{in}}=\left(1-\frac{2\Delta}{D}\right)\ket{\psi_{\min}}\bra{\psi_{\min}}+\frac{2\Delta}{D}\ket{0,0}\bra{0,0}.
\end{align*}

This state has energy exactly $\Tr(D\rho_{\text{in}})=\Delta$. The second issue is that, in approximated simulation, we allow an $\epsilon$ error tolerance. Here, we use the techniques in~\cite{berry2014exponential} and consider the choice of $D$ such that the overlap between the evolution of the second term $\frac{2\Delta}{D}\ket{0,0}\bra{0,0}$ and $\ket{D,p(x)}$ the is larger than $\epsilon$, which directly require $\frac{2\Delta}{D}\cdot (\sin(t/2))^D\geq\epsilon$. Even if the simulation error is allowed to be within $\epsilon$, we still have to figure out the parity. In addition, the scaled time $\tau=\norm{H}t=Dt$. Thus we have
\begin{align*}
\left(\sin\left(\frac{\tau}{2D}\right)\right)^D\leq\frac{\Delta\epsilon}{2D}.
\end{align*}

By taking the logarithm at two sides and using the approximation $\sin x\approx x$, we have $D(\log D-\log \tau)=O\left(\frac{1}{\Delta\epsilon}\right)$. Therefore, we have
\begin{align*}
D\geq\Omega\left(\max\left\{\tau, \frac{\log(\Delta/\epsilon)}{\log\log(\Delta/\epsilon)}\right\}\right).
\end{align*}

The lower bound for the number of queries to $H$ is thus $\Omega(D)=\Omega\left(\max\left\{\tau, \frac{\log(\Delta/\epsilon)}{\log\log(\Delta/\epsilon)}\right\}\right)$, which is what we expect.
\end{proof}

\section*{Code Availability}
Github repository: \url{https://github.com/Qubit-Fernand/Digital-Quantum-Simulation}.

\section*{Acknowledgements}
We thank Chi-Fang Chen, Andrew M. Childs, Alexey V. Gorshkov, Minh C. Tran, and Dong Yuan for helpful discussions. SZ and TL were supported by the National Natural Science Foundation of China (Grant Numbers 62372006 and 92365117), and The Fundamental Research Funds for the Central Universities, Peking University.

\bibliographystyle{quantum}
\bibliography{QSimLowEnergy_Final}


\clearpage
\appendix
\section{Auxiliary Lemmas}\label{app:Lemma}
In the following, we carefully pick $\Delta'\geq\Lambda'\geq\Lambda\geq0$ as some value depends on $\Delta$ and $\Delta'\geq\Delta$. We denote $\oH=\pdpl H\pdpl$ and $\oH_l=\pdpl H_l\pdpl$. Therefore, $\norm{\oH_l}\leq\norm{\oH}\leq\Delta'$. 
\begin{lemma}[Lemma 1 of~\cite{csahinouglu2021hamiltonian}]\label{lem:LeakExp}
For $k$-local Hamiltonian $H=\sum_{l=1}^LH_l$ with $H_l\geq 0$ and $\Lambda'\geq\Lambda\geq0$. Then, for all $s\in\R$ and $l=1,\ldots,L$,
\begin{align*}
\norm{\plpg e^{-i\delta H_l}\pll}\leq e^{-\alpha_1(\Lambda'-\Lambda)/J}(e^{\alpha_2 J\abs{\delta}n}-1),
\end{align*}
where $\alpha_1>0$ and $\alpha_2>0$ are constants.
\end{lemma}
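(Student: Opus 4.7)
The plan is to reduce the claim to \Cref{lem:ProjBound} applied term-by-term to the Taylor expansion of $e^{-i\delta H_l}$. First I would write
\begin{align*}
e^{-i\delta H_l} = I + \sum_{m=1}^{\infty} \frac{(-i\delta)^m}{m!}\, H_l^{m},
\end{align*}
and observe that $\plpg I \pll = 0$ because $\Lambda' \geq \Lambda$ forces the ranges of the two projectors to be orthogonal. Hence the identity term drops out and
\begin{align*}
\plpg e^{-i\delta H_l}\pll = \sum_{m=1}^{\infty} \frac{(-i\delta)^m}{m!}\, \plpg H_l^{m} \pll,
\end{align*}
so by the triangle inequality it suffices to bound $\norm{\plpg H_l^{m}\pll}$ for each $m\geq 1$ and sum a geometric-type series.

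The key structural observation is that powers of $H_l$ do not enlarge the operator's qubit support: every factor $H_l$ acts on the same (at most $kM$) qubits, so $H_l^m$ is supported on exactly the set of qubits acted on by $H_l$, independent of $m$. Consequently the parameter $R_{H_l^m}$ appearing in \Cref{lem:ProjBound} is bounded by $R_{H_l}$, which in turn is bounded by a constant (depending only on the local structure, with $R_{H_l} = O(kMJ)$). Applying \Cref{lem:ProjBound} to $A = H_l^m$ yields
\begin{align*}
\norm{\plpg H_l^{m}\pll} \;\leq\; \norm{H_l^{m}}\, e^{-\lambda(\Lambda' - \Lambda - 2R_{H_l})} \;\leq\; \norm{H_l}^{m}\, e^{-\lambda(\Lambda'-\Lambda - 2R_{H_l})},
\end{align*}
with $\lambda = (2gk)^{-1}$.

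Next I would use the crude operator bound $\norm{H_l} \leq MJ = O(nJ)$ (each of the at most $M = O(n)$ interaction terms in $H_l$ has spectral norm at most $J$). Substituting back and summing the series gives
\begin{align*}
\norm{\plpg e^{-i\delta H_l}\pll}
&\leq e^{-\lambda(\Lambda'-\Lambda - 2R_{H_l})}\sum_{m=1}^{\infty}\frac{(\norm{H_l}\abs{\delta})^m}{m!}\\
&= e^{-\lambda(\Lambda'-\Lambda - 2R_{H_l})}\bigl(e^{\norm{H_l}\abs{\delta}} - 1\bigr).
\end{align*}
Folding the constant $e^{2\lambda R_{H_l}}$ (which is independent of $\Lambda',\Lambda,\delta,n$) into the prefactor and using $\lambda = \Theta(1/J)$ together with $\norm{H_l} = O(nJ)$ yields the claimed form with absolute constants $\alpha_1,\alpha_2 > 0$.

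The main obstacle is purely bookkeeping: making sure the constant $R_{H_l}$ really does not depend on $m$ or $n$ in the relevant regime, and choosing $\alpha_1,\alpha_2$ so that the absorbed factor $e^{2\lambda R_{H_l}}$ and the $J$-scaling in $\lambda$ combine cleanly into the stated exponents $\alpha_1(\Lambda'-\Lambda)/J$ and $\alpha_2 J\abs{\delta} n$. If one instead wants a constant (in $n$) prefactor inside the exponent $\alpha_2$, one should assume each $H_l$ has bounded local norm; the same argument goes through with $\norm{H_l}$ replaced by whatever bound is available, so the structure of the proof is unchanged.
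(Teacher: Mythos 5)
Your overall skeleton---Taylor-expand $e^{-i\delta H_l}$, kill the identity term using $\plpg\pll=0$, and control each order with \Cref{lem:ProjBound}---is the right one and matches the structure of the cited proof. The gap is the step where you apply \Cref{lem:ProjBound} to $A=H_l^m$ wholesale and then declare $e^{2\lambda R_{H_l}}$ a constant that can be ``folded into the prefactor.'' It cannot. By definition $R_{H_l}$ is the total strength of all interaction terms of $H$ that fail to commute with $H_l$, which is $\Theta(kMg)$ since $\supp(H_l)$ contains up to $kM$ qubits each carrying interaction strength up to $g$; with $\lambda=(2gk)^{-1}$ this gives $e^{2\lambda R_{H_l}}=e^{\Theta(M)}$. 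In the regime this lemma is meant for ($L,J,d=O(1)$ as in \Cref{tab:main}, hence $M=\Theta(n)$---think of $H_1=\sum_i X_iX_{i+1}$ in a Heisenberg chain), that is an $e^{\Theta(n)}$ multiplicative loss. It cannot be absorbed into $e^{-\alpha_1(\Lambda'-\Lambda)/J}$ (the bound must hold at $\Lambda'=\Lambda$), nor into $e^{\alpha_2 J|\delta|n}-1$ (that factor vanishes linearly as $\delta\to0$ while your prefactor is a fixed $e^{\Theta(n)}$): at first order in $\delta$ your bound reads $e^{M}MJ|\delta|$, exponentially weaker than the claimed $\alpha_2 Jn|\delta|$ and even than the trivial estimate $\norm{e^{-i\delta H_l}-I}\leq MJ|\delta|$.

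The repair is to apply the projection bound one level lower. Write $H_l=\sum_{Z}h_Z$ with $\norm{h_Z}\leq J$ and expand $H_l^m=\sum_{Z_1,\dots,Z_m}h_{Z_1}\cdots h_{Z_m}$ into at most $M^m$ products. Each product is supported on at most $mk$ qubits, so its $R$-parameter is at most $mkg$ and $e^{2\lambda R}\leq e^{m}$; \Cref{lem:ProjBound} then gives $\norm{\plpg h_{Z_1}\cdots h_{Z_m}\pll}\leq J^m e^m e^{-\lambda(\Lambda'-\Lambda)}$, hence $\norm{\plpg H_l^m\pll}\leq (eMJ)^m e^{-\lambda(\Lambda'-\Lambda)}$. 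The $e^m$ is now absorbed into the $m$-th order coefficient rather than sitting as a global prefactor, and summing the series yields $e^{-\lambda(\Lambda'-\Lambda)}\bigl(e^{eMJ|\delta|}-1\bigr)$, which is the stated bound with $\alpha_1=J\lambda=(2dk)^{-1}$ and $\alpha_2=eM/n=O(1)$. Your observations that $\plpg I\pll=0$ and that the support of $H_l^m$ does not grow with $m$ are both correct; the only broken link is the claim that $R_{H_l}$ is $O(1)$.
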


\begin{lemma}[Lemma 2 of~\cite{csahinouglu2021hamiltonian}]\label{lem:LeakDiff}
For $k$-local Hamiltonian $H=\sum_{l=1}^LH_l$ with $H_l\geq 0$ and $\Delta'\geq\Lambda'\geq\Lambda\geq0$. Then, for all $\delta\in\R$ and $l=1,\ldots,L$,
\begin{align*}
\norm{\plpl (e^{-i\delta\oH_l}-e^{-i\delta H_l})\pll}&\leq e^{-\alpha_1(\Lambda'-\Lambda)/J}(e^{\alpha_2 J\abs{\delta}n}-1),\\
\norm{\plpg e^{-i\delta\oH_l}\pll}&\leq 3e^{-\alpha_1(\Lambda'-\Lambda)/J}(e^{\alpha_2 J\abs{\delta}n}-1),
\end{align*}
where $\alpha_1>0$ and $\alpha_2>0$ are constants.
\end{lemma}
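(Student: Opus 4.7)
The plan is to reduce both inequalities to \lem{LeakExp} (Lemma 1) and the Arad-Kuwahara-Landau projection bound \lem{ProjBound} via a Duhamel identity. The key structural observation is that $\oH_l=\pdpl H_l\pdpl$ leaves the range of $\pdpl$ invariant, so $e^{-is\oH_l}\pll$ stays in $\mathrm{range}(\pdpl)$ for all $s$; consequently the ``mismatch'' $H_l-\oH_l$ acts nontrivially only on the high-energy tail generated by the full evolution $e^{-isH_l}$, which is itself exponentially suppressed by \lem{LeakExp}.

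For the first inequality, I would begin from Duhamel's formula
\begin{align*}
e^{-i\delta H_l}-e^{-i\delta\oH_l}=-i\int_{0}^{\delta}e^{-i(\delta-s)H_l}(H_l-\oH_l)\,e^{-is\oH_l}\,ds,
\end{align*}
decompose $H_l-\oH_l=\pdpg H_l+\pdpl H_l\pdpg$, and observe that the second summand annihilates $e^{-is\oH_l}\pll$ since that vector lies in $\mathrm{range}(\pdpl)$. Sandwiching with $\plpl$ on the left and $\pll$ on the right leaves an integrand whose norm is controlled by $\norm{\plpl\, e^{-i(\delta-s)H_l}\pdpg}\cdot\norm{H_l}$. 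The first factor is, up to adjoint and relabeling, exactly the leakage operator controlled by \lem{LeakExp} at the thresholds $\Lambda'$ and $\Delta'$; choosing $\Delta'-\Lambda'\ge\Lambda'-\Lambda$ ensures this factor is bounded by $e^{-\alpha_1(\Lambda'-\Lambda)/J}(e^{\alpha_2 J(\delta-s)n}-1)$. Integrating in $s$ and absorbing the $\norm{H_l}$ factor by resummation into the exponential generating function yields the stated bound.

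For the second inequality, I would use the triangle inequality
\begin{align*}
\norm{\plpg e^{-i\delta\oH_l}\pll}\le\norm{\plpg e^{-i\delta H_l}\pll}+\norm{\plpg\bigl(e^{-i\delta\oH_l}-e^{-i\delta H_l}\bigr)\pll},
\end{align*}
bounding the first summand directly by \lem{LeakExp}. For the second summand, I drop $\plpg$ and apply the same Duhamel identity, which yields $\int_{0}^{\delta}\norm{\pdpg H_l\,e^{-is\oH_l}\pll}\,ds$. Splitting $\pdpg H_l=\pdpg H_l\plpl+\pdpg H_l(\pdpl-\plpl)$, the first piece is bounded by \lem{ProjBound} and the second is recursively controlled by the quantity being estimated; a Gronwall-style closure (or a single iteration of the Duhamel expansion) yields an overall bound of at most twice the \lem{LeakExp} estimate, which together with the first summand accounts for the factor of $3$ on the right-hand side.

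The main obstacle is the threshold bookkeeping: to land on the factor $e^{-\alpha_1(\Lambda'-\Lambda)/J}$ rather than $e^{-\alpha_1(\Delta'-\Lambda')/J}$ in the Duhamel step, one must carefully relate the three thresholds $\Delta'\ge\Lambda'\ge\Lambda$. A parallel difficulty is that the effective interaction-strength parameter $R_A$ in \lem{ProjBound} grows linearly with $m$ when $A=H_l^m$; working directly with the exponentials via Duhamel (rather than term by term in a Taylor series) is precisely what absorbs this growth into the $(e^{\alpha_2 J|\delta|n}-1)$ factor on the right-hand side.
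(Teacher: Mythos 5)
This lemma is imported verbatim from Ref.~\cite{csahinouglu2021hamiltonian} and the paper gives no proof of its own; the proof there is a term-by-term Taylor-expansion argument, telescoping $\oH_l^m-H_l^m$ into terms that each contain a factor $\pdpg H_l^{j}\pll$ and applying \lem{ProjBound} with $A=H_l^{j}$, so that the decay $e^{-\lambda(\Delta'-\Lambda-2jR)}\le e^{-\lambda(\Lambda'-\Lambda)}e^{2\lambda jR}$ comes from the \emph{full} gap $\Delta'-\Lambda\ge\Lambda'-\Lambda$, with the $e^{2\lambda jR}\norm{H_l}^{j}|\delta|^{j}/j!$ factors resumming into $(e^{\alpha_2 J|\delta|n}-1)$. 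Your Duhamel route is a genuinely different and structurally cleaner idea, and your key observation --- that $\pdl H_l\pdg$ annihilates $e^{-is\oH_l}\pll$ because $e^{-is\oH_l}$ preserves $\mathrm{range}(\pdpl)$ --- is correct, as is the identity itself.

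However, there is a genuine gap in the threshold bookkeeping, and it is not a removable technicality. After the annihilation step your integrand is $\plpl e^{-i(\delta-s)H_l}\pdpg H_l e^{-is\oH_l}\pll$, and the only exponential decay you extract is from $\norm{\plpl e^{-i(\delta-s)H_l}\pdpg}$, which is controlled by the gap $\Delta'-\Lambda'$, not $\Lambda'-\Lambda$. You convert one to the other by \emph{assuming} $\Delta'-\Lambda'\ge\Lambda'-\Lambda$, but the lemma asserts the bound for all $\Delta'\ge\Lambda'\ge\Lambda\ge0$; at $\Delta'=\Lambda'$ your prefactor equals $1$ and the Duhamel estimate degenerates to $O(|\delta|\norm{H_l})$ with no dependence on $\Lambda'-\Lambda$ at all, whereas the first inequality is still a nontrivial statement there (and still true, by the Taylor argument, precisely because the decay is harvested from $\pdpg(\cdot)\pll$ rather than from $\plpl(\cdot)\pdpg$). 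The same defect propagates into your second inequality: the driving term $\norm{\pdpg H_l\plpl}$ of your Gronwall closure again decays only in $\Delta'-\Lambda'$, so the claimed factor of $3$ is accounted for only under the extra hypothesis. To repair the argument along your lines you would need to keep the $\pll$ on the right of the operator you estimate --- i.e., bound $\norm{\pdpg H_l e^{-is\oH_l}\pll}$ directly by expanding $e^{-is\oH_l}$ and controlling $\norm{\pdpg H_l\oH_l^{m}\pll}$ --- which effectively reinstates the term-by-term telescoping proof inside the Duhamel integral.
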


Based on these two lemmas, we consider approximating errors for product formulas in the low-energy subspaces. We first give the following generic product formulas of $q>1$ terms:
\begin{align*}
W(\bm{\delta})&=e^{-i\delta_q H_{l_q}}\cdots e^{-i\delta_1 H_{l_1}},\\
\oW(\bm{\delta})&=e^{-i\delta_q \oH_{l_q}}\cdots e^{-i\delta_1 \oH_{l_1}},
\end{align*}
where $\bm{\delta}=\delta_1,\ldots,\delta_q,\delta_j\in\R$ and $1\leq l_j\leq L$. Given $\blambda=(\Lambda_1,\ldots,\Lambda_q),\Lambda_j\geq 0$, we define
\begin{align*}
W^\blambda (\bm{\delta})&=\pl{q}e^{-i\delta_q H_{l_q}}\cdots\pl{1}e^{-i\delta_1 H_{l_1}},\\
\oW^\blambda(\bm{\delta})&=\pl{q}e^{-i\delta_q \oH_{l_q}}\cdots\pl{1}e^{-i\delta_1 \oH_{l_1}}.\\
\end{align*}
Regarding these product formulas, the following lemma is known:
\begin{lemma}[Corollaries 1, 2, 3, and 4 of~\cite{csahinouglu2021hamiltonian}]\label{lem:ProdDisProj}
Let $\tau>0$, $\Delta\geq0$, $\lambda=(2Jdk)^{-1}$, and $\alpha=eJ$. Then, if $\blambda$ satisfies $\Lambda_j-\Lambda_{j-1}\geq\frac{1}{\lambda}(\alpha\abs{\delta_j}M+\log(q/\tau))$, $\Lambda_0=\Delta$, and $\Delta'\geq\Lambda_q$, we have
\begin{align*}
&\norm{(W^\blambda(\bm{\delta})-W(\bm{\delta}))\pdl}\leq\tau,\\
&\norm{(\oW^\blambda(\bm{\delta})-W^\blambda(\bm{\delta}))\pdl}\leq\tau,\\
&\norm{(\oW(\bm{\delta})-\oW^\blambda(\bm{\delta}))\pdl}\leq3\tau.
\end{align*}
Combining the three inequalities here, we obtain the result. For $\tau>0$, $\Delta\geq 0$, $\lambda=(2Jdk)^{-1}$, $\alpha=eJ$, and $\abs{\bm{\delta}}=\sum_{j=1}^q\abs{\delta_j}$. Then, if $\Delta'\geq\Delta+\frac{1}{\lambda}(\alpha\abs{\bm{\delta}}M+q\log(q/\tau))$, we have
\begin{align*}
\norm{(\oW(\bm{\delta})-W(\bm{\delta}))\pdl}\leq5\tau.
\end{align*}
\end{lemma}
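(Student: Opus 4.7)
The plan is to prove each of the three inequalities by an identical telescoping scheme that inserts the projectors $\pl{j}$ one at a time, working outward from the innermost factor $\pdl = \pl{0}$, and then to combine them through the triangle inequality. Throughout, spectral-norm submultiplicativity and the unitarity of the leftmost unprojected exponentials reduce each telescoping hop to a single leakage estimate already supplied by \Cref{lem:LeakExp} or \Cref{lem:LeakDiff}, so the task is just to show that the spacing hypothesis on $\blambda$ makes each such estimate contribute at most $\tau/q$ (or $3\tau/q$ in the $\oH_l$ case).

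For the first inequality, I define hybrids $W^{(0)}, \dots, W^{(q)}$ where $W^{(k)}\pdl$ has the projector $\pl{j}$ inserted immediately to the left of $e^{-i\delta_j H_{l_j}}$ for every $j \le k$ and no projector for $j > k$, so that $W^{(0)}\pdl = W(\bm{\delta})\pdl$ and $W^{(q)}\pdl = W^\blambda(\bm{\delta})\pdl$. The difference $W^{(k-1)}\pdl - W^{(k)}\pdl$ factors as a product of norm-preserving unitaries on the left, times $(I - \pl{k})\, e^{-i\delta_k H_{l_k}}$, times a partially projected suffix that by induction already lies in the range of $\pl{k-1}$ (the base case being $\pdl = \pl{0}$). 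Submultiplicativity gives $\norm{W^{(k-1)}\pdl - W^{(k)}\pdl} \le \norm{\Pi_{>\Lambda_k} e^{-i\delta_k H_{l_k}} \pl{k-1}}$, which \Cref{lem:LeakExp} bounds by $e^{-\lambda(\Lambda_k - \Lambda_{k-1})}(e^{\alpha M \abs{\delta_k}} - 1)$. The assumed spacing $\Lambda_k - \Lambda_{k-1} \ge \frac{1}{\lambda}(\alpha M \abs{\delta_k} + \log(q/\tau))$ reduces each summand to at most $\tau/q$, and summing over the $q$ hops gives $\tau$.

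The second inequality uses the same skeleton but now replaces $H_{l_j}$ by $\oH_{l_j}$ one at a time while keeping every $\pl{j}$ intact; the representative summand is $\norm{\pl{k}(e^{-i\delta_k \oH_{l_k}} - e^{-i\delta_k H_{l_k}})\pl{k-1}}$, which the first estimate of \Cref{lem:LeakDiff} controls by exactly the same exponential, again summing to $\tau$. The third inequality is structurally identical to the first but with $\oH_{l_j}$ in place of $H_{l_j}$; here the second estimate of \Cref{lem:LeakDiff} carries an extra factor of $3$, which propagates to $3\tau$. The combined conclusion $\norm{(\oW(\bm{\delta}) - W(\bm{\delta}))\pdl} \le 5\tau$ is then immediate from the triangle inequality, and the explicit sufficient condition on $\Delta'$ is recovered by choosing the tightest admissible gaps and telescoping to $\Lambda_q - \Lambda_0 = \frac{1}{\lambda}(\alpha M \abs{\bm{\delta}} + q\log(q/\tau))$, so that the hypothesis $\Delta' \ge \Lambda_q$ matches the stated condition on $\Delta'$.

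The one delicate point is the telescoping direction: the hybrids must be built from the innermost factor outward so that at each hop the partially projected suffix really sits in the range of $\pl{k-1}$ before being multiplied by $(I - \pl{k})\, e^{-i\delta_k H_{l_k}}$. If instead one tried to insert projectors outside-in, the suffix would no longer be projected onto a low-energy band and one would have to track cumulative leakage across all earlier exponentials; the inside-out order keeps every single-hop estimate a direct invocation of the leakage lemmas, and the only asymmetry among the three inequalities is the factor of $3$ intrinsic to the $\oH_l$ leakage bound in \Cref{lem:LeakDiff}.
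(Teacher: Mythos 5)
Your proof is correct: the inside-out telescoping with one leakage estimate per hop, the factor-of-$3$ asymmetry from the $\oH_l$ bound in \Cref{lem:LeakDiff}, and the final choice of tight gaps summing to $\Lambda_q-\Delta=\frac{1}{\lambda}(\alpha M\abs{\bm{\delta}}+q\log(q/\tau))$ all match the argument behind the cited corollaries. The paper itself imports this lemma from Ref.~\cite{csahinouglu2021hamiltonian} without reproducing a proof, and your reconstruction is essentially the proof given there.
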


\begin{lemma}[Corollary 3.4 of~\cite{chen2021concentration}]\label{lem:MartConc}
Given a martingale $\{B_k:k=0,\ldots,r\}\in\mathbb{C}_{{2^n}\times {2^n}}$, we assume that $C_k=B_k-B_{k-1}$ satisfies $\norm{C_k}\leq R_C$ and $\norm{\sum_{k=1}^r\E_{k-1}C_kC_K^\dagger}\leq v$. Here $\E_{k-1}$ is the average conditioned on $B_{k-1},\ldots,B_1$. Then, we have
\begin{align*}
\Pr[\norm{B_r-B_0}\geq \epsilon]\leq {2^{n+1}}\exp\left(\frac{3\epsilon^2}{6v+2R_C\epsilon}\right).
\end{align*}
\end{lemma}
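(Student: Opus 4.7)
The plan is to establish this as a matrix Bernstein-Freedman inequality via the standard Laplace transform / Lieb concavity machinery, adapted to the fact that $B_k$ is not necessarily Hermitian and that the control is only on the conditional predictable quadratic variation $W_r := \sum_{k=1}^r \E_{k-1}[C_k C_k^\dagger]$.

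\textbf{Step 1: Hermitian dilation.} Since $B_r-B_0$ is a $d\times d$ matrix that need not be self-adjoint, I would first pass to the self-adjoint dilation
\begin{align*}
\mathcal{D}(M) = \begin{pmatrix} 0 & M \\ M^\dagger & 0 \end{pmatrix},
\end{align*}
which satisfies $\lambda_{\max}(\mathcal{D}(M)) = \|M\|$ and $\mathcal{D}(M)^2 = \diag(MM^\dagger, M^\dagger M)$. Setting $\tilde C_k = \mathcal{D}(C_k)$, we then have $\|\tilde C_k\| = \|C_k\| \leq R$ and the predictable quadratic variation of $\{\sum_{j\le k}\tilde C_j\}$ is $\diag(W_r, W_r')$ with $\|W_r\|\le v$ and $\|W_r'\|=\|\sum \E_{k-1} C_k^\dagger C_k\|$; a parallel bound on the latter can be obtained symmetrically, or one can just work with the tail of $\lambda_{\max}$ of the dilated sum. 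This reduces everything to concentration of $\lambda_{\max}$ of a Hermitian matrix martingale increment sum on a space of dimension $2d$, which is the source of the $2d$ prefactor.

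\textbf{Step 2: Matrix Laplace transform.} For any $\theta>0$, use Markov's inequality in the Laplace form
\begin{align*}
\Pr[\lambda_{\max}(\tilde B_r - \tilde B_0) \geq \epsilon] \leq e^{-\theta\epsilon}\, \E\,\Tr\,\exp\!\Bigl(\theta\sum_{k=1}^r \tilde C_k\Bigr).
\end{align*}
The key non-commutative tool, Lieb's concavity of $X\mapsto \Tr\exp(H+\log X)$, together with conditional Jensen applied one step at a time (Tropp's subadditivity of matrix cumulants), lets me peel off the increments one by one: at each step $k$ I can replace $\E_{k-1}\exp(\theta \tilde C_k)$ inside the trace by a suitable Hermitian upper bound.

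\textbf{Step 3: Per-increment MGF bound.} For a bounded, zero-mean Hermitian increment with $\|\tilde C_k\|\le R$, a Taylor expansion of $e^{\theta x}$ combined with $x^m \preceq R^{m-2}\tilde C_k^2$ for $m\geq 2$ gives the Bennett-type operator inequality
\begin{align*}
\E_{k-1}\,e^{\theta \tilde C_k} \preceq \exp\!\Bigl(f(\theta)\,\E_{k-1}[\tilde C_k^2]\Bigr), \qquad f(\theta) = \frac{e^{\theta R}-1-\theta R}{R^2}.
\end{align*}
Iterating via Step 2 and using $\sum_k \E_{k-1}\tilde C_k^2 \preceq \diag(W_r, W_r')$ with operator norm at most $v$, I obtain
\begin{align*}
\Pr[\|B_r-B_0\|\geq \epsilon] \leq 2d\, \exp\!\bigl(-\theta\epsilon + f(\theta)\,v\bigr).
\end{align*}

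\textbf{Step 4: Optimize $\theta$.} Choosing $\theta = R^{-1}\log(1 + R\epsilon/v)$ and using the elementary inequality $(1+u)\log(1+u)-u \geq \tfrac{3u^2}{6+2u}$ for $u\geq 0$ yields exactly the stated Bernstein form $\Pr[\|B_r-B_0\|\geq \epsilon]\leq 2d\exp(-3\epsilon^2/(6v+2R\epsilon))$ (the sign in the displayed exponent is a typo).

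The main technical obstacle is Step 2: proving that the subadditivity of matrix cumulants survives the adaptive/martingale setting, i.e.\ that Lieb's concavity continues to apply when the per-step ``variance proxy'' $\E_{k-1}\tilde C_k^2$ itself is random and $\mathcal{F}_{k-1}$-measurable. This is where Tropp's freedman-type argument is genuinely needed rather than a direct reduction to the independent-sum matrix Bernstein inequality; everything else is bookkeeping around the dilation and the scalar optimization.
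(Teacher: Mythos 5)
Your outline is correct and is essentially the standard proof of the matrix Freedman inequality (Tropp's Laplace-transform argument via Lieb's concavity, with a Hermitian dilation to handle non-self-adjoint increments); note, however, that the paper itself does not prove this lemma at all --- it imports it verbatim as Corollary~3.4 of the cited concentration reference, whose derivation is exactly the route you describe. Your Steps 2--4 are the right machinery, and your closing computation (optimal $\theta=R^{-1}\log(1+R\epsilon/v)$ plus $(1+u)\log(1+u)-u\ge 3u^2/(6+2u)$) does reproduce the stated Bernstein form; you are also right that the missing minus sign in the displayed exponent is a typo (as are $\sum_{k-1}^r$ and $C_K^\dagger$). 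The one point you flag but do not resolve is a real one: after dilation the predictable quadratic variation is $\diag(W_r,W_r')$ with $W_r'=\sum_k\E_{k-1}C_k^\dagger C_k$, and the hypothesis as transcribed bounds only $\|W_r\|$; for general matrices this does not imply $\|W_r'\|\le v$, so the lemma as literally stated needs either the symmetric assumption added (as in the original corollary, where the variance proxy is the maximum of the two) or the crude fallback $\|W_r'\|\le rR^2$, which changes the constant. In the paper's applications the increments are differences of (products of) unitaries for which both variance terms obey the same bound, so nothing downstream is affected, but you should state the two-sided variance hypothesis explicitly rather than asserting the parallel bound comes for free.
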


\end{document}